\theoremstyle{plain}
\newtheorem{theo}{Theorem}
\crefname{theo}{Theorem}{Theorems}
\newtheorem{lem}[theo]{Lemma}
\crefname{lem}{Lemma}{Lemmata}
\newtheorem{cor}[theo]{Corollary}
\crefname{cor}{Corollary}{Corollarys}
\theoremstyle{definition}
\newtheorem{defn}[theo]{Definition}
\theoremstyle{remark}
\newtheorem{rem}[theo]{Remark}
\newenvironment{canonicallabeling}{\vspace{0.1cm}\noindent(\textit{Canonical
labeling.})}{}
\newenvironment{correctness}{\vspace{0.1cm}\noindent(\textit{Correctness.})}{}
\newenvironment{runningtime}{\vspace{0.1cm}\noindent(\textit{Running time.})}{}
\newcommand{\case}[1]{\item[\itshape\mdseries Case #1:]}
\newenvironment{cs}{\begin{description}}{\end{description}}
\newcommand{\step}[1]{\item[\itshape\mdseries Step #1:]}
\newenvironment{stp}{\begin{description}}{\end{description}}
\renewcommand{\mathbf}[1]{\textit{\bfseries #1}}
\renewcommand{\tilde}{\widetilde}
\renewcommand{\hat}{\widehat}
\renewcommand{\bar}{\overline}
\newcommand{\lcosetgen}{\bar{\langle}}
\newcommand{\rcosetgen}{\bar{\rangle}}
\renewcommand{\phi}{\varphi}
\renewcommand{\epsilon}{\varepsilon}
\newcommand{\NN}{{\mathbb N}}
\newcommand{\CA}{{\mathcal A}}
\newcommand{\CF}{{\mathcal F}}
\newcommand{\Cf}{{\mathfrak f}}
\newcommand{\CI}{{\mathcal I}}
\newcommand{\CO}{{\mathcal O}}
\newcommand{\CS}{{\mathcal S}}
\newcommand{\CX}{{\mathcal X}}
\newcommand{\sym}{\operatorname{Sym}}
\newcommand{\aut}{\operatorname{Aut}}
\newcommand{\iso}{\operatorname{Iso}}
\newcommand{\id}{\operatorname{id}}
\newcommand{\tw}{\operatorname{tw}}
\newcommand{\gammak}{\tilde\Gamma_{k+1}}
\newcommand{\Gammak}{\Gamma_{k+1}}
\newcommand{\can}{\operatorname{CL}}
\newcommand{\lab}{\operatorname{Label}}
\newcommand{\cf}{\operatorname{CF}}
\newcommand{\depth}{\operatorname{depth}}
\newcommand{\cupdot}{\mathbin{\mathaccent\cdot\cup}}
\newcommand{\nin}{\notin}
\DeclareMathOperator{\Pow}{Pow}
\DeclareMathOperator{\cw}{cw}
\DeclareMathOperator{\rg}{rg}
\DeclareMathOperator{\dist}{dist}
\DeclareMathOperator{\fdeg}{fdeg}
\DeclareMathOperator{\last}{last}
\DeclareMathOperator{\parent}{par}
\newcommand{\polylog}[1]{\operatorname{polylog}(#1)}
\newcommand{\poly}[1]{\operatorname{poly}(#1)}
\newcommand{\funcsmall}{c_{\textrm{S}}} 
\newcommand{\funcmedium}{c_{\textrm{M}}} 
\newcommand{\funclarge}{c_{\textrm{L}}} 
\begin{document}

\title{An improved isomorphism test for bounded-tree-width graphs}

\author[$\dagger$]{Martin Grohe}
\author[$\dagger$]{Daniel Neuen}
\author[$\ddagger$]{Pascal Schweitzer}
\author[$\dagger$]{Daniel Wiebking}

\affil[$\dagger$]{RWTH Aachen University\\\texttt{\normalsize{\{grohe,neuen,wiebking\}@informatik.rwth-aachen.de}}}
\affil[$\ddagger$]{University of Kaiserslautern\\\texttt{\normalsize{schweitzer@cs.uni-kl.de}}}
\maketitle

\begin{abstract}
  We give a new fpt algorithm testing isomorphism of $n$-vertex
  graphs of tree width $k$ in time $2^{k\polylog k}\poly n$,
  improving the fpt algorithm due to Lokshtanov, Pilipczuk, Pilipczuk,
  and Saurabh (FOCS 2014), which runs in time
  $2^{\mathcal{O}(k^5\log k)}\poly n$. Based on an improved version
  of the isomorphism-invariant graph decomposition technique introduced by
  Lokshtanov et al., we prove restrictions on the structure of the automorphism
  groups of graphs of tree width $k$. Our
  algorithm then makes heavy use of the group theoretic techniques
  introduced by Luks (JCSS 1982) in his isomorphism test for bounded
  degree graphs and Babai (STOC 2016) in his quasipolynomial
  isomorphism test. In fact, we even use Babai's algorithm as a black
  box in one place.

  We also give a second algorithm which, at the price of a slightly worse
  running time $2^{\mathcal{O}(k^2 \log k)}\poly n$, avoids the use of Babai's
  algorithm and, more importantly, has the additional benefit that it can also used as a
  canonization algorithm.
\end{abstract}

\section{Introduction}

Already early on in the beginning of research on the graph isomorphism problem (which asks for structural equivalence of two given input graphs)
a close connection to the structure and study of the automorphism group of a graph was observed.
For example, Mathon \cite{DBLP:journals/ipl/Mathon79} argued that the isomorphism problem is polynomially equivalent to the task of computing a generating set for the automorphism group and also to computing the size of the automorphism group.

With Luks's polynomial time isomorphism test
for graphs of bounded degree~\cite{luks1982isomorphism}, the striking usefulness
of group theoretic techniques for isomorphism problems became apparent and they have been exploited ever since (e.g.~\cite{babai1983canonical,miller1983isomorphism,Ponomarenko,Neuen16}). 
In his algorithm, Luks shows and uses that the automorphism group of a graph of bounded degree, after a vertex has been fixed, has a very restricted structure.
More precisely, the group is in the class $\Gamma_k$ of all groups whose non-Abelian composition factors are isomorphic to a subgroup of the symmetric group $\sym(k)$.

Most recently, Babai's quasi-polynomial time algorithm for general graph isomorphism \cite{DBLP:conf/stoc/Babai16} adds several novel techniques to tame and manage the groups that may appear as the automorphism group of the input graphs.

A second approach towards isomorphism testing is via decomposition techniques (e.g.\ \cite{DBLP:journals/jal/Bodlaender90,gromar15,DBLP:conf/focs/GroheS15}).
These decompose the graph into smaller pieces while maintaining control of the complexity of the interplay between the pieces.
When taking this route it is imperative to decompose the graph in an isomorphism-invariant fashion so as not to compare two graphs that have been decomposed in structurally different ways.

A prime example of this strategy is 
Bodlaender's isomorphism test~\cite{DBLP:journals/jal/Bodlaender90} for graphs of bounded treewidth. 
Bodlaender's algorithm is a dynamic programming algorithm that takes into
account all~$k$-tuples of vertices that separate the graph, leading to a running
time of~$\CO(n^{k+c})$ to test isomorphism of graphs of tree width at most~$k$.

Only recently,  Lokshtanov, Pilipczuk, Pilipczuk, and
Saurabh~\cite{lokshtanov2017fixed} designed a fixed-parameter tractable
isomorphism test for graphs of bounded tree width which has a running time of $2^{\CO(k^5\log k)}\operatorname{poly(n)}$.
This algorithm first ``improves'' a given input graph~$G$ to a graph~$G^k$ by
adding an edge between every pair of vertices between which more
than~$k$-internally vertex disjoint paths exist.
The improved graph~$G^k$ isomorphism-invariantly decomposes along clique separators  into clique-separator free parts, which we will call \emph{basic} throughout the paper.
The decomposition can in fact be extended to an isomorphism-invariant tree decomposition into basic parts, as was shown in~\cite{DBLP:conf/stacs/ElberfeldS16} to design a logspace isomorphism test for graphs of bounded tree width.
For the basic parts, Lokshtanov et al.~\cite{lokshtanov2017fixed} show that,
after fixing a vertex of sufficiently low degree, is it possible to compute an
isomorphism-invariant tree decomposition whose bags have a size at most
exponential in~$k$ and whose adhesion is at most~$\CO(k^3)$. 
They use this invariant decomposition to compute a canonical form essentially by a brute-force dynamic programming algorithm. 

The problem of computing a canonical form is the task to compute, to a given input graph~$G$, a graph~$G'$ isomorphic to~$G$ such that the output~$G'$ depends only on the isomorphism class of~$G$ and not on~$G$ itself.

The isomorphism problem reduces to the task of computing a
canonical form: for two given input graphs we compute their canonical forms and check whether the canonical forms are equal (rather than isomorphic).

As far as we know, computing a canonical form could be algorithmically more difficult than testing isomorphism.
It is usually not very difficult to turn combinatorial isomorphism tests into canonization algorithms, sometimes the algorithms are canonization algorithms in the first place.
However, canonization based on group theoretic isomorphism tests is more challenging. For example, it is still open whether there is a graph canonization algorithm running in quasipolynomial time.

\subsection*{Our Results} 

Our main result is a new fpt algorithm testing isomorphism of graphs of bounded tree width.

\begin{theo}
  There is a graph isomorphism test running in time
  \[
    2^{k\polylog k}\poly n,
  \]
  where $n$ is the size and $k$ the minimum tree width of the input graphs.
\end{theo}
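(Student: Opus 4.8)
The plan is to follow the decomposition-plus-group-theory strategy pioneered by Lokshtanov et al.\ and refined with Luks-style and Babai-style techniques. First I would reduce canonization of an arbitrary graph $G$ of tree width $k$ to canonization of its \emph{improved} graph $G^k$, obtained by adding an edge between every pair of vertices joined by more than $k$ internally vertex-disjoint paths; this operation is isomorphism-invariant, does not increase the tree width beyond a polynomial (in fact it stays $k$), and an isomorphism of $G^k$ preserving the original edge set is an isomorphism of $G$. Next I would split $G^k$ along its clique separators into the basic (clique-separator free) parts, assemble these into an isomorphism-invariant tree decomposition, and reduce the problem to canonizing a single basic part together with the boundary data it inherits from the rest of the tree. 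The key structural improvement over Lokshtanov et al.\ is then to show that, after fixing one vertex of low degree in a basic part, one can compute an isomorphism-invariant tree decomposition whose adhesion sets have size $\polylog k$ rather than $\CO(k^3)$, and whose bags, while possibly large, have automorphism groups lying in a well-behaved class.

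The heart of the argument is then group-theoretic. For each bag one has a group acting on a set of size roughly $2^{\polylog k}$ (the candidate separators / the local structure), and one must compute a canonical form for this action. Here I would invoke Luks's framework: the relevant groups, after the vertex-fixing step, lie in $\Gamma_d$ for $d = \polylog k$, so string canonization / hypergraph canonization over such groups runs in time $2^{\polylog k}\poly n$ — this is exactly where one either uses Babai's quasipolynomial machinery as a black box (for the first algorithm) or a more careful but slightly slower $\Gamma_d$-only argument (for the second). The recursion over the tree decomposition then propagates canonical labelings upward: at each node one canonizes the bag relative to the already-canonized children, using the bounded adhesion to keep the interface small, and glues the results.

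The main obstacle I anticipate is controlling the automorphism groups of the \emph{bags} of the invariant decomposition of a basic part. Bags may be exponentially large in $k$, so one cannot afford to canonize them by brute force as Lokshtanov et al.\ did; instead one must prove that the automorphism group of a bag, after fixing the low-degree vertex, decomposes in a way that a Luks-style recursion on the orbit structure applies with only a $\polylog k$ loss in the group-theoretic parameter. Establishing that the adhesion can be pushed down to $\polylog k$ while keeping the decomposition isomorphism-invariant, and simultaneously that the bag groups stay in $\Gamma_{\polylog k}$, is the technically delicate core; it requires a genuinely improved version of the Lokshtanov et al.\ decomposition, not merely a tighter analysis of theirs. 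A secondary difficulty is bookkeeping the boundary (colored/ordered interface vertices) through all the reductions so that every intermediate canonization problem remains a bona fide instance of group-respecting canonization, which is what lets the $2^{k\polylog k}\poly n$ bound close.
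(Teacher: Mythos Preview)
Your proposal has a genuine gap: the mechanism you describe does not yield the $2^{k\polylog k}$ bound, and the parameters you state are incorrect.

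First, the parameters. The improved invariant decomposition of a basic part (after fixing a low-degree vertex) does \emph{not} have adhesion $\polylog k$; the paper achieves adhesion $\CO(k^2)$, improving Lokshtanov et al.'s $\CO(k^3)$. Likewise the relevant groups are not in $\Gamma_{\polylog k}$ but in $\Gamma_{k+1}$. With the correct parameters, your proposed route---Luks/Babai--Luks style string or hypergraph canonization over a $\Gamma_{k+1}$-coset acting on a bag of size $2^{\CO(k\log k)}$---runs in time $(\text{bag size})^{\CO(k)} = 2^{\CO(k^2\log k)}$. That is exactly the bound of the paper's \emph{second} theorem (the canonization result), not the first.

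Second, and more importantly, you are missing the actual engine behind the $2^{k\polylog k}$ bound. The paper does not exploit only the $\Gamma_{k+1}$-coset $\alpha(t)$ attached to each bag; for the faster algorithm it additionally constructs, for every bag $\beta(t)$, an isomorphism-invariant rooted graph $\eta(t)$ of depth $\CO(k)$ and forward-degree $k^{\CO(1)}$ that contains the bag vertices and encodes the adhesion sets. The isomorphism problem at a bag is then recast as a \emph{coset-hypergraph-isomorphism} instance equipped with such a graph, which in turn reduces to isomorphism of \emph{expanded $d$-ary trees} (trees with a coloring on pairs of leaves). This last problem is solved in time $n^{\CO(\log^c d)}$ by a separate result that relies on Babai's quasipolynomial algorithm; plugging in $n\le (d+1)^h$ with $h\in\CO(k)$ and $d\in k^{\CO(1)}$ gives $2^{\CO(k\log^c k)} = 2^{k\polylog k}$. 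Without this bounded-depth/bounded-forward-degree auxiliary structure and the expanded-tree black box, your $\Gamma_d$-canonization route stalls at $2^{\CO(k^2\log k)}$. Note also that the fast algorithm is an \emph{isomorphism test}, not a canonization algorithm; the paper explicitly separates the two, and your repeated framing in terms of canonization is the approach for the slower second theorem.
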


In the first part of the paper, we analyze the structure of the automorphism groups of a graph~$G$ of tree width $k$. 
Following \cite{lokshtanov2017fixed} and \cite{DBLP:conf/stacs/ElberfeldS16}, we pursue a two-stage decomposition strategy for graphs of bounded tree width, where in the first step we decompose the improved graph along clique separators into basic parts.
We observe that these basic parts are essential for understanding the automorphism groups. 
We show (Theorem~\ref{theo:labelDecomposition}) that with respect to a fixed
vertex~$v$ of degree at most~$k$, we can construct for each basic graph~$H$ an
isomorphism-invariant tree decomposition of width at most $2^{\CO(k\log k)}$ and
adhesion at most $\CO(k^2)$ where, in addition, each bag is equipped with a
graph of small degree which is defined in an isomorphism-invariant way and gives us insight about the structure of the bag.
In particular, using Luks results \cite{luks1982isomorphism}, this also restricts the structure of the automorphism group.

Our construction is based on a similar construction of an isomorphism-invariant
tree decomposition in \cite{lokshtanov2017fixed}. Compared to that construction,
we improve the adhesion (that is, the maximum size of intersections between
adjacent bags of the decomposition) from $\CO(k^3)$ to $\CO(k^2)$. More
importantly, we expand the decomposition by assigning a group and a graph to each bag.

Using these groups, we can prove that $\aut(H)_v$ (the group of all automorphisms of $H$ that keep the vertex $v$ fixed) is a~$\Gammak$ group. 
This significantly restricts possible automorphism groups.
Moreover, using the graph structure assigned to each bag, we can also compute the automorphism group of a graph of tree width $k$ within the desired time bounds.
The first, already nontrivial step towards computing the automorphism group, is a reduction from arbitrary graphs of tree width $k$ to basic graphs.
The second step reduces the problem of computing the automorphism group of a basic graph to the
problem of computing the automorphism group of a structure that we call an \emph{expanded $d$-ary tree}.
In the reduction, the parameter $d$ will be polynomially bounded in $k$.
Then as the third step, we can apply a recent result \cite{bounded-degree-small-diameter} due to the first three authors
that allows us to compute the automorphism groups of such expanded $d$-ary trees.
This result is heavily based on techniques introduced by Babai~\cite{DBLP:conf/stoc/Babai16} in his quasipolynomial isomorphism test.
In fact, it even uses Babai's algorithm as a black box in one place.

We prove a second result that avoids the results of \cite{bounded-degree-small-diameter, DBLP:conf/stoc/Babai16} and even allows us to compute canonical forms, albeit at the price of an increased running time. 

\begin{theo}
  There is a graph canonization algorithm running in time
  \[
    2^{\mathcal{O}(k^2\log k)}\poly n,
  \]
  where $n$ is the size and $k$ the tree width of the input graph.
\end{theo}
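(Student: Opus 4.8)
The plan is to take the decomposition-based algorithm behind Theorem~1 and replace its group-theoretic engine --- where Theorem~1 calls the quasipolynomial machinery of \cite{bounded-degree-small-diameter,DBLP:conf/stoc/Babai16} --- by the classical, canonization-friendly string- and hypergraph-canonization technique of Luks, at the price of a worse running time. To canonize an arbitrary graph $G$ of tree width $k$, I would first pass to the improved graph $G^k$, which depends only on the isomorphism type of $G$ and has tree width at most $k$, and then canonize the pair $(G,G^k)$ along the isomorphism-invariant decomposition of $G^k$ along clique separators into basic parts (the decomposition already used in \cite{lokshtanov2017fixed,DBLP:conf/stacs/ElberfeldS16}). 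Since every clique of $G^k$ has at most $k+1$ vertices, each adhesion set of the clique-separator tree admits at most $(k+1)!=2^{\CO(k\log k)}$ orderings; processing that tree from the leaves towards the root and, for each ordering of the adhesion clique, combining the canonical form of the basic part below a node with the node's own bag costs only $2^{\CO(k\log k)}\poly n$, which will be dominated. So it suffices to canonize a basic graph $H$, more precisely to compute a canonical labeling of $H$ compatible with a prescribed ordering of a distinguished boundary clique of size at most $k+1$.

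For a basic graph $H$, note that $H$, having tree width at most $k$, is $k$-degenerate and hence has a vertex of degree at most $k$; since the set of such vertices is isomorphism-invariant, I would run the procedure below once for each of them and return the lexicographically smallest output, paying an overall factor $n$. Fixing such a vertex $v$, apply Theorem~\ref{theo:labelDecomposition} to obtain the isomorphism-invariant tree decomposition $\CT$ of $H$ of width $2^{\CO(k\log k)}$ and adhesion $\CO(k^2)$ in which every bag $B$ carries an isomorphism-invariantly defined graph $A_B$ of degree $\CO(k)$. The isomorphism-invariance of $\CT$ and of $(A_B)_B$ under isomorphisms fixing $v$ is precisely what makes the following recursion well defined. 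Processing $\CT$ from the leaves to the root, I would maintain at every bag $B$, for each of the at most $|S_B|!\le 2^{\CO(k^2\log k)}$ orderings of the adhesion $S_B$ of $B$ to its parent, a canonical labeling of the part $H_B$ of $H$ below $B$ that extends that ordering. To compute this at $B$ from the data already available at its children $C_1,\dots,C_r$, I form the \emph{enriched torso} of $B$: the subgraph of $G^k$ (and, as a separate relation, of $G$) induced on the vertices of $B$, with an added clique on each adhesion $B\cap C_i$, together with $A_B$ and with vertex and edge colours recording, for each $i$ and each ordering of $B\cap C_i$, the canonical form of $H_{C_i}$ computed recursively for that ordering. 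Canonizing the enriched torso, while individualizing the vertices of $S_B$ in the prescribed order, and then gluing in the recursively canonized pieces according to the orders that this canonical labeling induces on the $B\cap C_i$, yields a canonical labeling of $H_B$.

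The running time is governed by the canonization of the enriched torsos. Such a torso has $m=2^{\CO(k\log k)}$ vertices, so a brute-force $m!$ search is hopeless; but because $A_B$ is isomorphism-invariant and of degree $\CO(k)$, Luks's theorem forces the automorphism group of the enriched torso into $\Gammak$, and I would invoke the Luks (resp.\ Babai--Luks) canonization framework for $\Gammak$-groups, which canonizes a relational structure on a domain of size $m$ in time $m^{\CO(k)}$. With $m=2^{\CO(k\log k)}$ this is $2^{\CO(k^2\log k)}$, of the same order as the $2^{\CO(k^2\log k)}$ branching over the orderings of the $\CO(k^2)$-element adhesion sets. Summing over the polynomially many nodes of the clique-separator tree and of the decompositions $\CT$, and multiplying by the factor $n$ for the choice of $v$, gives the claimed bound $2^{\CO(k^2\log k)}\poly n$. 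Crucially this pipeline uses only Luks-style techniques, never Babai's quasipolynomial isomorphism test (which is not known to yield canonization), so it produces genuine canonical forms --- the advertised additional benefit over Theorem~1.

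I expect the main obstacle to be the correctness and the precise bookkeeping of the bottom-up step rather than any single ingredient. One must show that the enriched torso, with colours ranging over all orderings of all child adhesions, records exactly the isomorphism-relevant information about $H_B$, so that canonizing it consistently with a fixed ordering of $S_B$ is equivalent to canonizing $H_B$; in particular the colours must force the canonization to respect which orderings of which child adhesions are interchanged by automorphisms of $H_B$. For this to compose cleanly along $\CT$ it is cleanest to carry out everything in terms of canonical labeling cosets, intersecting them with the $\Gammak$-membership constraint supplied by $A_B$ and propagating them upward, and one must check that this stays compatible with the isomorphism-invariance guarantees of Theorem~\ref{theo:labelDecomposition}. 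By comparison, the reduction from $G$ to basic graphs and the verification of the degree, width, and adhesion bounds are routine once Theorem~\ref{theo:labelDecomposition} is available.
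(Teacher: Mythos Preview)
Your overall architecture matches the paper's, but there is a genuine gap in the step where you extract a $\Gamma_{k+1}$ group to feed into Babai--Luks canonization. You assert that the bag graph $A_B$ (the paper's $\eta(t)$) has ``degree $\CO(k)$'' and that Luks's theorem therefore forces the automorphism group of the enriched torso into $\Gamma_{k+1}$. Neither claim holds. Theorem~\ref{theo:labelDecomposition}\ref{i:8} only guarantees that $\eta(t)$ is a rooted graph of depth $\CO(k)$ and \emph{forward}-degree $k^{\CO(1)}$; it need not have bounded degree at all, and even the forward-degree bound is polynomial in $k$, not linear. If you unfold $\eta(t)$ into a $d$-ary tree as in Lemma~\ref{lem:big_iso}, you obtain a group in $\Gamma_{d}$ with $d\in k^{\CO(1)}$, and Babai--Luks canonization then costs $m^{\CO(d)}$ with $m=|\beta(t)|\le\funclarge\in 2^{\CO(k\log k)}$, i.e.\ $2^{k^{\CO(1)}}$ rather than $2^{\CO(k^2\log k)}$. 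The graph $\eta(t)$ is precisely the object designed for the isomorphism algorithm via Theorem~\ref{theo:neuen}; it does \emph{not} by itself supply a $\Gamma_{k+1}$ constraint.

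The paper obtains the $\Gamma_{k+1}$ constraint from a different part of Theorem~\ref{theo:labelDecomposition}: Property~\ref{i:6} produces, for each bag, an isomorphism-invariant labeling coset $\alpha(t)\in\tilde\Gamma_{k+1}$, built explicitly during the recursive construction of the decomposition (see the extension of $\Lambda$ to $\hat\Lambda$ in Lemma~\ref{lem:B}). Computing $\alpha$ already costs $2^{\CO(k^2\log k)}$, which is where the exponent $k^2$ first enters. The canonization algorithm then works inside $\alpha(t)$ rather than inside $\aut(\eta(t))$: instead of brute-forcing over all $(\CO(k^2))!$ orderings of each adhesion and colouring the torso with the resulting tables, it propagates a single canonical labeling \emph{coset} $\tau_i\Theta_i$ per child, restricts it to the adhesion, and canonizes a ``hypergraph with coset-labeled hyperedges'' relative to $\alpha(t)$ (Lemma~\ref{theo:big}, with Lemma~\ref{theo:small} handling the equal-adhesion case). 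This gives the stated $(|\beta(t)|\cdot|\CS|)^{\CO(k)}\le 2^{\CO(k^2\log k)}$ bound per bag. Your sketch becomes correct once you replace ``$\Gamma_{k+1}$ supplied by $A_B$'' with ``$\Gamma_{k+1}$ supplied by $\alpha(t)$'' and pass cosets rather than enumerating orderings; but as written, the running-time analysis does not go through.
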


Even though it does not employ Babai's new techniques, this algorithm still heavily depends on the group theoretic machinery.
As argued above, the design of group theoretic canonization algorithms often requires extra work, and can be slightly technical, compared to the design of an isomorphism algorithm.
Here, we need to combine the group theoretic canonization techniques going back to Babai and Luks~\cite{babai1983canonical} with graph decomposition techniques, which poses additional technical challenges and requires new canonization subroutines.

\subsection*{Organization of the paper}

In the next section we introduce the necessary preliminaries.
The next two Sections~\ref{sec:clique:seps:and:improved:graphs}--\ref{sec:basic} of the paper
deal with the decomposition of bounded tree width graphs.
They describe the
isomorphism-invariant decomposition into basic parts
and the isomorphism-invariant decomposition of the basic parts with respect to a
fixed vertex of low degree.

The two following Sections~\ref{sec:iso:sub}--\ref{sec:iso:alg}
are concerned with isomorphism.
We define a particular subproblem, coset-hypergraph-isomorphism,
identified to be of importance.
Then, we assemble the recursive isomorphism algorithm.

In the last Section~\ref{sec:canoniztion:tools} we
devise several subroutines for
canonization in general and assemble the recursive canonization algorithm.


\section{Preliminaries}

\paragraph{Graphs}

We use standard graph notation.
All graphs $G=(V,E)$ considered are undirected finite simple graphs.
We denote an edge $\{u,v\}\in E$ by $uv$.
Let $U,W\subseteq V$ be subsets of vertices.
We write $E(U,W)$ for the edges
with one vertex in $U$ and the other vertex from $W$,
whereas $E(U)$ are the edges with both vertices in $U$.
By $N(U)$, we denote the neighborhood of $U$, i.e,
all vertices outside $U$ that are adjacent to $U$.
For the induced subgraph on $U$, we write $G[U]$,
whereas $G-U$ is the induced subgraph on $V\setminus U$.
A \emph{rooted graph} is triple $G=(V,E,r)$ where $r \in V$ is the \emph{root} of the graph.
For two vertices $v,w \in V$ we denote by $\dist_G(v,w)$ the distance between $v$ and $w$, i.e.\ the length of the shortest path from $v$ to $w$.
The \emph{depth} of a rooted graph is the maximum distance from a vertex to the root, that is, $\depth(G) = \max_{v\in V} \dist_G(r,v)$.
The \emph{forward-degree} of a vertex $v \in V$ is $\fdeg(v) = |\{w \in N(v) \mid \dist(w,r) = \dist(v,r)+1\}|$.
Note that $|V| \leq (d+1)^{\depth(G)}$ where $d = \max_{v \in V} \fdeg(v)$ is the maximal forward-degree.

\paragraph{Separators}

A pair $(A,B)$ where $A \cup B = V (G)$ is called a \emph{separation} if
$E(A\setminus B,B\setminus A) = \emptyset$. In this case we call
$A\cap B$ a \emph{separator}.
A separation $(A,B)$ is
a \emph{$(L,R)$-separation} if $L\subseteq A$ and $R\subseteq B$
and in this case $A\cap B$ is called a \emph{$(L,R)$-separator}.
A separation $(A,B)$ is a called a \emph{clique separation}
if $A\cap B$ is a clique and $A\setminus B\neq\emptyset$ and $B\setminus
A\neq\emptyset$.
In this case we call
$A\cap B$ a \emph{clique separator}.

\paragraph{Tree Decompositions}
\begin{defn}
A \emph{tree decomposition} of a graph $G$ is a pair $(T,\beta)$, where
$T$ is a \underline{rooted} tree and $\beta: V(T)\to \Pow(V (G))$
is a mapping into the power set of~$V(G)$ such that:
\begin{enumerate}
\item for each vertex $v\in V (G)$, the set
$\{t \in V (G)~|~ v \in \beta(t)\}$ induces a
nonempty and connected subtree of T, and
\item for each edge $e\in E(G)$, there exists $t\in V(T)$
such that $e\subseteq\beta(t)$.
\end{enumerate}
\end{defn}
Sets $\beta(t)$ for $t\in V (T)$ are called the \emph{bags}
of the
decomposition, while sets $\beta(s)\cap\beta(t)$
for $st \in E(T)$ are called
the \emph{adhesions sets}.
The \emph{width}
of a tree decomposition $T$ is equal to its maximum
bag size decremented by one, i.e.
$\max_{t\in V (T)} |\beta(t)| - 1$.
The \emph{adhesion width} of $T$ is equal to its maximum
adhesion size, i.e., $\max_{st\in E(T)} |\beta(s) \cap\beta(t)|$.
The \emph{tree width} of a graph,
denoted by $\tw(G)$, is equal to the minimum width of
its tree decompositions.

A graph $G$ is \emph{$k$-degenerate} if every subgraph of
$G$ has a vertex with degree at most $k$.
It is well known that every graph of tree width $k$
is $k$-degenerate.

\paragraph{Groups}
For a function $\phi \colon V \rightarrow V'$ and $v\in V$ we write $v^{\phi}$ for
the image of $v$ under $\phi$, that is, $v^{\phi} = \phi(v)$.
We write composition of functions from \underline{left to right}, e.g,
$v^{(\sigma\rho)}=(v^\sigma)^\rho=\rho(\sigma(v))$.
By $[t]$ we denote the set of natural number from 1
to $t$.
By $\sym(V)$ we denote the symmetric group on a
set $V$ and we also write $\sym(t)$ for $\sym([t])$.
We use upper Greek letters $\Delta,\Phi,\Gamma,\Theta$ and $\Psi$ for
permutation groups.

\paragraph{Labeling cosets}
A \emph{labeling coset} of a set $V$ 
is a set of bijective mappings
$\tau\Delta$
where $\tau$ is a
bijection from~$V$ to~$[|V|]$ and
and $\Delta$ is a subgroup of $\sym(|V|)$.
By $\lab(V)$, we denoted
the labeling coset $\tau\sym(|V|)$.
We say that $\tau\Delta$ is a \emph{labeling subcoset}
of a labeling coset $\rho\Theta$, written $\tau\Delta\leq\rho\Theta$,
if $\tau\Delta$ is a subset of $\rho\Theta$
and $\tau\Delta$ forms a labeling coset again.
Sometimes we will choose a single symbol to denote a labeling coset $\tau\Delta$. For this will usually use the upper greek letter $\Lambda$.
Recall that $\Gamma_k$ denotes the class of all
finite groups whose non-Abelian composition
factors are isomorphic to subgroups of $\sym(k)$.
Let $\tilde\Gamma_k$ be the class of all labeling cosets
$\Lambda=\tau\Delta$ such that $\Delta\in\Gamma_k$.

\paragraph{Orderings on sets of natural numbers}
We extend the natural ordering of the natural numbers to finite sets of natural
numbers. For two such sets~$M_1,M_2$ we define~$M_1\prec M_2$ if~$|M_1|<|M_2|$
or if $|M_1| = |M_2|$ and the smallest element of~$M_1\setminus M_2$ is smaller than the smallest element of~$M_2\setminus M_1$.

\paragraph{Isomorphisms}
In this paper we will always define what the isomorphism
between our considered objects are.
But this can also be done in a more general context.
Let $\phi:V\to V'$.
For a vector $(v_1,\ldots,v_k)$ we define $(v_1,\ldots,v_k)^\phi$
as $(v_1^\phi,\ldots,v_k^\phi)$ inductively.
Analogously, for a set we define $\{v_1,\ldots,v_n\}^\phi$ as
$\{v_1^\phi,\ldots,v_n^\phi\}$.
For a labeling coset $\Lambda\leq\lab(V)$
we write $\Lambda^\phi$ for $\phi^{-1}\Lambda$.
In the paper we will introduce isomorphisms $\iso(X,X')$
for various objects $X$ and $X'$.
Unless otherwise stated these are all $\phi:V\to V'$
such that $X^\phi=X'$ where we apply $\phi$ as previously defined.
For example, the isomorphism between two graphs $G$ and $G'$
are all $\phi:V\to V'$ such that
$G^\phi=G'$ which means that $G$ has an edge $uv$,
if and only if $G'$ has the edge $u^\phi v^\phi$.


\section{Clique separators and improved graphs}\label{sec:clique:seps:and:improved:graphs}

To perform isomorphism tests of graphs of bounded tree width, a crucial step in~\cite{lokshtanov2017fixed} is to deal with clique separators.
For this step the concept of a~$k$-improved graph is the key.

\begin{defn}[\cite{lokshtanov2017fixed}]
The \emph{$k$-improvement} of a graph $G$ is the graph $G^k$ obtained from $G$
by connecting every pair of non-adjacent
vertices $v,w$ for which there are more than $k$ pairwise internally vertex disjoint
paths connecting~$v$ and~$w$.
We say that a graph $G$ \emph{is $k$-improved}
when $G^k=G$.

A graph is \emph{$k$-basic} if it is $k$-improved and
does not have any separating cliques.
In particular, a $k$-basic graph is connected.
\end{defn}

We summarize several structural properties of~$G^k$.

\begin{lem}[\cite{lokshtanov2017fixed}]\label{lem:imp}
Let $G$ be a graph and $k\in\NN$.
\begin{enumerate}
\item The $k$-improvement $G^k$ is $k$-improved,
i.e., $(G^k)^k=G^k$.
\item Every tree
decomposition $(T,\beta)$ of $G$ of width at most $k$
is also a tree
decomposition of $G^k$.
\item There exists an algorithm that,
given $G$ and $k$,
runs in $\CO(k^2n^3)$ time and
either correctly concludes that $\tw(G) > k$, or
computes~$G^k$.
\end{enumerate}
\end{lem}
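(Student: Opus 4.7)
The plan is to handle the three parts of the lemma in order, relying on Menger's theorem for (1)--(2) and on augmenting-path flow computations for (3).

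For part (1), I would take a non-adjacent pair $v,w$ in $G^k$, which by definition has at most $k$ internally vertex disjoint paths in $G$. By Menger's theorem there is a $(v,w)$-separator $S$ in $G$ with $v,w\notin S$ and $|S|\leq k$. The key claim is that $S$ still separates $v$ from $w$ in $G^k$, after which Menger again gives at most $k$ disjoint $v,w$-paths in $G^k$ and hence $vw\notin E((G^k)^k)$. Indeed, every new edge $xy\in E(G^k)\setminus E(G)$ with $x,y\notin S$ satisfies that the pair $x,y$ has more than $k$ internally vertex disjoint paths in $G$, so by Menger $S$ cannot separate $x$ from $y$ in $G$; hence $x$ and $y$ lie in the same component of $G-S$, and adding $xy$ does not violate the separation property of $S$. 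New edges incident to $S$ trivially do not violate it either.

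For part (2), let $(T,\beta)$ be a tree decomposition of $G$ of width at most $k$; I need to show every edge $vw\in E(G^k)$ is contained in some bag. Edges of $G$ are covered by assumption, so consider a new edge $vw\in E(G^k)\setminus E(G)$ and suppose for contradiction that no bag contains both $v$ and $w$. The subtrees $T_v=\{t\mid v\in\beta(t)\}$ and $T_w$ are then disjoint connected subtrees of $T$, and the edge $st\in E(T)$ on the path between them can be chosen with $v\in\beta(s)$ and $w$ contained only in bags on $t$'s side. Since $v\notin\beta(t)$, the adhesion $S=\beta(s)\cap\beta(t)$ satisfies $|S|\leq|\beta(s)|-1\leq k$, and by the standard cops-and-robber property of tree decompositions, $S$ is a $(v,w)$-separator in $G$. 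Menger then forces $v,w$ to admit at most $k$ internally vertex disjoint paths in $G$, contradicting that $vw$ is a newly added edge of $G^k$.

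For part (3), the algorithm iterates over every non-adjacent pair $\{v,w\}$ and, via the standard vertex-capacity reduction together with Ford--Fulkerson, attempts to find $k+1$ augmenting paths in the associated flow network. A single augmenting-path search runs in $\CO(n+m)$, so processing one pair costs $\CO(k(n+m))$. Any graph of tree width at most $k$ is $k$-degenerate and in particular satisfies $m\leq kn$; if the input already has $m>kn$ edges, the algorithm safely concludes $\tw(G)>k$. Otherwise the total cost over all $\CO(n^2)$ pairs is $\CO(k^2n^3)$. I expect the main subtlety to be the separator-preservation argument in part (1): the natural worry is that newly added edges could create fresh $(v,w)$-connections crossing $S$, and the resolution is precisely that any new edge only connects vertices that were already on the same side of any small separator of $G$.
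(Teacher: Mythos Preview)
The paper does not supply a proof of this lemma; it is stated with the citation \cite{lokshtanov2017fixed} and used as a black box. So there is no in-paper argument to compare against.

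Your proposal is correct and is essentially the standard proof. The crux of (1) is exactly what you identify: any size-$\le k$ separator of $G$ survives in $G^k$, because a new edge $xy$ witnesses $>k$ internally disjoint $x$--$y$ paths in $G$, so no set of size $\le k$ can separate $x$ from $y$. Your argument for (2) is the usual one; the choice of $st$ as the first tree edge leaving the subtree $T_v$ guarantees $v\in\beta(s)\setminus\beta(t)$ and hence $|S|\le k$. For (3), one pass over all pairs suffices because the definition of $G^k$ refers to disjoint paths in $G$ (not in the partially improved graph), and the $m\le kn$ cutoff via $k$-degeneracy gives the stated bound.
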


Since the construction of~$G^k$ from~$G$ is isomorphism-invariant, the concept of the improved graph can be exploited for isomorphism testing and canonization.
A~$k$-basic graph has severe limitations
concerning its structure as we explore in the following sections. In the
canonization algorithm from~\cite{lokshtanov2017fixed} a result of
Leimer~\cite{DBLP:journals/dm/Leimer93}
is exploited that says that every graph has a tree decomposition into clique-separator free parts, and the family of bags is isomorphism-invariant.
While it is usually sufficient to work with an isomorphism-invariant set of bags (see \cite{DBLP:conf/swat/OtachiS14}) we 
actually require an isomorphism invariant decomposition, which can indeed be obtained.

\begin{theo}[\cite{DBLP:journals/dm/Leimer93},\cite{DBLP:conf/stacs/ElberfeldS16}]
\label{theo:cliqueDecomposition}
There is an algorithm that, given a connected graph~$G$,
computes a
tree decomposition $(T,\beta)$ of $G$, 
called clique separator decomposition,
with the following properties.

\begin{enumerate}
\item For every $t\in V(T)$ the graph $G[\beta(t)]$
is clique-separator free (and in particular connected).
\item Each adhesion set of $(T,\beta)$ is a clique.
\item $|V(T)|\in \CO(|V(G)|)$.
\item For each bag~$\beta(t)$ the adhesion sets of the children are all equal
to~$\beta(t)$ or the adhesion sets of the children are all distinct.
\end{enumerate}
The algorithm runs in polynomial time and the output of the algorithm is isomorphism-invariant (w.r.t.~$G$).
\end{theo}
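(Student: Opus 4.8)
The plan is to derive the decomposition from Leimer's classical theory of \emph{atoms} and then to remove, in a choice-free way, the two remaining sources of non-canonicity: the shape of the tree and the choice of a root. Recall how atoms arise: starting from the connected graph~$G$ one repeatedly picks a clique separator~$S$, replaces the current graph by the graphs~$G[C\cup S]$ over the connected components~$C$ of~$G-S$, and stops on graphs with no clique separator. By Leimer's theorem the resulting family~$\CA(G)$ of clique-separator-free induced subgraphs --- the atoms --- is independent of all choices, is computable in polynomial time, covers~$V(G)$ and every edge of~$G$, consists of connected clique-separator-free subgraphs, satisfies~$|\CA(G)|\in\CO(|V(G)|)$, and can be arranged into a (generally non-canonical) tree decomposition with bags exactly the atoms and with clique adhesion sets. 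I would take all of this as given; it is precisely the content of the results cited in the statement, and the only freedom it leaves open is \emph{how} the atoms are linked into a tree.

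Next I would build the canonical tree. Let~$\mathcal S$ be the set of minimal clique separators of~$G$ contained in at least two atoms, and form the bipartite graph~$T$ on node set~$\CA(G)\cupdot\mathcal S$, joining~$S\in\mathcal S$ to~$A\in\CA(G)$ whenever~$S\subseteq A$; put~$\beta(A)=A$ on atom nodes and~$\beta(S)=S$ on separator nodes. Then every adhesion set equals some~$S\in\mathcal S$ and hence is a clique; every bag induces a clique-separator-free graph (atoms by Leimer, cliques trivially); and property~4 holds for \emph{any} root, because at a separator node all incident adhesion sets coincide with its own bag, while at an atom node the incident adhesion sets are the pairwise distinct separators it contains. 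Granting that~$T$ is a tree (see below), every separator node has degree at least two, so~$|\mathcal S|\le|\CA(G)|-1$ and~$|V(T)|\le 2|\CA(G)|-1\in\CO(|V(G)|)$, giving property~3. One must also check that~$(T,\beta)$ is a tree decomposition in the sense of the definition, i.e.\ that for every vertex~$v$ the set of nodes whose bag contains~$v$ is a connected subtree; this, again, is part of the structural picture coming from Leimer.

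The step I expect to be the genuine obstacle is exactly the claim that~$T$ is a tree. Connectedness is immediate from connectedness of~$G$. Acyclicity is the substantive point: a cycle in~$T$ would be an alternating sequence~$A_1,S_1,A_2,S_2,\dots$ of distinct atoms and distinct essential separators with~$S_i\subseteq A_i\cap A_{i+1}$ (cyclically), and excluding this --- informally, two atoms share at most one essential minimal clique separator, and the essential separators organise the atoms acyclically --- is the structural heart of Leimer's decomposition. It is also exactly what makes the construction produce a \emph{canonical object} and not merely a canonical family of bags: replacing the atoms sharing a given essential separator by a star through that separator is forced precisely because these are the only cycles. I would import this from Leimer (re-proving the portion that is needed, if necessary --- that is where essentially all the combinatorial work lies).

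Finally, rooting and isomorphism-invariance. The unrooted bag-labelled tree is already isomorphism-invariant: an isomorphism~$G\to G'$ maps clique separators to clique separators and, as~$\CA(\cdot)$ and~$\mathcal S$ are choice-free, maps atoms to atoms and separators to separators while preserving containment and bags. To choose the root invariantly I would take the centre of~$T$, which is a single node or a single edge: a central node is used as the root; a central edge, being an edge of the bipartite graph~$T$, joins an atom node to a separator node, whose two endpoints are therefore distinguishable, so I root at the separator endpoint. Re-rooting changes neither properties~1--3 nor property~4, and computing~$\CA(G)$, forming~$T$, and rooting all take polynomial time; so the plan is complete modulo the tree property highlighted above, i.e.\ modulo invoking the relevant portion of Leimer's structural analysis in exactly the form the canonical construction needs.
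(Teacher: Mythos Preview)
The paper does not prove this theorem at all; it is stated as a citation of Leimer and of Elberfeld--Schweitzer and used as a black box. So there is no ``paper's own proof'' to compare against. Your proposal is a faithful reconstruction of the standard argument behind those references: take Leimer's canonical family of atoms, insert the essential minimal clique separators as auxiliary nodes to obtain a bipartite incidence tree, and root at the (bipartitely distinguishable) centre. This is essentially how the canonical version is obtained in \cite{DBLP:conf/stacs/ElberfeldS16}, and the properties you list (clique adhesions, clique-separator-free bags, linear size, and the dichotomy in property~4 coming from the atom/separator bipartition) fall out exactly as you say.

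The one point you flag as the genuine obstacle --- acyclicity of the bipartite incidence graph and, relatedly, the connectivity of $\{t:v\in\beta(t)\}$ for each vertex~$v$ --- is indeed where the real work sits, and you correctly defer it to Leimer's structural analysis rather than redoing it. That is appropriate here: the theorem is explicitly attributed, and your sketch identifies precisely which part of the cited work carries the weight. No gap beyond what you have already named.
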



\section{Decomposing basic graphs}\label{sec:basic}

In this section, we shall construct bounded-width tree decompositions
of $k$-basic graphs of tree width at most $k$. Crucially, these
decompositions will be isomorphism invariant after fixing one vertex
of the graph. Our construction refines a similar construction of
\cite{lokshtanov2017fixed}.

Let us define three parameters~$\funcsmall$,~$\funcmedium$, and~$\funclarge$ (small, medium and large) that depend on~$k$ as follows:
\begin{align*}
\funcsmall\coloneqq 6(k+1),&&
\funcmedium\coloneqq\funcsmall+\funcsmall(k+1),&&
\funclarge\coloneqq\funcmedium+2(k+1)\binom{\funcmedium}{k+2}^2.
\end{align*}

Note that $\funcsmall \in \CO(k)$, $\funcmedium \in \CO(k^{2})$ and $\funclarge \in 2^{\CO(k\log k)}$.
The interpretation of these parameters is that~$\funcmedium$ will bound the
size of the adhesion sets and~$\funclarge$ will bound the bag size. The parameter~$\funcsmall$ is used by the algorithm which in certain situations behaves differently depending on sets being larger than~$\funcsmall$ or not.

The bound $\funcmedium\in\CO(k^2)$  improves the corresponding bound
$\funcmedium\in\CO(k^3)$  in~\cite{lokshtanov2017fixed}. However, the
more significant extension of the construction
in~\cite{lokshtanov2017fixed} is that in addition to the tree
decomposition we also construct both an isomorphism-invariant
graph of bounded forward-degree and depth
and an isomorphism-invariant~$\Gammak$-group associated with each bag.

The \emph{weight} of a set $S\subseteq V(G)$
with respect to a (weight) function
$w:V(G)\to\NN$ is $\sum_{v\in S}w(v)$.
The \emph{weight} of a separation $(A,B)$ is the weight of
its separator $A\cap B$.
For sets~$L,R\subseteq V(G)$, among all $(L,R)$-separations~$(A,B)$ of
minimal weight there exists a unique separation with an inclusion
minimal $A$. For this separation we call~$A\cap B$
the \emph{leftmost minimal separator}
and denote it by $S_{L,R}(w)$.
Moreover, we define $S_{L,R} = S_{L,R}(\textbf{1})$ where $\textbf{1}$ denotes the function that maps every vertex to $1$.

For $U\subseteq V(G)$ we define a
weight function $w_{U,k}$ such that
$w_{U,k}(u)=k$ for all $u\in U$ and
$w_{U,k}(v)=1$ for all $v\in V\setminus U$.
Given a weight function~$w$, using Menger's theorem and the
Ford-Fulkerson algorithm it is possible to compute $S_{L,R}(w)$.
The following lemma generalizes Lemma 3.2 of
\cite{lokshtanov2017fixed}. Through this generalization we obtain the
adhesion bound $\CO(k^2)$ for our decomposition.

\begin{lem}\label{lem:union:of:weighted:minimal:separations}
Let $G$ be a graph, let $S \subseteq V (G)$ be a subset of vertices, and let
$\{T_i\subseteq V(G)\}_{i\in[t]}$ and
$\{w_i:V(G)\to\NN\}_{i\in[t]}$ 
be families where each $T_i$ is a minimum weight
$(L_i,R_i)$-separator with respect to $w_i$ for some $L_i,R_i\subseteq S$.
Let $w:V(G)\to\NN$ be another weight function such that
for all $i\in[t]$:
\begin{enumerate}
\item\label{lem:sub1} $w(v)=w_i(v)$ for all $v\in V(G)\setminus S$, and
\item\label{lem:sub2} $w(v)\geq w_i(v)$ for all $v\in V(G)$.
\end{enumerate}
Let $D \coloneqq S \cup\bigcup_{i\in[t]} T_i$.
Suppose that $Z$ is the vertex set of any connected component of $G-D$.
Then $w(N(Z))\leq w(S)$.
\end{lem}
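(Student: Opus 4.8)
The plan is to argue by contradiction. Suppose $Z$ is a connected component of $G-D$ with $w(N(Z)) > w(S)$. Since $N(Z) \subseteq D = S \cup \bigcup_i T_i$, the "excess" weight of $N(Z)$ over $w(S)$ must come from vertices lying in some of the separators $T_i$ but outside $S$. The idea is to find one particular index $i$ for which $N(Z)$ gives us a cheaper $(L_i,R_i)$-separator than $T_i$ with respect to $w_i$, contradicting the minimality of $T_i$.

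**Choosing the right index $i$ and building the competing separator.**

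Fix any $i \in [t]$. I would consider the set $T_i' \coloneqq (T_i \setminus Z) \cup (N(Z) \cap T_i)$ — more precisely, the natural candidate is to "reroute" $T_i$ around the component $Z$: replace the part of $T_i$ that lies inside $\overline{Z} \coloneqq Z \cup N(Z)$ by $N(Z)$. Concretely, set
\[
T_i^\ast \coloneqq (T_i \setminus Z) \cup (N(Z) \cap (Z \cup N(Z) \cup T_i)).
\]
Since $Z$ is a component of $G-D$ and $T_i \subseteq D$, no edge leaves $Z$ except into $N(Z) \subseteq D$; this should let me verify that $T_i^\ast$ is still an $(L_i,R_i)$-separator (here one uses that $L_i, R_i \subseteq S \subseteq D$, so $L_i$ and $R_i$ are disjoint from $Z$, and any $L_i$–$R_i$ path passing through $Z$ must enter and leave through $N(Z)$). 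The key weight computation is then
\[
w_i(T_i^\ast) \le w_i(T_i) - w_i(T_i \cap Z) + w_i(N(Z) \setminus T_i).
\]
Using hypothesis~\ref{lem:sub1}, $w_i$ and $w$ agree off $S$, and since $Z \cap S = \emptyset$ and $N(Z) \setminus D \subseteq$ (nothing, as $N(Z) \subseteq D$), I can translate everything between $w_i$ and $w$ on the relevant vertex sets. The delicate point is the vertices of $N(Z)$ that lie in $S$: on those, $w$ may strictly exceed $w_i$, which is exactly why hypothesis~\ref{lem:sub2} is needed — it ensures the substitution only helps.

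**The main obstacle and how to close it.**

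The main obstacle is bookkeeping the weights carefully across the three regions $S$, $\bigcup_j T_j \setminus S$, and $V(G) \setminus D$, while keeping track of which vertices of $N(Z)$ sit where. The cleanest route is: decompose $N(Z) = (N(Z) \cap S) \cupdot (N(Z) \setminus S)$, and note $N(Z) \setminus S \subseteq \bigcup_i T_i \setminus S$. For the contradiction I would sum an appropriate inequality over $i$, or better, pick $i$ to be an index maximizing some quantity; but in fact the slickest version avoids choosing $i$ at all and instead shows directly that $N(Z)$, viewed as a separator, is cheaper in the $w$-metric than $S$ is — wait, that is backwards. The correct final step: from $w(N(Z)) > w(S)$ one derives, for a suitable single $i$, the strict inequality $w_i(T_i^\ast) < w_i(T_i)$ (the gain on $N(Z) \cap Z$-side outweighs the loss, using $w \ge w_i$ everywhere and equality off $S$), contradicting that $T_i$ is a \emph{minimum} weight $(L_i,R_i)$-separator w.r.t.\ $w_i$. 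I expect that identifying precisely which $T_i^\ast$ is a valid separator — i.e.\ verifying the separation property after rerouting — and handling the edge case where $N(Z) \cap Z \ne \emptyset$ is impossible (since $N(Z) \cap Z = \emptyset$ by definition of neighborhood) versus $T_i \cap Z \ne \emptyset$ being the quantity that actually drops, is where the argument needs the most care. Once the separator property and the weight ledger are nailed down, the contradiction is immediate.
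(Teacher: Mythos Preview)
Your proposal has a genuine gap. The rerouting idea collapses because $T_i\cap Z=\emptyset$ for \emph{every} $i$: indeed $T_i\subseteq D$ while $Z$ is a connected component of $G-D$, so $Z\cap D=\emptyset$. Hence in your weight ledger
\[
w_i(T_i^\ast)\le w_i(T_i)-w_i(T_i\cap Z)+w_i(N(Z)\setminus T_i)
\]
the subtracted term vanishes and you only ever \emph{add} weight. You seem to sense this at the end (``$T_i\cap Z\neq\emptyset$ being the quantity that actually drops''), but it never happens. Similarly, $N(Z)$ itself is not an $(L_i,R_i)$-separator in general: both $L_i$ and $R_i$ lie in $S\subseteq D$, hence outside $Z$, and can be connected in $G-N(Z)$ through $V(G)\setminus(Z\cup N(Z))$. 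So neither candidate separator you mention undercuts $T_i$, and there is no way to pick a ``suitable single $i$'' that produces a contradiction from the data of $Z$ alone.

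The paper's proof supplies the missing idea: it proceeds by induction on $t$. Let $D'=S\cup\bigcup_{i<t}T_i$ and let $Z'\supseteq Z$ be the component of $G-D'$ containing $Z$; by induction $w(N(Z'))\le w(S)$. Now take the separation $(A,B)$ with $A\cap B=T_t$ and, say, $Z\subseteq A\setminus B$. Partition $V(G)$ into a $3\times 3$ grid $Q_{i,j}$ using the three-way partitions $(A\setminus T_t,\,T_t,\,B\setminus T_t)$ and $(V\setminus(Z'\cup N(Z')),\,N(Z'),\,Z')$. One checks $N(Z)\subseteq Q_{1,2}\cup Q_{2,2}\cup Q_{2,3}$ and $N(Z')=Q_{1,2}\cup Q_{2,2}\cup Q_{3,2}$, so it suffices to show $w(Q_{2,3})\le w(Q_{3,2})$. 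The key observation is that $Q_{2,1}\cup Q_{2,2}\cup Q_{3,2}$ is also an $(L_t,R_t)$-separator; minimality of $T_t$ (w.r.t.\ $w_t$) then yields $w_t(Q_{2,3})\le w_t(Q_{3,2})$, and hypotheses~\ref{lem:sub1} and~\ref{lem:sub2} convert this into the desired inequality for $w$ (since $Q_{2,3}\subseteq Z'$ is disjoint from $S$ while $Q_{3,2}\subseteq N(Z')$ may meet $S$). The point is that the competing separator is built from $N(Z')$, not $N(Z)$; this is why the induction is essential.
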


\begin{proof}
We proceed by induction on $t$. For $t=0$ we have $N(Z) \subseteq D = S$, so
the claim is trivial.
Assume $t\geq 1$ and define $D'\coloneqq S \cup\bigcup_{i\in[t-1]} T_i$.
Let $Z$ be
the vertex set of a connected component of $G- D$, and let
$Z'\supseteq Z$ be the vertex set of the connected component of $G-
D'$ containing~$Z$.
Let $L_t,R_t\subseteq S$ be sets such that $T_t$ is a minimum weight
$(L_t,R_t)$-separator with respect to $w_t$. 
Let $(A,B)$ be an $(L_t,R_t)$-separation with 
separator $A\cap B=T_t$. Without loss of generality we may assume that
$Z\subseteq A\setminus B$.
The three sets~$A\setminus T_t$,~$T_t$, and~$B\setminus T_t$
partition~$V(G)$. Similarly the three sets~$V\setminus (Z\cup
N(Z'))$,~$N(Z')$ and $Z'$ partition~$V(G)$.
We define~$Q_{i,j}$ to be the intersection of the~$i$-th set of the
first triple with the~$j$-th set of the second triple. This way
the sets~$Q_{i,j}$ with~$i,j\in \{1,2,3\}$ partition $V(G)$ into 9 parts
as shown in Figure~\ref{fig:decomposition:into:9:sets}.

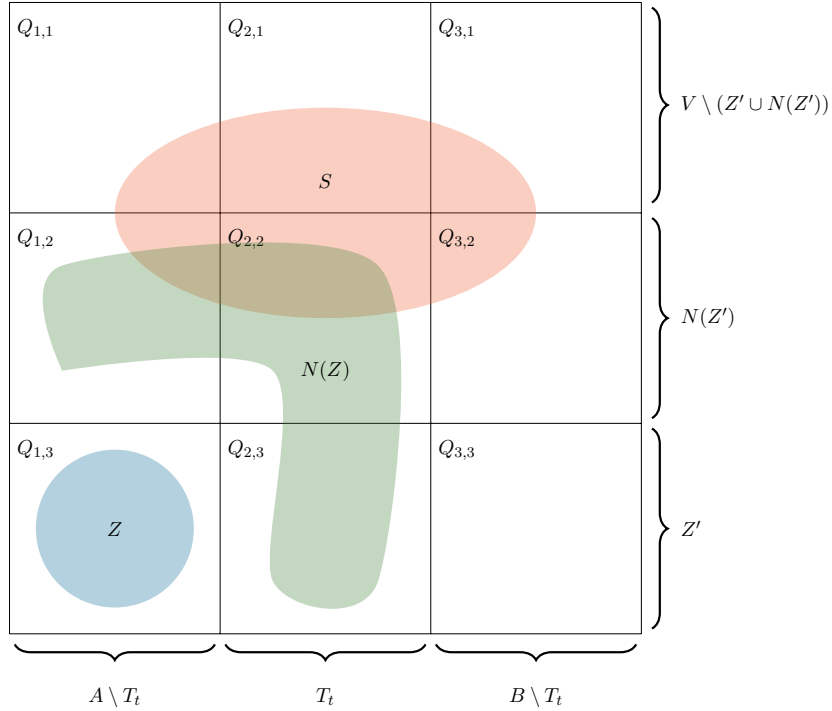
\begin{figure}[h]
\centering{
\scalebox{0.7}{
\begin{tikzpicture}

\draw[step=4.0] (0,0) grid (12,12);
\fill[RedOrange,opacity=0.3] (6,8) ellipse (4cm and 2cm)
node[yshift=0.6cm,black,opacity=1] {$S$};

\foreach \i in {1,2,3}{
   \foreach \j in {1,2,3}{
\draw (4*\i-3.5,-4*\j+15.5) node {$Q_{\i,\j}$};
    }
}

\fill[MidnightBlue,opacity=0.3] (2,2) ellipse (1.5cm and 1.5cm)
node[black,opacity=1] {$Z$};

\fill[OliveGreen,opacity=0.3] plot [smooth, tension=0.5]
coordinates { (1,5) (5,5)
(5,1) (7,1) (7,7) (1,7) (1,5)};
\draw (6,5) node {$N(Z)$};

\draw [very thick,decorate,decoration={brace,amplitude=0.3cm}]
(12.2,11.9) -- (12.2,8.2) node [black,midway,xshift=.4cm,anchor=west]
{$V\setminus (Z'\cup N(Z'))$};
\draw [very thick,decorate,decoration={brace,amplitude=0.3cm}]
(12.2,7.9) -- (12.2,4.1) node [black,midway,xshift=.4cm,anchor=west]
{$N(Z')$};
\draw [very thick,decorate,decoration={brace,amplitude=0.3cm}]
(12.2,3.9) -- (12.2,0.1) node [black,midway,xshift=.4cm,anchor=west]
{$Z'$};

\draw [very thick,decorate,decoration={brace,amplitude=0.3cm}]
(11.9,-0.2) -- (8.1,-0.2) node [black,midway,yshift=-1cm]
{$B \setminus T_t$};
\draw [very thick,decorate,decoration={brace,amplitude=0.3cm}]
(7.9,-0.2) -- (4.1,-0.2) node [black,midway,yshift=-1cm]
{$T_t$};
\draw [very thick,decorate,decoration={brace,amplitude=0.3cm}]
(3.9,-0.2) -- (0.1,-0.2) node [black,midway,yshift=-1cm]
{$A \setminus T_t$};

\end{tikzpicture}
}
}

\caption{Sets appearing in the proof of~\cref{lem:union:of:weighted:minimal:separations}\label{fig:decomposition:into:9:sets}.}
\end{figure}
We have $w(N(Z'))\le w(S)$ by the induction
hypothesis.
Since $N(Z')=Q_{1,2} \cup Q_{2,2} \cup Q_{3,2}$ and $N(Z) \subseteq  Q_{1,2}
\cup Q_{2,2} \cup Q_{2,3}$,
it suffices to show $w(Q_{2,3}) \leq w(Q_{3,2})$.
Observe that
$Q_{2,1} \cup Q_{2,2} \cup Q_{3,2}$ is also an $(L_t,R_t)$-separator,
because $L_t\subseteq A$ and  $R_t\subseteq B$ by the choice of
$(A,B)$, and $R_t\subseteq S\subseteq V(G)\setminus Z'$.

By the minimality of $T_t$, we have $w_t(T_t)\le w_t(Q_{2,1} \cup
Q_{2,2} \cup Q_{3,2})$, and as $T_t=Q_{2,1} \cup Q_{2,2} \cup
Q_{2,3}$, this implies $w_t(Q_{2,3}) \leq w_t(Q_{3,2})$.
By Assumptions (1) and (2) of the lemma, it follows that
\[w(Q_{2,3})  \overset{\mathclap{\ref{lem:sub1}}}{=}w_t(Q_{2,3})\leq w_t(Q_{3,2}) \overset{\mathclap{\ref{lem:sub2}}}{\leq}   w(Q_{3,2})  .\qedhere\]
\end{proof}

The lemma can be used to extend a set of vertices~$S$ that is not a
clique separator to a set~$D$ in an isomorphism-invariant fashion
while controlling the size of the adhesions sets of the components
of~$G-D$.
It will be important for us that we can also extend a labeling coset of~$S$ to
a labeling coset of~$D$ and furthermore
construct a graph of bounded forward-degree and depth
associated with $D$ and $S$.

\begin{lem}\label{lem:B}
Let $k\in\NN$ and let $G$ be a graph that is
$k$-improved.
Let $S\subseteq V (G)$ and let
$\Lambda\leq\lab(S)$ be a labeling coset such that\\[-1.5ex]
\begin{minipage}[t]{0.45\textwidth}
\begin{enumerate}
\item $\emptyset\subsetneq S \subsetneq V (G)$,
\item $|S|\leq \funcmedium$,
\item $S$ is not a clique,
\end{enumerate}
\end{minipage}
\begin{minipage}[t]{0.5\textwidth}
\begin{enumerate}
\setcounter{enumi}{3}
\item $G- S$ is connected,
\item $S = N_G(V (G) \setminus S)$, and
\item $\Lambda\in\gammak$.
\end{enumerate}
\end{minipage}\\[1.5ex]
There is
an algorithm that
either correctly concludes that $tw(G) > k$,
or finds a proper superset $D$ of $S$ and a 
labeling coset  $\hat\Lambda\leq\lab(D)$
and a connected rooted graph $H$ with the
following properties:\\[-1.5ex]
\begin{minipage}[t]{0.45\textwidth}
\begin{enumerate}[label=(\Alph*)]
\item\label{a:a} $D \supsetneq S$,
\item\label{a:b} $|D|\leq \funclarge$,
\end{enumerate}
\end{minipage}
\begin{minipage}[t]{0.5\textwidth}
\begin{enumerate}[label=(\Alph*)]
\setcounter{enumi}{2}
\item\label{a:c} if $Z$ is the vertex set of any connected component of $G
- D$, then $|N(Z)| \leq \funcmedium$, 
\item\label{a:d} $\hat\Lambda\in\gammak$,
\end{enumerate}
\end{minipage}
\begin{enumerate}[label=(\Alph*)]
\setcounter{enumi}{4}
\item\label{a:e} $D\subseteq V(H)$, $\depth(H)\leq k+3$
and
$\fdeg(v)\in k^{\CO(1)}$ for
all $v\in V(H)$.
\end{enumerate}
The algorithm runs in time $2^{\CO(k \log k)}|V (G)|^{\CO(1)}$ and the
output~$(D,\hat\Lambda,H)$ is isomorphism-invariant (w.r.t. the input
data~$G,S,\Lambda$ and~$k$).
\end{lem}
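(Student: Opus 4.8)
The plan is to mimic the construction of Lokshtanov et al.\ (their Lemma 3.2 / the associated decomposition step), but to replace their single separator computations by the weighted leftmost‑minimal separators $S_{L,R}(w)$ of the preceding lemma, and to carry along a labeling coset and a bounded‑degree graph throughout. Concretely: starting from $S$, I would first form $D_0 := S$ and then repeatedly ``grow'' $S$ by adding, for various pairs $L,R\subseteq S$, the leftmost minimal separators $S_{L,R}(w_i)$ for suitably chosen weight functions $w_i$ — the choice of weights being exactly the device that lets \cref{lem:union:of:weighted:minimal:separations} bound $w(N(Z))\le w(S)$ for every component $Z$ of $G-D$. To get the bound $|N(Z)|\le \funcmedium$ in \ref{a:c} one takes $w = w_{S,k}$ (or the appropriate $w_{U,k}$), so that $w(S)\le |S|\cdot k \le \funcmedium\cdot k$ forces $|N(Z)\setminus S|$ to be small once one subtracts off the part of $N(Z)$ lying in $S$; here the hypotheses $|S|\le\funcmedium$, that $S$ is not a clique, and that $G-S$ is connected with $S=N_G(V(G)\setminus S)$ are exactly what make the counting go through and guarantee $D\supsetneq S$ (property \ref{a:a}), i.e.\ that the growth step actually adds at least one vertex. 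The number of pairs $(L,R)$ with $|L|,|R|\le k+2$ taken from a set of size $\le\funcmedium$ is $\binom{\funcmedium}{k+2}^2$, and each contributes at most $2(k+1)$ extra vertices to the separator weight‑budget; this is precisely the combinatorics hard‑wired into the definition $\funclarge = \funcmedium + 2(k+1)\binom{\funcmedium}{k+2}^2$, giving \ref{a:b}.

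Next, for the coset‑ and graph‑bookkeeping: each individual separator $S_{L,R}(w)$ is defined purely in terms of $G$ and the (unordered) pair $\{L,R\}$ and the combinatorially‑specified weight function, hence is isomorphism‑invariant given $(G,S)$; moreover, given the labeling coset $\Lambda$ of $S$ one can extend it to a labeling coset of $D$ by ordering the newly added vertices in a $\Lambda$‑canonical way (e.g.\ by iterating over pairs $(L,R)$ in the $\Lambda$‑induced order on subsets, and within each new separator ordering vertices by the $\prec$‑order on their ``profiles'' of membership in the previously constructed pieces). Since $D$ is obtained from $S$ by a sequence of such canonical extensions and $|D|\le\funclarge = 2^{\CO(k\log k)}$, the new coset $\hat\Lambda$ is obtained from $\Lambda\in\gammak$ by at most $|D\setminus S|$ point‑stabilizer/extension steps, each of which keeps us inside $\Gammak$ (point stabilizers, and wreath‑type extensions by $\sym(\text{small set})$, preserve the class of groups whose non‑abelian composition factors embed in $\sym(k)$); this yields \ref{a:d}. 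For the graph $H$ in \ref{a:e}, I would build a rooted tree‑like structure of depth $\le k+3$ whose root encodes $S$, whose first few levels branch according to the pairs $(L,R)$ (there are at most $\binom{\funcmedium}{k+2}^2$ of them, so the forward‑degree is $k^{\CO(1)}$… wait, $\binom{\funcmedium}{k+2}$ is not polynomial — so instead one branches in $k+2$‑bounded increments, spreading the choice of $L$ and of $R$ over a constant number of levels, each level having forward‑degree $\le\binom{\funcmedium}{k+2}\le\funcmedium^{k+2}$; if that is still too large, one expands each such choice into $k+2$ successive choices of a single element, costing $\le\funcmedium\le k^{\CO(1)}$ forward‑degree per level and $O(1)$ extra depth), and whose leaves are attached to the vertices of the corresponding separators; the whole construction being determined by $(G,S,\Lambda)$ makes $H$ isomorphism‑invariant, and depth $\le k+3$ accommodates: one level for $S$, a bounded number of levels for decoding $(L,R)$, and one level for the separator vertices themselves. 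One must double‑check $D\subseteq V(H)$, which holds by attaching each $v\in D$ as a leaf under the branch that first introduced it.

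The main obstacle I expect is \emph{simultaneously} achieving (i) the \emph{small} adhesion bound $\funcmedium=\CO(k^2)$ via \cref{lem:union:of:weighted:minimal:separations} — which requires choosing the weight functions $w_i$ and the global $w$ so that hypotheses \ref{lem:sub1}–\ref{lem:sub2} hold across \emph{all} the separators added during the growth, i.e.\ the weights must be ``compatible'' in that they agree off $S$ and dominate off nothing — and (ii) the isomorphism‑invariance and $\gammak$‑membership of the extended coset, since the natural greedy growth could add vertices in an order that is not canonical. The resolution is to decouple the two: first run the purely combinatorial growth to get the \emph{set} $D$ (invariant because $S_{L,R}(w)$ is), verifying \ref{a:a}–\ref{a:c} and bailing out with ``$\tw(G)>k$'' if at any stage a separator is forced to exceed the weight budget (which, by \cref{lem:union:of:weighted:minimal:separations} and the degeneracy of bounded‑treewidth graphs, can only happen if $\tw(G)>k$); then, \emph{afterwards}, define $\hat\Lambda$ by a canonical ordering of $D\setminus S$ relative to $\Lambda$ using the $\prec$‑order on membership‑profiles, and build $H$ accordingly. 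A secondary nuisance is keeping $\fdeg(v)\in k^{\CO(1)}$ while depth stays $\le k+3$: the binomial $\binom{\funcmedium}{k+2}$ is not polynomial, so one genuinely has to unfold each pair‑choice into $\CO(1)$ rounds of single‑element choices with per‑round branching $\le\funcmedium = \CO(k^2)$, and argue the total depth used for this decoding is a constant, leaving room within the $k+3$ budget for the $S$‑level and the separator‑leaf level. Finally, the running time $2^{\CO(k\log k)}|V(G)|^{\CO(1)}$ follows since there are at most $\binom{\funcmedium}{k+2}^2 = 2^{\CO(k\log k)}$ Ford–Fulkerson computations, each polynomial in $|V(G)|$, plus the coset manipulations on a ground set of size $\funclarge = 2^{\CO(k\log k)}$.
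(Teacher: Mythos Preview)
Your proposal has the right architecture --- leftmost-minimal separators fed into \cref{lem:union:of:weighted:minimal:separations}, profile-based extension of the labeling coset, and a tree-like $H$ that unfolds the $(L,R)$-choices one element at a time --- but there is a genuine gap in the argument for property~\ref{a:c}.

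You propose a single weight function $w = w_{S,k}$ applied uniformly. Then $w(S) \le k|S| \le k\,\funcmedium$, so \cref{lem:union:of:weighted:minimal:separations} only yields $|N(Z)| \le w(N(Z)) \le k\,\funcmedium$, not $|N(Z)| \le \funcmedium$. Your attempted fix (``subtract off the part of $N(Z)$ lying in $S$'') does not help: $N(Z)\cap S$ alone can have size up to $\funcmedium$, and you have no useful bound on $|N(Z)\setminus S|$ beyond $w(N(Z))$. The paper resolves this by a case split on $|S|$ that you are missing. If $|S| \le \funcsmall$, one takes $\CI$ to consist only of \emph{singleton} pairs $(\{x\},\{y\})$ over non-edges $xy$ of $S$, with per-pair weight $w_{\{x,y\},k+1}$ and global weight $w = w_{S,k+1}$; then $w(S) \le (k+1)\funcsmall \le \funcmedium$ and \ref{a:c} follows. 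If $\funcsmall < |S| \le \funcmedium$, one instead uses \emph{unweighted} separators $S_{L,R}$ over all $(L,R)$ with $|L|=|R|\le k+2$ and $|S_{L,R}|\le k+1$; with $w = \mathbf 1$ one gets $|N(Z)| \le |S| \le \funcmedium$ directly. The formula for $\funclarge$ you reverse-engineer is tailored to the second case only; in the first case $|D| \le \funcsmall + k\funcsmall^2 \in k^{\CO(1)}$, which is why there $H$ can simply be the complete graph on $D\cup\{r\}$.

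Two secondary points. First, in the large-$|S|$ case, showing $\CI \ne \emptyset$ (hence $D \supsetneq S$) genuinely requires the existence of a \emph{balanced} separator of $S$ of size $\le k+1$, which holds when $\tw(G)\le k$; this is where the ``or conclude $\tw(G)>k$'' escape hatch is actually invoked, and your proposal does not locate it. Second, your depth accounting for $H$ is inconsistent: unfolding the choice of $(L,R)$ by adding one element to each of $L$ and $R$ per level takes $k+2$ levels, not $\CO(1)$; together with one final level for the vertices of $S_{L,R}$ this exactly saturates the depth budget $k+3$.
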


Here, the output of an algorithm $\CA$
is isomorphism-invariant
if all isomorphisms between two input data $(G,S,\Lambda,k)$
and $(G',S',\Lambda',k')$
extends to an isomorphism between the output $(D,\hat\Lambda,H)$
and $(D',\hat\Lambda',H')$
(An isomorphism between $(G,S,\Lambda,k)$
and $(G',S',\Lambda',k')$ is a mapping $\phi:V(G)\to V(G')$ such that
$(G,S,\Lambda,k)^\phi=(G',S',\Lambda',k')$
where we apply $\phi$
as defined in the preliminaries).
\begin{proof}
We consider two cases depending on the size of $S$.
\begin{cs}
\case{$|S|\leq \funcsmall$}
Let $\CI\coloneqq\{(\{x\},\{y\})~|~x,y\in S,x\neq y,xy\nin E(G)\}$
and let
\[ D=S\cup\bigcup_{(L,R)\in\CI}S_{L,R}(w_{L\cup R,k+1}).\]
We set $w\coloneqq w_{S,k+1}$ and then we have the
following for every vertex set $Z$ of a connected component of $G-D$ by \cref{lem:union:of:weighted:minimal:separations}:
\begin{align*}
|N(Z)|&\leq w(N(Z))
\overset{\mathclap{\ref{lem:union:of:weighted:minimal:separations}}}{\leq} w(S)
\leq \funcsmall(k+1)
\leq \funcmedium.
\end{align*}
For every $xy \nin E(G)$ there is
a~$(\{x\},\{y\})$-separator of size at most~$k$ disjoint
from~$\{x,y\}$, because $G$ is $k$-improved.
Thus $|D|\le
|S|+k|S|^2\le\funcsmall+k\funcsmall^2\le\funclarge$.
Moreover,
since $G-S$ is
connected and $S = N_G(V (G) \setminus S)$, for all distinct $x,y\in S$ every~minimum weight~$(\{x\},\{y\})$-separator contains a vertex that is not in~$S$. It follows that~$D\neq S$.

\case{$\funcsmall<|S|\leq \funcmedium$}
Let $\CI\coloneqq\{(L,R)~|~|L|=|R|\leq
k+2,|S_{L,R}|\leq k+1\}$ and let
\[
D=S\cup\bigcup_{(L,R)\in \CI}S_{L,R}.\]
\end{cs}
The properties of~$D$ follow from similar arguments as in the first case.
The fact that $\CI$ is nonempty
follows from the existence of a \emph{balanced} separation
(for details see \cite{lokshtanov2017fixed}).
Next, we show how to find $\hat\Lambda$ in both cases.
To each $x\in D\setminus S$ we associate
the set $A_x\coloneqq\{(L,R)\in\CI\mid x\in S_{L,R}\}$.
Two vertices $x$ and $y$ occur in exactly the same separators if
$A_x=A_y$. In this case we call them equivalent and write $x\equiv y$.
Let $A_1,\ldots,A_t\subseteq D\setminus S$ be the equivalence classes of
``$\equiv$''. Since
each $x$ is contained in some separator of size
at most $k+1$ we conclude that the size of each $A_i$
is at most $k+1$.

For each labeling $\lambda\in\lab(S)$ we choose an extension
$\hat\lambda:D\to \{1,\ldots,|D|\}$ such that $\hat\lambda|_S=\lambda$
and for~$x,y\in D\setminus S$ we have $x^{\hat\lambda} <y^{\hat\lambda}$ if $A_x^\lambda\prec A_y^\lambda$. (Recall that~$\prec$ is the linear order of subsets of~$\mathbb{N}$ as defined in the preliminaries).
Inside each equivalence class $A_i$, the ordering
is chosen arbitrarily.
Define $\hat\Lambda\coloneqq(\{\id_{S}\}\times \sym(A_1)\times\ldots\times\sym(A_t)) \cdot  \{\hat\lambda\mid
\lambda\in\Lambda\}\leq\lab(D)$.
By construction the coset~$\hat\Lambda$ does not depend on the choices of the
extensions~$\hat\lambda$. Since $|A_i|\leq k+1$ for all $1\leq i\leq t$ we
conclude that $\hat\Lambda\in\gammak$, as desired.

It remains to explain how to efficiently compute~$\hat\Lambda$. For
this we simply remark that is suffices to use a set of
extensions~$M\subseteq \Lambda$ such that~$\Lambda$ is the smallest
coset containing all elements of~$M$ (i.e., we can use a coset
analogue of a generating set).
We conclude that~$\hat\Lambda$ can be computed in polynomial time in the size of~$\CI$.

Last but not least, 
we show how to construct the graph $H$.
The Case $|S|\leq \funcsmall$ is easy to handle.
In this case we define $H$ as the complete graph
on the set $D\cup\{r\}$
where $r$ is some new vertex, which becomes the root
of $H$.
The forward-degree of $r$ is bounded by
$|D|$ which in turn
is bounded by $k^{\CO(1)}$.
We consider the Case $\funcsmall<|S|\leq \funcmedium$.
We define
$V(H):=\{(L,R)\mid L,R\subseteq S, |L|=|R| \leq k+2\}\cup D$.
Clearly, we have $\CI\subseteq V(H)$.
For the root we choose $(\emptyset,\emptyset)\in V(H)$.
We define the edges $E(H):=\{(L,R)(L',R')\mid L\subseteq L',
R\subseteq R', |L|+1=|R|+1=|L'|=|R'|\}
\cup \{x(L,R)\mid x\in D,x\in S_{L,R}\}$.
Since for each pair $(L,R)$
there are at most $|S|^2$ different
extensions $(L',R')$ with $|L|+1=|R|+1=|L'|=|R'|$
and
since each separator $S_{L,R}$ contains
at most $k+1$ vertices, we conclude that
the forward-degree of each vertex in $H$ is
bounded by $|S|^2+k+1\in k^{\CO(1)}$.
Moreover, the depth of $H$ is bounded by $k+3$.
\end{proof}

A \emph{labeled tree decomposition} $(T,\beta,\alpha,\eta)$
is a 4-tuple where $(T,\beta)$ is a tree decomposition and
$\alpha$ is a function that maps each $t\in V(T)$ to a labeling
coset $\alpha(t)\leq\lab(\beta(t))$
and $\eta$ is a function that maps each $t\in V(T)$
to a graph $\eta(t)$.

The lemma can be used as a recursive tool to compute our desired
isomorphism-invariant labeled tree decomposition.

\begin{theo}\label{theo:labelDecomposition}
Let $k\in\NN$ and let $G$ be a $k$-basic graph
and let
$v$ be a vertex of degree at most $k$.
There is an algorithm that
either correctly concludes that $\tw(G) > k$,
or computes a labeled
tree decomposition $(T,\beta,\alpha,\eta)$
with the following properties.
\begin{enumerate}[label=(\Roman*)]
\item\label{i:2} the width of $(T,\beta)$ is bounded by $\funclarge$,
\item\label{i:3} the adhesion width of $(T,\beta)$ is bounded by $\funcmedium$,
\item\label{i:4} the degree of $(T,\beta)$ is bounded by $k\funclarge^2$
and the number of children of $t$ with common adhesion set is bounded
by $k$ for each $t\in V(T)$,
\item\label{i:5} $|V(T)|$ is bounded by $\CO(|V(G)|)$,
\item\label{i:7} for each bag~$\beta(t)$ the adhesion sets of the children are all equal
to~$\beta(t)$ or the adhesion sets of the children are all distinct.
In the former case the bag size is bounded by $\funcmedium$,
\item\label{i:8}
for each $t \in V(T)$ the graph $\eta(t)=H_t$ is a connected rooted graph
such that
$\beta(t)\cup\beta(t)^2\subseteq V(H_t)$
and for each adhesion set $S$
there is a corresponding vertex
$S\in V(H_t)$,
$\depth(H_t)\in\CO(k)$
and
$\fdeg(v)\in k^{\CO(1)}$ for
all $v\in V(H_t)$, and
\item\label{i:6} $\alpha(t)\in\gammak$.
\end{enumerate}
The algorithm runs in time $2^{\CO(k^2 \log k)}|V (G)|^{\CO(1)}$ and the
output~$(T,\beta,\alpha,\eta)$ of the algorithm is isomorphism invariant
(w.r.t.~$G,v$ and~$k$).
Furthermore,
if we drop Property \ref{i:6} as a requirement, the triple  $(T,\beta,\eta)$
can be computed in
time $2^{\CO(k \log k)}|V (G)|^{\CO(1)}$. 
\end{theo}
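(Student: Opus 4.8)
The plan is to build the labeled tree decomposition top-down by repeatedly invoking \cref{lem:B}. The root bag is obtained from the fixed vertex~$v$: since $\deg(v)\le k$, the set $S_0 \coloneqq \{v\}\cup N(v)$ has size at most $k+1 \le \funcmedium$, it is not a clique (a $k$-basic graph with a clique for $\{v\}\cup N(v)$ would either be that clique itself, contradicting $k$-basicness once $|V(G)|>k+1$, or $N(v)$ would be a separating clique — one checks the small-graph case separately), and we may equip it with the full labeling coset $\lab(S_0)\in\gammak$ since $k+1\le k$ is false but $\sym(k+1)$ has only composition factors that are subgroups of $\sym(k+1)$; here one has to be slightly careful, so in fact I would take $S_0$ minimally and use $\lab(S_0)$, noting $\Gamma_{k+1}$ tolerates $\sym(k+1)$. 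More robustly: because $G$ is $k$-basic and $v$ has degree $\le k$, $G-\big(\{v\}\cup N(v)\big)$ has the connectivity needed to seed the recursion, or $G$ is tiny and handled directly. This gives a bag satisfying the hypotheses (1)--(6) of \cref{lem:B}.

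The recursion is then: given a bag $\beta(t)=S$ with its coset $\alpha(t)=\Lambda\in\gammak$ and a connected component $Z$ of $G-D$ from the previous application (so $N(Z)\subseteq D$, $|N(Z)|\le\funcmedium$), we form the child bag as follows. Let $S' \coloneqq N_G(Z)$ and consider $G' \coloneqq G[Z\cup S']$. If $S'$ is not a clique and $G'-S'=G[Z]$ is connected and $S' = N_{G'}(Z)$ (which holds by construction), we apply \cref{lem:B} to $(G', S', \Lambda')$ where $\Lambda'$ is the restriction/transfer of the parent coset to $S'$; this yields $(D', \hat\Lambda', H')$ and we recurse on the components of $G'-D'$. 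If $S'$ \emph{is} a clique, we do not apply \cref{lem:B} — instead we make a small bag equal to $S'$ (size $\le \funcmedium$) with the full or inherited coset, which is exactly the case in Property~\ref{i:7} where all children adhesions equal the bag. The adhesion set between a bag and its child $Z$-component is a subset of $N(Z)$, hence of size $\le\funcmedium$, giving~\ref{i:3}; the bag size is $|D|\le\funclarge$ by~\ref{a:b} of \cref{lem:B}, giving~\ref{i:2}; and $\eta(t)\coloneqq H$ inherits the depth and forward-degree bounds from~\ref{a:e}, after we enlarge it so that $\beta(t)\cup\beta(t)^2$ and all adhesion sets appear as vertices — which in the $\funcsmall<|S|\le\funcmedium$ case they already do by the explicit construction of $H$ in \cref{lem:B}, and in the $|S|\le\funcsmall$ case $H$ is a clique on $D\cup\{r\}$ so we just take products of its vertices as a depth-$2$ extension — yielding~\ref{i:8} and~\ref{i:6}.

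The degree bound~\ref{i:4} and the linear bound~\ref{i:5} on $|V(T)|$ require an amortization argument: each component $Z$ of $G-D$ contributes one child, and $\sum_Z |Z| < |V(G)|$ since the $Z$'s are disjoint and nonempty; the number of children with a \emph{common} adhesion set is bounded by $k$ because, by Property~4 of \cref{theo:cliqueDecomposition}-style reasoning (or directly: distinct components with the same neighborhood $N(Z)=S'$ in a $k$-basic graph are limited since $S'$ is then almost a separating clique), at most $k$ such can occur — I would prove this the same way \cite{lokshtanov2017fixed} do. Multiplying the per-bag child count ($\le k\cdot(\text{number of possible adhesion sets})\le k\funclarge^2$, since an adhesion set is a subset of a bag of size $\le\funclarge$ of bounded size) gives~\ref{i:4}. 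For~\ref{i:5} one observes the recursion tree has linearly many leaves (each "consuming" a distinct vertex) and bounded-degree internal growth, or more simply that the union of all bags minus adhesions partitions $V(G)$, as in the clique-separator decomposition.

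For the running time: each call to \cref{lem:B} costs $2^{\CO(k\log k)}|V(G)|^{\CO(1)}$, there are $\CO(|V(G)|)$ calls, so the triple $(T,\beta,\eta)$ is computed in $2^{\CO(k\log k)}|V(G)|^{\CO(1)}$. Computing the cosets $\alpha(t)$ with the $\gammak$-guarantee, i.e.\ Property~\ref{i:6}, requires additionally tracking and transferring labeling cosets along the tree, and the coset computations (intersections, the explicit product $\{\id_S\}\times\sym(A_1)\times\cdots\times\sym(A_t)\cdot\{\hat\lambda\}$) cost $2^{\CO(k^2\log k)}$ because the index set $\CI$ in the second case of \cref{lem:B} has size $2^{\CO(k\log k)}$ but the coset bookkeeping over a bag of size up to $\funclarge=2^{\CO(k\log k)}$ pushes this to $2^{\CO(k^2\log k)}$; hence the stated split in running times. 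The main obstacle I anticipate is \textbf{verifying isomorphism-invariance of the entire recursive construction}: each individual step is isomorphism-invariant by \cref{lem:B} and by the canonical choice of $S'=N(Z)$, but one must check that an isomorphism $\phi\colon G\to G'$ fixing $v\mapsto v'$ (and $k$) induces a bijection between the recursion trees respecting $\beta,\alpha,\eta$ at every level — this is a careful but routine induction on the depth of the recursion, using that $\phi$ maps components of $G-D$ to components of $G'-D'$ and that \cref{lem:B}'s output is invariant under the restricted isomorphism on $(G[Z\cup N(Z)], N(Z), \Lambda|_{N(Z)})$. Ensuring the coset-transfer step is well-defined and invariant (the parent coset restricted to an adhesion set, then fed as the input $\Lambda$ to the child) is where most of the technical care goes, since labeling cosets of different sets must be identified consistently; I would handle this by always passing $\Lambda\stab{}{}$-style restrictions and appealing to the convention $\Lambda^\phi=\phi^{-1}\Lambda$ fixed in the preliminaries.
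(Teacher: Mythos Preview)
Your overall structure --- recursive application of \cref{lem:B} --- is the paper's approach, but two steps have real gaps.

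First, the transfer of the labeling coset from a bag $D$ to a child adhesion set $S_i$ is the crux, and ``restriction/transfer of the parent coset'' does not suffice: the naive restriction $\hat\Lambda|_{S_i}$ is not even a labeling coset of $S_i$ (its image is some subset of $\{1,\dots,|D|\}$, not $\{1,\dots,|S_i|\}$), and turning it into one in an isomorphism-invariant way is exactly the step that drives the $2^{\CO(k^2\log k)}$ running time. The paper takes
\[
\Lambda_i=\bigl\{\lambda\in\hat\Lambda|_{S_i}\;:\;S_i^\lambda=\textstyle\min_\prec\{S_i^{\lambda'}:\lambda'\in\hat\Lambda\}\bigr\}
\]
and then composes with the order-preserving bijection onto $\{1,\dots,|S_i|\}$; computing this minimal image (via set transporters or the hypergraph canonization of Section~\ref{sec:canoniztion:tools}) costs $|D|^{\CO(k)}\subseteq 2^{\CO(k^2\log k)}$. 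You correctly locate the bottleneck but supply neither this construction nor an alternative.

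Second, your argument for the degree bound~\ref{i:4} (``$S'$ is then almost a separating clique'') is not a proof. The actual argument uses that $G$ is $k$-improved: for any non-edge $uv$ in $D$ there are at most $k$ internally vertex-disjoint $u$--$v$ paths, and each component $Z_i$ with $u,v\in S_i=N(Z_i)$ contributes one such path; hence at most $k$ components have any given non-edge in their adhesion. Since every $S_i$ is a separator in a clique-separator-free graph and therefore contains a non-edge, the total number of children is at most $k\binom{|D|}{2}\le k\funclarge^2$, and children sharing an adhesion set number at most $k$. Separately, your seed $S_0=\{v\}\cup N(v)$ and the attendant worrying about cliques and $\sym(k+1)$ is unnecessary: the paper starts with $S=\{v\}$ and the trivial coset $\{v\mapsto1\}$, relaxes the recursive invariant~(3) to ``$S$ is not a clique \emph{or} $|N(S)|\le k$,'' and in the (only-at-the-root) clique case sets $D=S\cup N(S)$ and $\hat\Lambda=\Lambda\times\tau\sym(\{|S|{+}1,\dots,|D|\})$ directly, without invoking \cref{lem:B}. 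Property~\ref{i:7} is then enforced by a post-processing step that inserts intermediate bags, not by the clique case as you suggest.
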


Here, the output of an algorithm $\CA$
is isomorphism-invariant,
if all isomorphisms between two input data
extend to an isomorphism between the output.
More precisely,
an isomorphism $\phi\in\iso((G,v,k),(G',v',k'))$
extends to an isomorphism between
between $(T,\beta,\alpha,\eta)$ and $(T',\beta',\alpha',\eta')$
if there is a bijection between the tree decompositions
$\phi_T:V(T)\to V(T')$ and for each
node $t\in V(T)$ a bijection
between the vertices of graphs $\phi_t:V(\eta(t))\to V(\eta'(\phi_T(t)))$
which extend $\phi$, i.e.\
$\phi_t(x)=x^\phi$ for all $x\in \beta(t)\cup\beta(t)^2\cup 2^{\beta(t)}$
where we naturally apply $\phi$ as defined in the preliminaries.
Furthermore, these extensions define an isomorphism between the output data,
i.e.\ for all nodes $t\in V(T)$ we have that
$\beta(t)^\phi=\beta'(\phi_T(t))$,
$\alpha(t)^{\phi} = \alpha'(\phi_T(t))$
and
$\eta(t)^{\phi_t}=\eta'(\phi_T(t))$.

\begin{proof}
We describe a recursive algorithm $\CA$ that has
$(G,S\subseteq V(G),\Lambda\leq\lab(S))$
as the input such that the data satisfy the assumptions
of~Lemma~\ref{lem:B}. That is:\\
\begin{minipage}[t]{0.45\textwidth}
\begin{enumerate}
\item\label{e:1} $\emptyset\subsetneq S \subsetneq V (G)$,
\item\label{e:2} $|S|\leq \funcmedium$,
\item\label{e:3} $S$ is not a clique or $|N(S)|\leq k$,
\end{enumerate}
\end{minipage}
\begin{minipage}[t]{0.5\textwidth}
\begin{enumerate}
\setcounter{enumi}{3}
\item\label{e:4} $G- S$ is connected,
\item\label{e:5} $S = N_G(V (G) \setminus S)$ and,
\item\label{e:6} $\Lambda\in\gammak$.
\end{enumerate}
\end{minipage}\\[1.5ex]
The output is a labeled
tree decomposition $(T,\beta,\alpha,\eta)$ with root bag $S$.
Since $G$ has no separating cliques and therefore $v$ is not a cut vertex,
the triple $(G,\{v\},\{v\mapsto 1\})$ fulfills the input conditions
which will complete the proof.
The algorithm $\CA$ works as follows.
\begin{stp}
\step1
We describe how to construct $\beta(t)=D$ and $\alpha(t)=\hat\Lambda$
for the root node $t\in V(T)$.
If $S$ is a clique we define
$D\coloneqq S\cup N(S)$.
Notice that $N(S)\leq k$ due to Property \ref{e:3}.
Let $\tau:N(S)\to\{|S|+1,\ldots,|D|\}$
be an arbitrary bijection.
We define
$\hat\Lambda\coloneqq\Lambda\times\tau\sym(\{|S|+1,\ldots,|D|\})\leq\lab(D)$.
Otherwise (i.e., $S$ is not a clique) we compute and define $D$ and
$\hat\Lambda$ as in \cref{lem:B}.

\end{stp}
Let $Z_1,\ldots,Z_r$ be the connected components of 
the graph $G- D$.
Let $S_i\coloneqq N_G(Z_i)$ and $G_i\coloneqq G[Z_i\cup S_i]$.

\begin{stp}

\step2 We describe how to construct the graph $\eta(t)=H_t$.
First, we compute the graph $H$ as in \cref{lem:B}.
To achieve that pairs of vertices from $\beta(t)$
are contained in this graph we define a Cartesian product $H^2$,
i.e. $V(H^2)=V(H)^2$ and
\begin{align*}
 E(H^2) =   \;\;\;\;\;&\{(u_1,u_2)(v_1,v_2)\mid u_1v_1\in E(H) \text{ and }u_2=v_2\}\\
                 \cup &\{(u_1,u_2)(v_1,v_2)\mid u_2v_2\in E(H) \text{ and }u_1=v_1\}.
\end{align*}
Now, we define $H_t$ as follows.
The vertex set is $V(H_t):=V(H)\cup V(H^2)\cup \{S_i\mid i\in[r]\}$.
The edge set is $E(H_t):=E(H^2)\cup \{(v,v)v\mid v\in H\}
\cup\{(u,v)S_i\mid u,v\in S_i, uv\nin E(D)\}$
and the root of $H_t$ is $(r,r)$ where $r$ is the root of $H$.

\step3
We call the algorithm recursively as follows.
Let \[\Lambda_i\coloneqq\{\lambda\in\hat\Lambda|_{S_i}~|~
S_i^\lambda=\min_{\prec}\{S_i^\lambda~|~\lambda\in\hat\Lambda\}\}.\]
(This set is essentially only an isomorphism-invariant restriction of~$\Lambda$ to~$S_i$.
The minimum is taken with respect to~$\prec$, the ordering of subsets of~$\mathbb{N}$ as defined in the preliminaries.) 
The image~$S_i^\lambda$ might be a set of natural numbers different from~$\{1,\ldots,|S_i|\}$.
To rectify this we let~$\pi_i$ be the bijection from~$S_i^{\lambda}$ to~$\{1,\ldots,|S_i|\}$ that preserves the normal ordering~``$<$'' of natural numbers and set~$\Lambda'_i = \Lambda_i \pi_i$.
We compute $\CA(G_i,S_i,\Lambda_i')=(T_i,\beta_i,\alpha_i,\eta_i)$ recursively.
By possibly renaming the vertices of the trees~$T_i$, we can assume w.l.o.g. $V(T_i)\cap V(T_j)=\emptyset$.

\step4 We define a labeled tree decomposition $(T,\beta,\alpha,\eta)$
as follows. Define a new root vertex $t$ and attach the root vertices
$t_1,\ldots,t_r$ computed by the recursive calls as children.
We define $\beta(t)\coloneqq D$ and $\alpha(t)\coloneqq\hat\Lambda$
and $\eta(t):=H_t$.
Combine all decompositions $(T_i,\beta_i,\alpha_i,\eta_i)$ together with the
new root $t$ and return the resulting decomposition
$(T,\beta,\alpha,\eta)$.

\end{stp}

\begin{correctness}
First of all, we show that $(G_i,S_i,\Lambda_i)$
fulfills the Conditions \ref{e:1}-\ref{e:6} and the recursive call in Step 3
is justified.
Condition \ref{e:1} is clear.
For \ref{e:2} notice that $D$ and $\hat\Lambda$ satisfy
\ref{a:a}-\ref{a:d} from \cref{lem:B}
regardless which option is applied in Step 1.
Condition \ref{e:2} then follows from \ref{a:c}.
For \ref{e:3} notice that $G$ does not have any separating cliques,
but $S_i$ is a separator. The Conditions \ref{e:4} and \ref{e:5} follow 
from the definition of $G_i$ and $S_i$, whereas
\ref{e:6} follows from~\ref{a:d}.

Next, we show \ref{i:2}-\ref{i:7}.
The bound for the width in \ref{i:2} immediately follows from
\ref{a:b}, while the bound on the adhesion width in \ref{i:3}
follows from~\ref{a:c}.
For \ref{i:4} observe that
$G$ is $k$-improved and therefore
each non-edge in $D$ is contained in at most $k$
different $S_i$, i.e., $|\{i~|~u,v\in S_i\}|\leq k$ for each
$u,v\in D,uv\nin E(D)$. Therefore, $r\leq k|\bar E(D)|\leq k\funclarge^2$.
Moreover, since each $S_j$ contains a non-edge $uv$ we conclude
$|\{i\mid S_i=S_j\}|\leq |\{i~|~u,v\in S_i\}|\leq k$.
For \ref{i:5} we associate each bag $\beta(t)=D$ with the
set $D\setminus S$ in an injective way.
The number of sets $D\setminus S$ occurring in each recursive call
is bounded by $|V(G)|$ since these sets
are not empty and pairwise distinct. We conclude $|V(T)|\leq |V(G)|$.
Property \ref{i:6} follows from \ref{a:d}.
For any bag not satisfying Property \ref{i:7}
we introduce
a new bag containing all equal adhesion sets.
For \ref{i:8} we observe that the forward-degree
of the vertices $(u,v)\in V(H^2)$ is
the sum of the forward-degrees of $u$ and $v$ in $H$.
The depth of $H^2$ is twice the depth of $H$.
Therefore, we have $\fdeg(x)\in k^{\CO(1)}$ for
each $x\in V(H_t)$ and $\depth(H_t)\leq 2k+7$.
The construction is obviously isomorphism invariant.
\end{correctness}

\begin{runningtime}
We have already seen that the number of recursive calls is bounded by $|V(G)|$.
The computation of the sets $D$ and $\hat\Lambda$ as in \cref{lem:B}
can be done in time $2^{\CO(k \log k)}|V (G)|^{\CO(1)}$.
We need to explain how to compute $\Lambda_i$ from $\hat\Lambda$.
With standard group theoretic techniques, including 
the Schreier-Sims algorithm, one can find a minimal image
of $S_i$.
Using set transporters one can find all labelings in $\hat\Lambda$
mapping to this minimal image. This algorithm runs
in time $|D|^{\CO(k)}\subseteq 2^{\CO(k^2\log k)}$.
Since we are only interested in an isomorphism invariant
restriction of $\hat\Lambda$ to $S_i$, we could
also use techniques from Section~\ref{sec:canoniztion:tools} as follows.
We define a hypergraph $X$ on the vertex set $D$ consisting of one single
hyperedge $S_i$.
Let $\Lambda_i\coloneqq\can(X;\hat\Lambda)|_{S_i}$ be the restriction
of a canonical
labeling of that hypergraph.
This algorithm also runs in time $|D|^{\CO(k)}$.
This leads to a total run time of $2^{\CO(k^2 \log k)}|V (G)|^{\CO(1)}$.
We remark that the computation of $\Lambda_i$ from $\hat\Lambda$
is only needed if we require Property \ref{i:6}.
For the computation of a labeled tree decomposition $(T,\beta,\eta)$
the bottleneck arises in \cref{lem:B} which gives a run time of
$2^{\CO(k \log k)}|V (G)|^{\CO(1)}$.
\end{runningtime}
\end{proof}

\begin{rem}\label{rem:sub}
We later use the isomorphism-invariance of the
labeled tree decomposition
$(T,\beta,\alpha,\eta)$ from the previous theorem in more detail.
Let $t\in V(T)$ be a non-root node
and let $S\subseteq V(G)$ be the adhesion set to the parent node of $t$
and let $I_t=(T_t,\beta_t,\alpha_t,\eta_t)$ be the decomposition of
the subtree rooted at $t$ and $G_t$ the graph corresponding to $I_t$.
Then $\eta_t$ is
isomorphism-invariant w.r.t. $T_t,\beta_t,G_t$ and $S$.
\end{rem}


\section{Coset-Hypergraph-Isomorphism}\label{sec:iso:sub}

After having computed isomorphism-invariant tree decompositions in the previous sections we now want to compute the set of isomorphisms from one graph to another in a bottom up fashion.
Let $G_1,G_2$ be the two input graphs and suppose we are given isomorphism-invariant tree decompositions $(T_1,\beta_1)$ and $(T_2,\beta_2)$.
For a node $t \in V(T_i)$ we let $(G_i)_t$ be graph induced by the union of all bags contained in the subtree rooted at $t$.
The basic idea is to compute for all pairs $t \in V(T_1),t'\in V(T_2)$ the set of isomorphisms from $(G_1)_t$ to $(G_1)_{t'}$ (in addition the isomorphisms shall also respect the underlying tree decomposition) in a bottom up fashion.

The purpose of this section is to give an algorithm that solves this problem at a given bag (assuming we have already solved the problem for all paris of children of $t$ and $t'$).
Let us first give some intuition for this task.
Suppose we are looking for all bijections from $\beta_1(t)$ to $\beta_2(t')$ that can be extended to an isomorphism from $(G_1)_t$ to $(G_2)_{t'}$.
Let $t_1,\dots,t_{\ell}$ be the children of $t$ and $t_1',\dots,t_{\ell}'$ the children of $t'$.
Then we essentially have to solve the following two problems.
First, we have to respect the edges appearing in the bags $\beta_1(t)$ and $\beta_2(t')$.
But also, every adhesion set $\beta(t) \cap \beta(t_i)$ has to be mapped to another adhesion set $\beta(t') \cap \beta(t_j')$
in such a way that the corresponding bijection (between the adhesion sets) extends to an isomorphism from $(G_1)_{t_i}$ to $(G_2)_{t_j'}$.
In order to solve this problem we
first consider the case in which the adhesion sets are all distinct
and define the following abstraction.

An instance of \emph{coset-hypergraph-isomorphism} is a tuple $\mathcal{I} = (V_1,V_2,\mathcal{S}_1,\mathcal{S}_2,\chi_1,\chi_2,\CF, \Cf)$ such that
\begin{enumerate}
 \item $\mathcal{S}_i \subseteq \Pow(V_i)$,
 \item $\chi_i\colon \mathcal{S}_i \rightarrow \mathbb{N}$ is a coloring,
 \item $\CF = \{\Theta_S \leq \sym(S) \mid S \in \mathcal{S}_1\}$, and
 \item $\Cf = \{\tau_{S_1,S_2}\colon S_1 \rightarrow S_2 \mid S_1 \in
 \mathcal{S}_1,S_2 \in \mathcal{S}_2 \text{ such that } \chi_1(S_1) =
 \chi_2(S_2)\}$ such that
  \begin{enumerate}
   \item every $\tau_{S_1,S_2} \in \Cf$ is bijective, and
   \item for every color $i \in \mathbb{N}$ and every $S_1,S_1' \in \chi_1^{-1}(i)$ and $S_2,S_2' \in \chi_2^{-1}(i)$ and $\theta \in \Theta_{S_1},\theta' \in \Theta_{S_1'}$ it holds that 
    \begin{equation}
     \label{eq:n-property}\tag{N}
     \theta\tau_{S_1,S_2}(\tau_{S_1',S_2})^{-1}\theta'\tau_{S_1',S_2'} \in \Theta_{S_1}\tau_{S_1',S_2}.
    \end{equation}
  \end{enumerate}
\end{enumerate}
The instance is \emph{solvable} if there is a bijective mapping $\phi\colon V_1 \rightarrow V_2$ such that
\begin{enumerate}
 \item $S \in \mathcal{S}_1$ if and only if $S^{\phi} \in \mathcal{S}_2$ for all $S \in \Pow(V_1)$,
 \item $\chi_1(S) = \chi_2(S^{\phi})$ for all $S \in \mathcal{S}_1$, and
 \item for every $S \in \mathcal{S}_1$ it holds that $\phi|_S \in \Theta_S\tau_{S,S^{\phi}}$.
\end{enumerate}
In this case we call $\phi$ an \emph{isomorphism} of the instance $\mathcal{I}$.
Moreover, let $\iso(\mathcal{I})$ be the set of all isomorphisms of
$\mathcal{I}$.
Observe that property (\ref{eq:n-property}) describes a consisteny condition: if we can use $\sigma_1$ to map $S_1$ to $S_2$, $\sigma_2$ to map $S_1'$ to $S_2$, and $\sigma_3$ to map $S_1'$ to $S_2'$, then mapping $\sigma_1\sigma_2^{-1}\sigma_2$ can be used to map $S_1$ to $S_2'$.
As a result the set of all isomorphisms of the instance $\mathcal{I}$ forms a coset,
that is $\iso(\mathcal{I}) = \Theta \phi$ for some $\Theta \leq \sym(V_1)$ and $\phi \in \iso(\mathcal{I})$.

In the application in the main recursive algorithm, the sets $V_i = \beta(t_i)$, the hyperedges $\mathcal{S}_i$ are the adhesion sets of $t_i$ (and we will also encode the edges appearing in the bag in this way),
and the cosets $\Theta_{S_1}\tau_{S_1,S_2}$ tell us which mappings between the adhesion sets $S_1$ and $S_2$ extend to an isomorphism between the corresponding subgraphs.
The colorings $\chi_1$ are used to indicate which subgraphs can not be mapped to each other (and also to distinguish between the adhesion sets and egdes of the bags which will both appear in the set of hyperedges).

The next Lemma gives us one of the central subroutines for our recursive algorithm.

\begin{lem}
 \label{lem:big_iso}
 Let $\mathcal{I} = (V_1,V_2,\mathcal{S}_1,\mathcal{S}_2,\chi_1,\chi_2,\CF, \Cf)$ be an instance of coset-hyper\-graph-iso\-mor\-phism.
 Moreover, suppose there are isomorphism-invariant rooted graphs $H_1 = (W_1,E_1,r_1)$ and $H_2 = (W_2,E_2,r_2)$ such that
 \begin{enumerate}
  \item $V_i \cup \mathcal{S}_i \subseteq W_i$,
  \item $\fdeg(w) \leq d$ for all $w \in W_i$,
  \item $\depth(H_i) \leq h$, and
  \item $|S| \leq d$ for all $S \in \mathcal{S}_i$
 \end{enumerate}
 for all $i \in \{1,2\}$.
 Then a representation for the set $\iso(\mathcal{I})$ can be computed in time
 $2^{\CO(h \cdot \log^{c}d)}$ for some constant $c$.
\end{lem}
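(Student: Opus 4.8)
The plan is to reduce an instance of coset-hypergraph-isomorphism to a bounded number of instances of computing isomorphisms between \emph{expanded $d$-ary trees} (or, more concretely, rooted graphs of bounded forward-degree and depth with some extra structure attached), and then invoke the result of \cite{bounded-degree-small-diameter}. First I would observe that since $V_i \cup \CS_i \subseteq W_i$ and $H_i$ has forward-degree at most $d$ and depth at most $h$, we have $|W_i| \leq (d+1)^h$; in particular the whole instance has size bounded by $(d+1)^h$, which is already consistent with the target running time $2^{\CO(h\log^c d)}$. The key structural point is that an isomorphism $\phi$ of the instance $\CI$ must, by conditions (1)--(2) in the definition of solvability, map $\CS_1$ to $\CS_2$ respecting colors, and by condition (3) its restriction to each hyperedge $S$ must lie in the prescribed coset $\Theta_S\tau_{S,S^\phi}$. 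So the object to be computed is $\iso(\CI)=\Theta\phi$, a coset inside the set of color- and hyperedge-respecting bijections $V_1\to V_2$.

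The main steps I would carry out are as follows. \textbf{(i)} Attach the combinatorial data to the graphs: build from $H_1,H_2$ new rooted graphs $\hat H_1,\hat H_2$ of still-bounded forward-degree and depth $\CO(h)$ in which the color $\chi_i(S)$ of each hyperedge vertex $S$, the membership edges between $v\in V_i$ and the hyperedges $S\ni v$ (these are already encoded via $H_i$ if $v,S\in W_i$ are adjacent, but in general we can add a layer of bounded degree since $|S|\leq d$), and — crucially — the cosets $\Theta_S$ and the transporters $\tau_{S_1,S_2}$ are encoded as additional isomorphism-invariant gadgetry hanging off the vertices $S$. Because $|S|\leq d$ and $\Theta_S\leq\sym(S)$, a canonical generating set (or the coset itself viewed as a colored structure on $S\times S$) can be attached locally without blowing up the forward-degree beyond $d^{\CO(1)}$ and the depth beyond $\CO(h)$. \textbf{(ii)} Set up the correct notion of isomorphism between these enriched trees so that an isomorphism of the enriched rooted graphs is exactly an isomorphism of $\CI$: this is where property (\ref{eq:n-property}) is used, guaranteeing that the local coset constraints are mutually consistent and that the set of isomorphisms is itself a coset $\Theta\phi$, matching the output format. \textbf{(iii)} Apply the algorithm of \cite{bounded-degree-small-diameter} (which in turn rests on Babai's machinery \cite{DBLP:conf/stoc/Babai16}) to compute $\iso(\hat H_1,\hat H_2)$ in time $2^{\CO(h\cdot\log^{c}d)}$, and then post-process: project the computed coset down to $V_1$, intersect with the hyperedge/color constraints if any were not absorbed into the gadget, and return a representation of $\iso(\CI)$.

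The part I expect to be the main obstacle is \textbf{step (i)--(ii)}: faithfully encoding the transporters $\tau_{S_1,S_2}$ and the groups $\Theta_S$ into a bounded-degree, bounded-depth rooted graph in a way that (a) is isomorphism-invariant, (b) does not inflate the forward-degree or depth past the allowed $d^{\CO(1)}$ and $\CO(h)$ thresholds, and (c) makes the abstract consistency condition (\ref{eq:n-property}) translate exactly into "there is a global isomorphism of the enriched structure." In particular one must be careful that two hyperedges $S,S'$ of the same color, whose local cosets $\Theta_S,\Theta_{S'}$ are glued via $\tau_{S,S'}$, are forced by the gadget to be mapped consistently; this is precisely the content of (\ref{eq:n-property}), and verifying that the gadget neither over- nor under-constrains is the delicate bookkeeping. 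Once the encoding is in place, invoking \cite{bounded-degree-small-diameter} as a black box is routine, and the running time bookkeeping follows from $|W_i|\leq(d+1)^h$ together with the stated complexity of that algorithm.
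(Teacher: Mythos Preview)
Your high-level shape---reduce to the problem solved in \cite{bounded-degree-small-diameter} (Theorem~\ref{theo:neuen})---is correct, but your concrete plan for step~(i)--(ii) does not work as stated, and the paper proceeds quite differently there.

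The gap is in your proposal to ``encode the cosets $\Theta_S$ and the transporters $\tau_{S_1,S_2}$ as additional isomorphism-invariant gadgetry hanging off the vertices $S$.'' Two problems. First, the transporters $\tau_{S_1,S_2}\colon S_1\to S_2$ are \emph{cross-instance} maps from $V_1$ to $V_2$; they cannot be attached to $H_1$ or to $H_2$ separately, and which coset $\Theta_S\tau_{S,S'}$ constrains $S$ depends on where the global isomorphism sends $S$. Second, even for $\Theta_S$ alone, there is no evident way to realise an \emph{arbitrary} subgroup of $\sym(d)$ as the automorphism group of a gadget of forward-degree $d^{\CO(1)}$ and depth $\CO(1)$; a canonical generating set does not by itself give you such a gadget. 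So the ``fully gadgetized'' route does not go through.

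What the paper does instead exploits the input format of Theorem~\ref{theo:neuen}, which takes not just two expanded trees but also a coset $\Gamma\gamma$ to search inside. Concretely: (a) extend $H_i$ only mildly, adding leaves $(S,v)$ for $v\in S$ below each hyperedge vertex $S$; (b) unroll the resulting rooted graph into a genuine $2d$-ary \emph{tree} $T_i$ whose nodes are root-to-vertex branches; (c) observe that $\aut(T_i,\chi_i')$ acts as a full symmetric group, \emph{independently}, on each block $A_1(\bar w)$ of hyperedge-leaves, so one can in polynomial time restrict to the subcoset $\Gamma_1'\gamma_{1\to2}'$ of those $\gamma$ whose induced map on each $A_1(\bar w)$ lies in $\Theta_{\last(\bar w)}\tau_{\last(\bar w),\last(\bar w^{\gamma})}$---property~(\ref{eq:n-property}) is used here to certify this set is a coset; (d) only now invoke Theorem~\ref{theo:neuen} with this precomputed coset and with leaf-pair colourings $C_i$ that record when two branches end at the same underlying vertex of $V_i$, thereby enforcing the global consistency that your plan delegated to gadgets. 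In short: the coset constraints enter through the $\Gamma\gamma$ argument of the black box, not through the graph, and the ``expanded'' part of the expanded tree (the leaf-pair colouring) is what carries the consistency, not a local construction at each $S$.
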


Here, isomorphism-invariant means that for every isomorphism $\phi \in \iso(\mathcal{I})$ there is an isomorphism $\phi_H$ from $H_1$ to $H_2$
such that $v^{\phi} = v^{\phi_H}$ for all $v \in V_1$ and $\{v^{\phi} \mid v \in S\} = S^{\phi_H}$ for all $S \in \mathcal{S}_1$.

The proof of this lemma is based on the following theorem.
A rooted tree $T = (V,E,r)$ is \emph{$d$-ary} if every node has at most $d$ children.
An expanded rooted tree is a tuple $(T,C)$ where $T=(V,E,r)$ is a rooted tree and $C \colon L(T)^{2}\rightarrow \rg(C)$ is a coloring of pairs of leaves of $T$ ($L(T)$ denotes the set of leaves of $T$).
Isomorphisms between expanded trees $(T,C)$ and $(T,C')$ are required to respect the colorings $C$ and $C'$.

\begin{theo}[\cite{bounded-degree-small-diameter}]\label{theo:neuen}
 Let $(T,C)$ and $(T',C')$ be two expanded $d$-ary trees and let $\Gamma \leq \aut(T)$ and $\gamma \in \iso(T,T')$.
 Then one can compute a representation of the set $\{\phi \in \Gamma\gamma \mid (T,C)^{\phi} = (T',C')\}$ in time $n^{\mathcal{O}(\log^{c}d)}$ for some constant $c$.
\end{theo}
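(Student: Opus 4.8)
The plan is to recast Theorem~\ref{theo:neuen} as a String-Isomorphism problem for a group taken from the class $\Gamma_d$, and then to run Babai's quasipolynomial machinery while tracking the dependence on $d$ rather than on $n$. The first observation is structural: the automorphism group $\aut(T)$ of a $d$-ary rooted tree is an iterated subgroup of a wreath product of symmetric groups of degree at most $d$, assembled along $T$ itself; hence $\aut(T)\in\Gamma_d$, and so is every subgroup $\Gamma\le\aut(T)$. Since a tree automorphism is determined by its action on the leaves $L(T)$, computing $\{\phi\in\Gamma\gamma\mid(T,C)^\phi=(T',C')\}$ is precisely String Isomorphism: the ``string'' is the leaf-pair coloring $C\colon L(T)^2\to\rg(C)$, the ambient group is $\Gamma$ acting on $L(T)^2$ through its action on $L(T)$, and the target is $C'$. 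What we may exploit beyond the generic quasipolynomial bound is that the group is $\Gamma_d$ and that the recursion tree witnessing this membership is $T$ itself, of depth $h$ and branching at most $d$.

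First I would set up the Luks-style recursion along $T$. At an internal node $v$ with children $v_1,\dots,v_m$, where $m\le d$, the current coset of candidate maps acts on the orbits of children: distinct orbits are processed sequentially with the partial cosets composed; an imprimitive action on a single orbit is reduced to the induced action on a block system, again a $\Gamma_d$-instance on fewer points; and the only genuinely hard case is a primitive action on an orbit of $m\le d$ children, where naive branching over the $m!$ options would cost $d^{\CO(d)}$. A bookkeeping issue must be handled in parallel: the coloring $C$ lives on \emph{pairs} of leaves, so when we descend into a subtree $T_v$ the relevant pairs split into those internal to $T_v$ (a genuine sub-instance), those crossing between $T_v$ and the rest, and those outside. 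The crossing and outside information is folded into isomorphism-invariant vertex colorings attached to the nodes of $T_v$ (in particular to its root), refined as we go. This keeps every sub-instance a color-refined expanded-tree instance, enlarges the palette only polynomially, and preserves the $\Gamma_d$-structure of the groups.

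The hard part is the primitive case, and I would dispatch it as Babai does. By the CFSG-based description of primitive groups with a large alternating composition factor, a ``giant-like'' primitive action of a $\Gamma_d$-group reduces, at the cost of an overhead quasipolynomial in $d$, to the natural action of $\sym(m)$ or $A_m$ on $m\le d$ points (the Split-or-Johnson / Johnson-scheme reduction). For the natural action one computes \emph{local certificates}: for test sets $B\subseteq[m]$ of size roughly $\log d$ one decides, by brute force over $\sym(B)$ combined with recursive calls on the subtrees below the children indexed by $B$ --- a computation quasipolynomial in $d$ --- whether $B$ is affected. Aggregation then forces a dichotomy: either a constant fraction of $[m]$ is affected and one extracts an isomorphism-invariant partition of $[m]$ whose classes are a constant factor smaller, so the recursion proceeds on a geometrically shrunken domain; or almost all of $[m]$ is unaffected, whereupon the Unaffected Stabilizers Theorem supplies a large subgroup acting as a giant on only a small affected set, into which one may descend with few branches. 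In either case a step makes multiplicative progress, so the recursion has depth $\CO(\log n)$; since every local computation is carried out on objects of size $\CO(\log d)$ or $d$, and every composition factor that ever appears is a section of $\sym(d)$, the standard Babai accounting (a multiplicative, not additive, cost recurrence) yields the claimed bound $n^{\CO(\log^{c}d)}$, and the output is a coset because every structure produced along the way --- block systems, Johnson reductions, canonical partitions, local certificates --- is isomorphism-invariant.

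I expect the main obstacle to be exactly this last paragraph: making the local-certificate, aggregation, and unaffected-stabilizer arguments go through with the exponent governed by $\log d$ rather than $\log n$. This needs the Johnson-scheme reduction specialized to giant primitive $\Gamma_d$-groups, a choice of test-set size keeping the brute-force part quasipolynomial in $d$ instead of in $n$, and a careful global running-time analysis over a recursion of depth $\CO(\log n)$ that nonetheless only ``pays $d$'' at each level. A secondary but real difficulty is the pair-coloring bookkeeping: one must check that folding the crossing constraints into node colorings is itself isomorphism-invariant and does not disturb the $\Gamma_d$-structure, so that the outer Luks recursion and the inner Babai machinery compose cleanly.
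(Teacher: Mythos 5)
This statement is not proved in the paper at all: it is imported verbatim from \cite{bounded-degree-small-diameter} and used as a black box (the surrounding text says explicitly that the result is due to three of the authors and itself uses Babai's algorithm as a subroutine). So there is no in-paper proof to compare against; what can be judged is whether your outline matches the method of the cited work. At that level it does: recasting the expanded-tree problem as string isomorphism over the pair set $L(T)^2$ for a group $\Gamma\le\aut(T)\in\Gamma_d$, running the Luks orbit/block recursion along $T$, and handling the primitive ``giant'' case by Babai's local-certificates machinery with the test-set size and the Cameron/Johnson reductions recalibrated so that the exponent is governed by $\log d$ rather than $\log n$ is precisely the strategy of that reference.

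As a \emph{proof}, however, your proposal has a genuine gap, and you say so yourself: the entire quantitative core --- the affected/unaffected dichotomy for $\Gamma_d$-groups, the Unaffected Stabilizers Theorem in this setting, the aggregation argument producing multiplicative progress, and the cost recurrence that actually yields $n^{\CO(\log^c d)}$ --- is described as an expected obstacle rather than carried out. These are not routine adaptations; they constitute the substance of the cited paper. Likewise the pair-coloring bookkeeping (folding cross-subtree constraints of $C$ into node colorings isomorphism-invariantly without leaving the class of instances) is asserted but not verified. So the proposal should be read as an accurate roadmap to the external reference, not as a self-contained argument establishing Theorem~\ref{theo:neuen}.
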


\begin{proof}[Proof of Lemma \ref{lem:big_iso}]
 For $i \in \{1,2\}$ let $H_i'$ be the graph obtained from $H_i$ by adding, for each $S_i \in \mathcal{S}_i$, vertices $(S_i,v)$ for every $v \in S_i$ connected to the vertex $S_i$.
 Note that $\depth(H_i') \leq h+1$ and $\fdeg(w) \leq 2d$ for all $w \in V(H_i')$.
 Let $r_i$ be the root of $H_i'$.
 
 A \emph{branch} of the graph $H_i'$ is a sequence of vertices $(v_0,\dots,v_k)$
 such that $v_{j-1}v_{j} \in E(H_i')$ for every $j \in [k]$ and $\dist(r_i,v_j) = j$ for all $j \in \{0,\dots,k\}$ (in particular $v_0 = r_i$).
 We construct a rooted $2d$-ary tree $T_i$ as follows.
 Let $V(T_i)$ be the set of all branches of $H_i'$ and define $E(T_i)$ to be the set of all pairs $\bar v\,\bar v'$ such that $\bar v = (v_0,\dots,v_k)$ and $\bar v' = (v_0,\dots,v_k,v_{k+1})$.
 The root of $T_i$ is $(v_0)$ where $v_0$ is the root of $H_i'$.
 It is easy to check that $|V(T_i)| \leq (2d+1)^{h+1}$.

 For a sequence of vertices $\bar v = (v_1,\dots,v_k)$ let $\last(\bar v) = v_k$ be the last entry of the tuple $\bar v$.
 Let $\chi_i^{'} \colon V(T_i) \rightarrow \mathbb{N}$ be the coloring defined by $\chi_i'(\bar v) = 1$ for all $\bar v$ where $\last(\bar v) \in V_i$,
 $\chi_i'(\bar v) = 2$ for all $\bar v$ where $\last(\bar v) \in V(H_i') \setminus V(H_i)$,
 $\chi_i'(\bar v) = 3 + \chi_i(\last(\bar v))$ for all $\bar v$ where $\last(\bar v) \in \mathcal{S}_i$, and $\chi'_i(\bar v) = 0$ for all other tuples.
 Let $\Gamma_i := \aut(T_i,\chi_i')$.
 Also let $\gamma_{1 \rightarrow 2}$ be an isomorphism from $(T_1,\chi_1')$ to $(T_2,\chi_2')$ (if no such isomorphism exists then $\iso(\mathcal{I}) = \emptyset$ and we are done).
 
 Let $A_i = \{\bar v \in V(T_i) \mid \last(\bar v) \in V(H_i') \setminus V(H_i)\}$.
 Note that every element $\bar v \in A_i$ is a leaf of $T_i$.
 Also note that $A_i$ is invariant under $\Gamma_i$, that is, $A_i^{\gamma} = A_i$ for all $\gamma \in \Gamma_i$.
 For $\bar v \in A_i$ let $\parent(\bar v)$ be the parent node of $\bar v$ in the tree $T_i$.
 Observe that $\last(\parent(\bar v)) \in \mathcal{S}_i$ for every $\bar v \in A_i$.
 Let $B_i = \{\parent(\bar v) \mid \bar v \in A_i\}$.
 For $\bar w \in B_i$ let $A_i(\bar w) = \{\bar v \in A_i \mid \parent(\bar v) = \bar w\}$
 and let $A_i^{*}(\bar w) = \{\last(\bar v) \mid \bar v \in A_i(\bar w)\} = \{S\} \times S$ where $S = \last(\bar w)$.
 
 Now let $\gamma \in \Gamma_1\gamma_{1 \rightarrow 2}$ be an isomorphism from $(T_1,\chi_1')$ to $(T_2,\chi_2')$.
 Then, for every $\bar w \in B_1$ the isomorphism $\gamma$ induces a bijection $\gamma^{\bar w}\colon \last(\bar w) \rightarrow \last(\bar w^{\gamma})$ which is induced by how $\gamma$ maps $A_1(\bar w)$ to $A_2(\bar w^{\gamma})$.
 Now let
 \[\Gamma_1'\gamma_{1 \rightarrow 2}' = \{\gamma \in \Gamma_1\gamma_{1 \rightarrow 2} \mid \forall \bar w \in B_1\colon \gamma^{\bar w} \in \Theta_{\last(\bar w)}\tau_{\last(\bar w),\last(\bar w^{\gamma})}\}.\]
 Observe that the set indeed forms a coset due to property (\ref{eq:n-property}).
 Moreover, a representation for $\Gamma_1'\gamma_{1 \rightarrow 2}'$ can be computed in time polynomial in $|V(T_1)|$ since the group $\Gamma_1$ acts independently as a symmetric group on all the sets $A_1(\bar w)$.
 
 Observe that, up to now, we have not enforced any consistency between the maps
 $\gamma^{\bar w}$.
 Indeed, there may be elements $\bar w, \bar w' \in B_1$ such that $\last(\bar w) = \last(\bar w')$ but $\gamma^{\bar w} \neq \gamma^{\bar w'}$.
 To finish the proof we need to enforce such a consistency property.
 Then every isomorphism $\gamma \in \Gamma_1'\gamma_{1 \rightarrow 2}'$ that fulfills the consistency property naturally translates back to an isomorphism of $\mathcal{I}$.
 
 For $v \in V(H_i')$ we define the label $\ell(v) = v$ if $v \in V(H_i)$ and $\ell(v) = w$ if $v = (S,w) \in V(H_i') \setminus V(H_i)$.
 Also we define colorings $C_i\colon L(T_i)^{2} \rightarrow \mathbb{N}$ in such a way that for all
 $\bar v=(v_0,\ldots,v_k) ,\bar v'=(v_0',\ldots,v_{k'}') \in L(T_i)$,
 $\bar w=(w_0,\ldots,w_\ell), \bar w'=(w_0',\ldots,w_{\ell'}')\in L(T_j)$
 \begin{align}
  \label{eq:1}
  C_i(\bar v,\bar v')=C_j(\bar w,\bar w')\iff &k=\ell\text{ and } k'=\ell'\\
    \label{eq:2}
    &\text{and }\forall i\in[k],j\in[k']:\big(\ell(v_i)=\ell(v_j')\iff \ell(w_i)=\ell(w_j')\big)\\
    \label{eq:3}
    &\text{and }\bar v \in A_i \iff \bar w \in A_j \text{ and } \bar v' \in A_i \iff \bar w' \in A_j.
 \end{align}
 Now let
 \[\Gamma_1^{*}\gamma_{1 \rightarrow 2}^{*} = \{\gamma \in \Gamma_1'\gamma_{1 \rightarrow 2}' \mid (T_1,C_1)^{\gamma} = (T_2,C_2)\}.\]
 By Theorem \ref{theo:neuen} a representation for this coset can be computed in time $2^{\mathcal{O}(h \cdot \log^{c}d)}$ for some comstant $c$.
 We claim that every $\gamma \in \Gamma_1^{*}\gamma_{1 \rightarrow 2}^{*}$ naturally defines an isomorphism of $\mathcal{I}$ and conversely, every isomorphism of $\mathcal{I}$ can be described like this.
 Let $V_i' = \{\bar v \in V(T_i) \mid \last(\bar v) \in V_i\}$.
 Let $\bar v,\bar v' \in V_i'$ such that $\last(\bar v) = \last(\bar v')$ and let $\gamma \in \Gamma_1^{*}\gamma_{1 \rightarrow 2}^{*}$.
 Then $\last(\bar v^{\gamma}) = \last((\bar v')^{\gamma})$ by property (\ref{eq:2}) and (\ref{eq:3}) of the colorings $C_i$.
 Hence, every $\gamma \in \Gamma_1^{*}\gamma_{1 \rightarrow 2}^{*}$ naturally defines a bijection $\gamma^{V_1}\colon V_1 \rightarrow V_2$.
 
 Additionally, for $\bar v \in A_i$ and $\bar v' \in V_i'$ such that
 $\ell(\last(\bar v)) = \last(\bar v')$ and $\gamma \in \Gamma_1^{*}\gamma_{1 \rightarrow 2}^{*}$ it holds that $\last(\bar v^{\gamma}) = \last((\bar v')^{\gamma})$ by the same argument.
 With this, we obtain that $\gamma^{V_1}$ is indeed an isomorphism of $\mathcal{I}$.
 The hyperedges $\mathcal{S}_1$ and $\mathcal{S}_2$ are encoded in the tree structure (and they have to be mapped to each other by the coloring $\chi_i'$)
 and the coloring of hyperedges is encoded in the coloring $\chi_i'$ which has to be preserved.
 Also, when mapping an element $\bar w \in B_1$ to another element $\bar w^{\gamma} \in B_2$ the children have to mapped to each other in such a way that the corresponding bijection lies in the corresponding coset.
 So overall $\gamma^{V_1}$ is an isomorphism of $\mathcal{I}$.
 
 On the other hand, every isomorphism $\phi \in \iso(\mathcal{I})$ extends naturally to an isomorphism from $H_1$ to $H_2$ and thus,
 it also extends to an isomorphism from $H_1'$ to $H_2'$ and from $T_1$ to $T_2$, since all relevant objects are defined in an isomorphism-invariant way.
 Hence, from a representation for $\Gamma_1^{*}\gamma_{1 \rightarrow 2}^{*}$ we can easily compute a representation for $\iso(\mathcal{I})$.
\end{proof}

Looking at the properties of the tree decompositions computed in Theorem \ref{theo:cliqueDecomposition} and \ref{theo:labelDecomposition} we have for every node $t$ that either the adhesion sets to the children are all equal or they are all distinct.
Up to this point we have only considered the problem that all adhesion sets are distinct (i.e.\ the coset-hypergraph-isomorphism problem).
Next we consider the case that all adhesion sets are equal.
Towards this end we define the following variant.

An instance of \emph{multiple-colored-coset-isomorphism}
is a $6$-tuple
$\mathcal{I} = (V_1,V_2,\chi_1,\chi_2,\CF, \Cf)$ such that
\begin{enumerate}
 \item $\chi_i\colon [t] \rightarrow \mathbb{N}$ is a coloring,
 \item $\CF = \{\Theta_i \leq \sym(V) \mid i\in[t]\}$, and
 \item $\Cf = \{\tau_{i,j}\colon V_1 \rightarrow V_2 \mid i,j\in[t]
 \text{ such that } \chi_1(i) = \chi_2(j)\}$ such that
  \begin{enumerate}
   \item every $\tau_{i,j} \in \Cf$ is bijective, and
   \item for every color $i \in \mathbb{N}$ and every $j_1,j_1'\in
   \chi_1^{-1}(i)$ and $j_2,j_2' \in \chi_2^{-1}(i)$ and $\theta \in
   \Theta_{j_1},\theta' \in \Theta_{j_1'}$ it holds that
    \begin{equation}
     \label{eq:n-property-small}\tag{N2}
     \theta\tau_{j_1,j_2}\tau_{j_1',j_2}^{-1}\theta'\tau_{j_1',j_2'} \in
     \Theta_{j_1}\tau_{j_1',j_2}.
    \end{equation}
  \end{enumerate}
\end{enumerate}
The instance is \emph{solvable} if there is a bijective mapping $\phi\colon V_1 \rightarrow V_2$
and a $\pi\in\sym(t)$ such that
\begin{enumerate}
 \item $\chi_1(i) = \chi_2(\pi(i))$ for all $i \in [t]$, and
 \item for every $i \in [t]$ it holds that $\phi\in
 \Theta_i\tau_{i,\pi(i)}$.
\end{enumerate}
The set $\iso(\CI)$ is defined analogously.

\begin{lem}
 \label{lem:small_iso}
 Let $\mathcal{I} = (V_1,V_2,\chi_1,\chi_2,\CF, \Cf)$
 be an instance of mul\-ti\-ple-col\-ored-coset-iso\-mor\-phism.
 Then a representation for the set $\iso(\mathcal{I})$ can be computed in
 time
 \[\min((|V_1|!|\CF|)^{\CO(1)},|\CF|!^{\CO(1)}2^{\CO(\log^c(|V_1|))})\]
 for some constant $c$.
\end{lem}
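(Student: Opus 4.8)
The plan is to supply two algorithms, one for each term of the minimum, and to run whichever is faster. We may assume $|V_1|=|V_2|$ and $|\chi_1^{-1}(c)|=|\chi_2^{-1}(c)|$ for every color $c$, since otherwise no admissible pair $(\phi,\pi)$ exists and $\iso(\CI)=\emptyset$. The organizing identity is
\[
  \iso(\CI)\;=\;\bigcup_{\pi\in P}\;\bigcap_{i\in[t]}\Theta_i\tau_{i,\pi(i)},
\]
where $P$ is the set of the at most $|\CF|!=t!$ permutations $\pi\in\sym([t])$ with $\chi_1(i)=\chi_2(\pi(i))$ for all $i$; indeed $\phi\in\iso(\CI)$ exactly when $\phi$ lies in every $\Theta_i\tau_{i,\pi(i)}$ for some admissible witness $\pi$. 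Each inner intersection is empty or a coset of $\bigcap_{i\in[t]}\Theta_i\le\sym(V_1)$, so $\iso(\CI)$ is a union of at most $t!$ such cosets; moreover, exactly as in the discussion following~\eqref{eq:n-property}, the consistency condition~\eqref{eq:n-property-small} guarantees that this union is empty or a single coset, and once the union is available it is routine to fold it into one coset representation (fix an element $\phi_0$ of the union and run the Schreier-Sims algorithm on the group obtained by translating the union by $\phi_0^{-1}$); alternatively one may simply output the union of at most $|\CF|!$ cosets as the representation. It remains to produce the union within each budget.

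\emph{The bound $(|V_1|!\,|\CF|)^{\CO(1)}$.} Enumerate all $|V_1|!$ bijections $\phi\colon V_1\to V_2$. For a fixed $\phi$ build the bipartite graph $B_\phi$ with both sides equal to $[t]$, joining $i$ to $j$ precisely when $\chi_1(i)=\chi_2(j)$ and $\phi\tau_{i,j}^{-1}\in\Theta_i$; the membership test runs in polynomial time by Schreier-Sims and there are at most $t^2$ pairs. Every edge of $B_\phi$ respects colors by construction, so an admissible witness for $\phi$ is exactly a perfect matching of $B_\phi$, and $\phi\in\iso(\CI)$ iff $B_\phi$ has one, which is checked in polynomial time. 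Collecting all $\phi$ that pass the test lists $\iso(\CI)$ explicitly in time $|V_1|!\cdot\poly{|V_1|,t}$, from which a representation is extracted as above.

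\emph{The bound $|\CF|!^{\CO(1)}\,2^{\CO(\log^c|V_1|)}$.} Enumerate instead the at most $t!$ permutations $\pi\in P$. For each $\pi$ compute $\bigcap_{i\in[t]}\Theta_i\tau_{i,\pi(i)}$ by intersecting the $t$ cosets one at a time; each step is a coset intersection of two cosets in $\sym(V_1)$, which is an instance of String Isomorphism (for the direct product of the two subgroups acting on a domain of size $\CO(|V_1|^2)$, with the two strings encoding the diagonal and its twist by the connecting permutation) and is therefore solvable in time $2^{\CO(\log^c|V_1|)}$ by Babai's algorithm~\cite{DBLP:conf/stoc/Babai16}, returning again $\emptyset$ or a coset. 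Thus each inner intersection costs $t\cdot 2^{\CO(\log^c|V_1|)}$; taking the union over $\pi\in P$ and folding as above gives $\iso(\CI)$ in total time $t!^{\CO(1)}\cdot 2^{\CO(\log^c|V_1|)}=|\CF|!^{\CO(1)}\,2^{\CO(\log^c|V_1|)}$.

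The only real obstacle is the coset-intersection subroutine driving the second bound: unlike the first algorithm it cannot afford to scan $\sym(V_1)$, and since coset intersection is already graph-isomorphism-hard it genuinely needs Babai's String Isomorphism machinery to stay quasipolynomial in $|V_1|$. Everything else — the perfect-matching characterization of when a fixed $\phi$ extends, the coset structure of $\iso(\CI)$ coming from~\eqref{eq:n-property-small}, and reassembling a union of $\CO(t!)$ cosets into one representation — is standard permutation-group bookkeeping.
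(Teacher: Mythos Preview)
Your proof is correct and follows essentially the same approach as the paper: for the first bound you iterate over all $\phi$ and reduce the existence of a witness $\pi$ to a bipartite matching, and for the second bound you iterate over all color-respecting $\pi$ and compute $\bigcap_i\Theta_i\tau_{i,\pi(i)}$ via iterated coset intersection using Babai's quasipolynomial algorithm. Your write-up is in fact a bit more careful than the paper's, spelling out the matching graph $B_\phi$, the Schreier--Sims folding of the union into a single coset, and the reduction of coset intersection to String Isomorphism.
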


\begin{proof}
The first run time can be achieved by brute force in the following way.
We iterate through all $\phi\colon V_1 \rightarrow V_2$
and check in polynomial time
if a $\pi\in\sym(t)$ exists that satisfies (1) and (2).
(This can be done by computing, for every $i \in [t]$, the set $A_i$ of those $j \in [t]$ such that $\phi \in \Theta_i\tau_{i,j}$.
Then one needs to find a permutation $\pi \in \sym(t)$ such that $\pi(i) \in A_i$ for all $i \in [t]$. This can be interpreted as a matching problem.)
The results is the union of all these $\phi$.

However, for the second run time we
iterate through all $\pi\in\sym(t)$ satisfying (1) and compute
all corresponding $\phi\colon V_1 \rightarrow V_2$
satisfying (2)
which in turn can be done by
iterated coset-intersection.
The result is
$\iso(\CI)=\bigcup_{\pi\in\sym(t)\text{ satisfying (1)}}
\bigcap_{i\in[t]}\Theta_i\tau_{i,\pi(i)}$.
Coset-intersection can be done in quasi-polynomial time
\cite{DBLP:conf/stoc/Babai16}.
\end{proof}

\section{The isomorphism algorithm}\label{sec:iso:alg}

\begin{lem}
 \label{lem:structure-cliques}
 Let $G = (D,E)$ be a graph of tree width at most $k$ and let $H$ be a rooted
 graph such that $D \subseteq V(H)$.
 Then, one can compute an isomorphism-invariant rooted graph $H'$ such that
 \begin{enumerate}
  \item $H$ is an induced subgraph of $H'$,
  \item $\fdeg_{H'}(w) \leq \max\{d,k+1\} + 1$ for all $w \in V(H')$ where $d = \max_{w \in V(H)}\fdeg_H(w)$,
  \item $\depth(H') \leq \depth(H) + k + 2$, and
  \item for every clique $C \subseteq D$ there is a corresponding vertex $C \in
  V(H')$
 \end{enumerate}
 in time $2^{\mathcal{O}(k)} \cdot |V(H)|^{\CO(1)}$.
\end{lem}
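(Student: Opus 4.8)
The plan is to exploit that a graph of tree width at most $k$ is $k$-degenerate, hence admits an isomorphism-invariant structure recording all its cliques in bounded depth and degree. First I would observe that any clique $C$ in $G$ has size at most $k+1$, and that in a degeneracy ordering every clique has a unique vertex that appears last; but to stay isomorphism-invariant I cannot fix an ordering. Instead I would enumerate cliques by a canonical ``peeling'' process: repeatedly identify the isomorphism-invariant set of vertices of minimum degree in the current subgraph, but since that set need not be a single vertex, the cleaner route is to build $H'$ layer by layer on top of $H$ by attaching, for each clique $C \subseteq D$, a path of length at most $k+1$ from (a copy of) one of its vertices, threading through the sub-cliques $C' \subsetneq C$, so that $C$ becomes a vertex at depth $\depth(H) + |C| \le \depth(H) + k + 1$; adding the vertices of $C$ as children of the vertex $C$ (or connecting $C$ to the already-present vertices of $D \subseteq V(H)$ through one extra layer) gives the bound $\depth(H') \le \depth(H) + k + 2$.

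More concretely, I would set $V(H') = V(H) \cup \{\,C \mid C \text{ is a clique of } G,\ |C|\ge 1\,\}$ where cliques of different sizes sit at different depths: a clique $C$ with $|C| = j$ is placed at depth $\depth(H) + j$, rooted (through the tree/forward structure) via the chain $C \setminus \{v\} \subseteq C$ for the appropriate vertices $v$. The edges of $H'$ would be those of $H$, together with an edge from each size-$j$ clique $C$ to each size-$(j{-}1)$ sub-clique $C' \subset C$ and to the relevant vertex of $H$ realizing $C \setminus C'$. Since each clique has at most $k+1$ vertices, each clique $C$ has at most $|C| \le k+1$ sub-cliques of size $|C|-1$, so the forward-degree contributed by these attachments is at most $k+1$; combined with the original forward-degree $d$ of vertices of $H$ and the one extra edge used to connect the lowest layer of clique-vertices back into $V(H)$, this yields $\fdeg_{H'}(w) \le \max\{d, k+1\} + 1$. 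The total number of cliques is $|V(H)|^{\CO(1)} \cdot 2^{\CO(k)}$ (indeed each clique is determined by a bag of a width-$k$ tree decomposition together with a subset, and there are $\CO(n)$ bags), so enumerating them and assembling $H'$ takes time $2^{\CO(k)}\cdot|V(H)|^{\CO(1)}$. Isomorphism-invariance is immediate because the whole construction refers only to the abstract structure $(G, H)$: every isomorphism $(G,H) \to (G',H')$ permutes cliques and hence extends canonically.

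The main obstacle I anticipate is making the degree bound tight, i.e.\ getting $\max\{d,k+1\}+1$ rather than something like $\max\{d,k+1\}+2$ or a multiplicative blow-up. The naive attachment ``connect $C$ to all its vertices and to all maximal proper sub-cliques'' would give forward-degree up to $|C| + (\text{number of sub-cliques})$, which is too large; the fix is to route the information about which vertices lie in $C$ through the chain of sub-cliques $C \supset C' \supset C'' \supset \cdots$ of successively smaller size, so that at each clique-node one only ``remembers'' one new vertex relative to its unique-size-smaller neighbors — but a clique of size $j$ has $j$ sub-cliques of size $j-1$, not one, so one must be careful that the $j$ downward edges plus the at-most-one upward connector indeed stay within the budget $k+2$. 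I would double-check this by noting that every clique-node of size $j \le k+1$ has its downward neighbors among size-$(j{-}1)$ cliques (at most $j \le k+1$ of them) plus possibly one edge down into $V(H)$ to anchor depth, for a total forward-degree at most $k+2 \le \max\{d,k+1\}+1$ when $d \le k+1$, and at most $d+1$ otherwise — exactly the claimed bound. The depth bound then follows since the longest added chain has length $k+1$ (from a size-$1$ clique up to a size-$(k{+}1)$ clique) plus the one anchoring edge, giving $\depth(H) + k + 2$.
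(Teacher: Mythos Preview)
Your construction has a genuine gap in the forward-degree bound. You place a size-$j$ clique at depth $\depth(H)+j$ and connect it to its size-$(j{-}1)$ sub-cliques; but those sub-cliques sit at \emph{smaller} depth, so these are backward edges. The forward neighbours of a clique $C'$ of size $j{-}1$ are the size-$j$ cliques containing $C'$, and that number is not controlled by $k$. Concretely, take $G=K_{1,n}$ (tree width $1$): the singleton clique $\{c\}$ at the centre is contained in $n$ edges, so the node $\{c\}$ would have forward-degree $n$, not $\le k+2=3$. Your ``obstacle'' paragraph repeats the same confusion, counting the $j$ sub-cliques of $C$ as downward neighbours when in your depth assignment they are upward.

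The fix is precisely the peeling idea you mentioned and then abandoned. The paper iteratively deletes the set of all vertices of degree at most $k$ (this is isomorphism-invariant and terminates because tree-width $\le k$ implies $k$-degeneracy). For each vertex $v$, let $A(v)$ be its closed neighbourhood in the round in which $v$ gets deleted; then $|A(v)|\le k+1$, and every clique $C$ satisfies $C\subseteq A(v)$ for some $v\in C$ (namely any $v\in C$ deleted in the earliest round). Now attach to each $v\in D\subseteq V(H)$ the subset lattice $L_{A(v)}$ rooted at $\emptyset$. This is a local construction: the forward-degree inside $L_{A(v)}$ is at most $|A(v)|\le k+1$, each $v$ gains exactly one new forward edge (to $\emptyset$), and the depth grows by at most $1+(k+1)=k+2$. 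The point is that you must tie each clique to a specific low-degree anchor vertex rather than building one global clique lattice.
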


Here, isomorphism invariant means that every isomorphism $\phi \in \iso(H_1,H_2)$, which naturally restricts to an isomorphism from $G_1$ to $G_2$,
can be extended to an isomorphism from $H_1'$ to $H_2'$.

\begin{proof}
 Let $G^{(0)} = G$ and for $i > 0$ let $G^{(i)} = G^{(i-1)}[V^{(i)}]$ where $V^{(i)} = \{v \in V(G^{(i-1)}) \mid \deg_{G^{(i-1)}}(v) > k\}$.
 Since $G$ has tree width at most $k$ it follows that there is some $i^{*} > 0$
 such that $G^{(i^{*})}$ is the empty graph.
 For every clique $C \subseteq D$ we define $i(C)$ to be the maximal $i \in \mathbb{N}$ such that $C \subseteq V(G^{(i)})$.
 Moreover, let $a(C) = C \setminus V(G^{(i(C)+1)})$.
 Observe that $a(C) \neq \emptyset$ and for every $v \in a(C)$ it holds $\deg_{G^{(i(C))}}(v) \leq k$ and $C \subseteq N_{G^{(i(C))}}[v]$.
 
 For $v \in D$ let $i(v)$ be the maximal $i \in \mathbb{N}$ such that $v \in V(G^{(i)})$.
 Let $A(v) = N_{G^{(i(v))}}[v]$.
 Note that $|A(v)| \leq k+1$.
 
 For a set $S$ we define the rooted graph $L_S$ to be the graph associated with the subset lattice of $S$, that is, $L_S = (\Pow(S),E_S,\emptyset)$ where $AB \in E_S$ if $A \subseteq B$ and $|B \setminus A| = 1$.
 Now let $H'$ be the following graph.
 For every $v \in D$ we attach the graph $L_{A(v)}$ to the vertex $v \in V(H)$, that is, the root node $\emptyset \in V(L_{A(v)})$ is connected to the vertex $v$.
 It can be easily verified that $H'$ satisfies all requirements.
 Observe that $C \subseteq A(v)$ for every $v \in a(C)$ where $C$ is a clique of $G$.
\end{proof}

\begin{theo}\label{theo:main-alg_iso}
Let $k\in\NN$ and let $G_1,G_2$ be connected graphs.
There is an algorithm that
either correctly concludes that $\tw(G_1) > k$,
or computes the set of isomorphisms $\iso(G_1,G_2)$
in time $2^{\CO(k\log^c k)}|V (G)|^{\CO(1)}$.
\end{theo}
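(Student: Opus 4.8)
The plan is to run the two-level decomposition of Sections~\ref{sec:clique:seps:and:improved:graphs}--\ref{sec:basic} and then compute $\iso(G_1,G_2)$ in a single bottom-up pass using the subroutines of Section~\ref{sec:iso:sub} at each node. First I would pass to $k$-improved graphs: by \cref{lem:imp} either conclude $\tw(G_1)>k$ or compute $G_1^k$ and $G_2^k$; color the edges of $G_i^k$ according to whether they already belonged to $G_i$, so that $\iso(G_1,G_2)$ is exactly the set of color-preserving isomorphisms $G_1^k\to G_2^k$ (using that $k$-improvement is isomorphism-invariant). From now on all graphs carry this edge-coloring, which every construction simply transports. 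Apply the clique separator decomposition of \cref{theo:cliqueDecomposition} to $G_1^k$ and $G_2^k$, obtaining isomorphism-invariant rooted tree decompositions whose bags induce $k$-basic graphs and whose adhesion sets are cliques, hence of size at most $k+1$.

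For every $k$-basic bag $B$, I would further apply \cref{theo:labelDecomposition} in its cheaper variant that omits property~\ref{i:6} and runs in time $2^{\CO(k\log k)}|V(G)|^{\CO(1)}$, after fixing a vertex of $B$ of degree at most $k$; such a vertex exists since graphs of tree width at most $k$ are $k$-degenerate. As this choice is not isomorphism-invariant, on the $G_1$-side I fix one such vertex per basic bag, while on the $G_2$-side I iterate over all admissible vertices and afterwards take the union of the isomorphism sets obtained, costing only a further polynomial factor. This refines $B$ into a tree decomposition of width at most $\funclarge=2^{\CO(k\log k)}$ and adhesion at most $\funcmedium=\CO(k^2)$, equipped at every node $u$ with an isomorphism-invariant rooted graph $H_u$ of depth $\CO(k)$ and forward-degree $k^{\CO(1)}$ such that $\beta(u)\cup\beta(u)^2\subseteq V(H_u)$ and every adhesion set of the refinement is a vertex of $H_u$ (property~\ref{i:8}). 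Each clique adhesion $C$ inherited from the surrounding clique separator decomposition is, since every clique lies in some bag of every tree decomposition, contained in $\beta(u)$ for some $u$; I attach the corresponding subtree at such a $u$ and invoke \cref{lem:structure-cliques} to enlarge $H_u$ (at the price of $\CO(1)$ extra depth and forward-degree) so that all cliques of $\beta(u)$ --- in particular all inherited clique adhesions --- also become vertices.

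Then I would compute $\iso(G_1,G_2)$ bottom-up over all pairs $(t,t')$ of nodes of this combined structure: given, for every pair of their children, the coset of admissible bijections between the corresponding parent-adhesion sets (extending to an isomorphism of the child subgraphs), the task at $(t,t')$ is to find all bijections $\beta(t)\to\beta(t')$ respecting the edge-colored bag graph and mapping each child adhesion to a child adhesion via an admissible bijection. If all child adhesions equal $\beta(t)$, this is an instance of multiple-colored-coset-isomorphism (\cref{lem:small_iso}); then either $\beta(t)$ is a clique of size at most $k+1$ (from the clique-separator level; first run-time bound, $2^{\CO(k\log k)}|V(G)|^{\CO(1)}$) or, by properties~\ref{i:7} and~\ref{i:4}, $|\beta(t)|\le\funcmedium$ and there are at most $k$ children (second run-time bound $k!^{\CO(1)}2^{\CO(\log^c k)}$), with the bag graph handled by one coset intersection against $\iso$ of the two bag graphs, computed by Babai's quasipolynomial algorithm. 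Otherwise the child adhesions are pairwise distinct and this is an instance of coset-hypergraph-isomorphism whose hyperedges are the child adhesions together with the bag edges --- colored by the edge-coloring and by which pair of child subgraphs they must match; consistency condition~(\ref{eq:n-property}) holds because the child cosets are genuine isomorphism cosets, and \cref{lem:big_iso} applied to the enlarged graph $H_u$ (so $d=k^{\CO(1)}$, $h=\CO(k)$, and $\beta(t)\cup\beta(t)^2\subseteq V(H_u)$ supplies the bag-edge vertices) solves it in time $2^{\CO(k\log^c k)}$. In every case the output is again a coset, and combining it with the stored child data in the standard bottom-up fashion yields the coset of isomorphisms of the full subgraphs; at the (canonical) root pair this coset is precisely $\iso(G_1,G_2)$, because the whole decomposition is isomorphism-invariant. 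Summing $2^{\CO(k\log^c k)}|V(G)|^{\CO(1)}$ over the polynomially many node pairs and vertex choices yields the stated running time.

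The main obstacle is the interfacing carried out in the second step: making sure that after gluing the labeled decompositions of the basic bags into the clique separator decomposition, every node of the combined structure carries a genuinely well-formed coset-hypergraph-isomorphism (or multiple-colored-coset-isomorphism) instance, all of whose hyperedges are realized as vertices of a single isomorphism-invariant graph of bounded depth and forward-degree. This is exactly what forces the use of \cref{lem:structure-cliques} and of property~\ref{i:8}, and it requires care to keep the colorings, the ``all-equal-or-all-distinct'' dichotomy for adhesion sets, and the routing of clique adhesions into bags mutually consistent and isomorphism-invariant. The remaining steps are an assembly of the cited results.
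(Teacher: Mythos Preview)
Your proposal is correct and follows essentially the same approach as the paper: the two-level decomposition via \cref{theo:cliqueDecomposition} and \cref{theo:labelDecomposition} (cheaper variant), fixing one low-degree vertex per basic bag on the $G_1$-side while iterating on the $G_2$-side, gluing in the clique adhesions via \cref{lem:structure-cliques}, and then the bottom-up dynamic programming that dispatches each bag pair to \cref{lem:big_iso} or \cref{lem:small_iso} according to the all-equal/all-distinct dichotomy --- this is exactly the paper's proof. Your edge-coloring of $G_i^k$ to remember the original edges is a cosmetic variant of the paper's direct encoding of $E(\beta(t))$ into the coset-hypergraph instance, and your acknowledged ``main obstacle'' (routing clique adhesions canonically into bags) is resolved in the paper by always attaching at the highest possible bag containing the clique.
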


\begin{proof}
We describe a dynamic programming algorithm $\CA$ that has
an input $(I_1,I_2)$ where $I_j=(G_j,S_j,T_j,\beta_j,\eta_j)$
where $G_j$ is a graph, 
$(T_j,\beta_j)$ is a tree decomposition
of the graph, $S_j\subseteq V(G_j)$ is a subset of
the vertices contained in the root bag, and $\eta_j$ is
a (partial) function that assigns nodes $t_j\in V(T_j)$ a
graph $\eta(t_j)$ which is isomorphism invariant w.r.t 
$T_{j,t_j},\beta_{j,t_j},G_{j,t_j}$ and $S_{j,t_j}$ as in \cref{rem:sub}.
The isomorphisms $\iso(I_1,I_2)$ are all isomorphisms $\phi:V(G_1)\to V(G_2)$
from $(G_1,S_1)$ to $(G_2,S_2)$ for which there is a bijection
between the nodes of the trees $\pi:V(T_1)\to V(T_2)$ such
that
for all $t\in V(T_1)$ we have that $\phi(\beta_1(t))=\beta_2(\pi(t))$.
The algorithm computes a coset
$\CA(I_1,I_2)=\iso(I_1,I_2)$.
In our recursive procedure we maintain the following properties of
the tree decomposition for each unlabeled bag $\beta_j(t_j)$
($\eta_j(t_j)=\bot$).\\
\begin{minipage}[t]{0.45\textwidth}
\begin{enumerate}[label=(U\arabic*)]
\item\label{eee:1} The graph $G_j^k[\beta_j(t_j)]$ is clique-separator free, and
\end{enumerate}
\end{minipage}
\begin{minipage}[t]{0.5\textwidth}
\begin{enumerate}[label=(U\arabic*)]
\setcounter{enumi}{1}
\item\label{eee:2} Each adhesion set of $t_j$ and also $S$ are cliques in $G_j^k$.
\end{enumerate}
\end{minipage}\\[1.5ex]

And for each labeled bag $\beta_j(t_j)$ we require the following.\\
\begin{minipage}[t]{0.45\textwidth}
\begin{enumerate}[label=(L\arabic*)]
\item\label{iii:2} The cardinality $|\beta_j(t_j)|$ is bounded by $\funclarge$,
\item\label{iii:6} $\eta_j(t_j)=H$ is a connected rooted graph such that
$\beta_j(t_j)\cup\beta_j(t_j)^2\subseteq V(H)$ and for each adhesion set $S$
there is a corresponding vertex $S\in V(H)$,
$\depth(H)\in \CO(k)$ and $\fdeg(v)\in k^{\CO(1)}$ for all $v\in H$,
\end{enumerate}
\end{minipage}
\begin{minipage}[t]{0.5\textwidth}
\begin{enumerate}[label=(L\arabic*)]
\setcounter{enumi}{2}
\item\label{iii:7} For each bag~$\beta_j(t_j)$ the adhesion sets of the children
are all equal to~$\beta_j(t_j)$ or the adhesion sets of the children are all
distinct,
\item\label{iii:8} If the adhesion sets are all equal and $\beta_j(t_j)$ is no
clique in $G_j^k$, then the cardinality $|\beta_j(t_j)|$ is bounded by
$\funcmedium$ and
the number of children of $t_j$ is bounded by $k$.
\end{enumerate}
\end{minipage}\\[1.5ex]

The initial input of $\CA$ consists of
the graphs $G_1,G_2$ together with their
canonical (unlabeled) clique separator decompositions $(T_j,\beta_j)$
of their $k$-improved graphs $G_j^k$ from~\cref{theo:cliqueDecomposition}.
For $S_j$ we choose the empty set.
Since the clique separator decomposition is canonical we have
$\iso(I_1,I_2)=\iso(G_1,G_2)$ and therefore
the algorithm computes all isomorphisms
between $G_1$ and $G_2$.\\
\underline{Description of $\CA$:}

For $j=1,2$ let $t_{j1},\ldots,t_{j\ell}$ be the children of $t_j$
and let $(T_{ji},\beta_{ji},\eta_{ji})$ be the
decomposition of the subtree rooted at $t_{ji}$.
Let $G_{ji}$ be the graph corresponding to $(T_{ji},\beta_{ji},\eta_{ji})$.
Let $V_{ji}:=V(G_{ji})$ and
let $S_{ji}\coloneqq \beta_j(t_j)\cap V_{ji}$ be the adhesion sets of the
children, let $Z_{ji}$ be $V_{ji}\setminus S_{ji}$,
and let $I_{ji}:=(G_{ji},S_{ji},T_{ji},\beta_{ji},\eta_{ji})$.
We assume that the isomorphisms
$\CA(I_{1i},I_{2i'})$ have already been
computed via dynamic programming.

We have to consider two cases depending on the root $t_j\in T_j$.

\begin{cs}
\case{$\beta_j(t_j)$ are both unlabeled}
Let $v_1$ be a vertex in $\beta_1(t_1)$
of degree at most
$k$ in $G_1^k[\beta_1(t_1)]$ and
compute the canonical
labeled tree
decomposition $(T_1',\beta_1',\eta_1')$ of $(G_1^k[\beta_1(t_1)],v_1)$ from
\cref{theo:labelDecomposition}.
For each vertex $v_2\in\beta_2(t_2)$
of degree at most $k$
in $G_2^k[\beta_2(t_2)]$ we compute the canonical
labeled tree
decomposition $(T_2',\beta_2',\eta_2')$ of $(G_2^k[\beta(t_2)],v_2)$ from
\cref{theo:labelDecomposition}.
Notice that the bags of $(T_j',\beta_j',\eta_j')$
satisfy \ref{iii:2}--\ref{iii:8}, because in the case $\tw(G_j)\leq k$,
the $k$-improvement does not increase the tree width
as seen in \cref{lem:imp}(2).
Notice that the adhesion sets $S_{ji}$ and $S_j$
are cliques in $G_j^k$ and these cliques must be
completely contained in one bag.
We attach the children $(T_{ji},\beta_{ji},\eta_{ji})$ to
$(T_j',\beta_j',\eta_j')$ by adding them to the highest possible bag
and
we choose a new root as the highest possible node $r_j\in V(T_j')$ 
such that $S_j\subseteq \beta_j'(r_j)$. 
By doing this, we obtain $(T_j'',\beta_j'',\eta_j'')$.
We need to recompute
$\eta_j''(s_j)$ for all nodes $s_j\in V(T_j)$ where new children
are attached to preserve Property \ref{iii:6}.
We call the algorithm in \cref{lem:structure-cliques}
with the input $G_j[\beta_j(s_j)],\eta_j''(s_j)$
and obtain $\eta_j'''(s_j)$.
Finally, we obtain $(T_j'',\beta_j'',\eta_j''')$
and define $I_j':=(G,S,T_j'',\beta_j'',\eta_j''')$.
We show that \ref{iii:2}-\ref{iii:8} remains satisfied.
By possibly introducing new bags we can preserve Property \ref{iii:7}:
The adhesion sets are either
pairwise different or all equal.
We can preserve Property \ref{iii:8}.
Consider a bag $s_j$ in which all adhesion sets are equal and
where $\beta_j''(s_j)$ is not a clique in $G_j^k$.
Then the children are not any of the attached children
$t_{j1},\ldots,t_{j\ell}$
and therefore the number of children of $s_j$
remains bounded by $\funcmedium$.

For each of the vertices $v_2 \in \beta_2(t_2)$ we compute the isomorphisms
$\CA(I_1',I_2'(v_{2}))$ recursively.
We return the smallest coset that contains the union
$\bigcup_{v_2 \in \beta_2(t_2)} \iso(I_1',I_2'(v_{2}))$.

\case{$\beta_j(t_j)$ are both labeled}
We consider two cases depending whether
the adhesion sets of $t_j$ are all different or not.

If all adhesion sets are different,
we use the algorithm from \cref{lem:big_iso}
as follows.
We encode the isomorphism problem $\iso(I_1,I_2)$
as an instance of coset-hypergraph-isomorphism.
First, we encode the tree decomposition as
\[\CI=(\beta_1(t_1),\beta_2(t_2),\{S_{1i}\},\{S_{2i}\},\chi_1,\chi_2,\CF,\Cf)\]
and $H_1=\eta_1(t_1),H_2=\eta_2(t_2)$
where $\CF,\Cf$ are the isomorphisms between the children
of $t_1$ and the children of $t_2$
restricted to their adhesion sets.
We can easily construct $\CF,\Cf$
with the isomorphisms $\CA(I_{1i},I_{2i'})|_{S_{1i}}$ that have already been computed.
The colorings $\chi_i$ are defined as follows.
Two $S_{1i}$ and $S_{2i'}$ get assigned the same color (i.e.\ $\chi_1(S_{1i}) = \chi_2(S_{2i'})$)
if and only if the set of isomorphisms $\CA(I_{1i},I_{2i'})$ is not empty.
Since $\CF,\Cf$ consist of isomorphisms
the consistency property \ref{eq:n-property}
is satisfied.
Next, we encode the edge relation of edges contained in the root
bag as 
$\CI'=(\beta_1(t_1),\beta_2(t_2),E(\beta(t_1))\cup\{S_1\},
E(\beta(t_2))\cup\{S_2\},\chi_1',\chi_2',\CF',\Cf')$.
To construct $\CF',\Cf'$ we define
for all edges
$e_1\in E(\beta_1(t_1)),e_2\in E(\beta_2(t_2))$
the allowed isomorphisms
$\Theta_{e_1}\tau_{e_1,e_2}$
as the set of bijections between $\beta_1(t_1)$ and $\beta_2(t_1)$
that map $e_1$ to $e_2$.
Moreover, the
allowed isomorphisms
$\Theta_{S_1}\tau_{S_1,S_2}$
is the set 
of bijections between $\beta_1(t_1)$ and $\beta_2(t_1)$
that map $S_1$ to $S_2$.
We
define $\chi_1,\chi_2$ as a coloring mapping
all edges to the integer 0
and mapping $S_1$ and $S_2$ to the integer 1.
Finally, we combine both instances
$\CI$ and $\CI'$ to obtain $\CI''$
and compute the isomorphisms $\iso(\CI'')$
by the algorithm in \cref{lem:big_iso}.
We extend the isomorphisms to the whole graph and
define and return $\hat\iso(\CI''):=\{\phi:V_1\to V_2\mid\phi|_{\beta_1(t_1)}
\in \iso(\CI''),\exists\pi\in\sym(\ell):\phi|_{V_{1i}}\in
\iso(I_{1i},I_{2\pi(i)})\}$.

If all adhesions sets are equal,
the procedure is analogous, but
we compute $\iso(\CI'')$ by calling
the algorithm from \cref{lem:small_iso}.
\end{cs}

\begin{correctness}
We consider the first case.
Since each isomorphism from $I_1$ to $I_2$
maps the vertex $v_1$ to some vertex $v_{2} \in \beta_2(t_2)$
we conclude that $\iso(I_1,I_2)= \bigcup_{v_2\in\beta_2(t_2)}\iso(I_1',I_2'(v_{2}))$.

We consider the second case.
We show that $\iso(I_1,I_2)=\hat\iso(\CI'')$.
Let $\phi\in\iso(I_1,I_2)$ be an
isomorphism, i.e.\ $\phi$
is an isomorphism from $(G_1,S_1)$ to $(G_2,S_2)$
for which there is a bijection
between the nodes of the trees $\pi:V(T_1)\to V(T_2)$ such
that for all $t\in V(T_1)$ we have that $\phi(\beta_1(t))=\beta_2(\pi(t))$.
Equivalently, since each $\pi$ maps the root of $T_1$ to the root
of $T_2$ we have that
$\phi|_{\beta_1(t_1)}\in\iso(G_1[\beta_1(t_1)],S_1,G_2[\beta_2(t_2)],S_2)$
and there is a bijection $\pi:\sym(\ell)$ between the children of $t_1$
and $t_2$ such that $\phi|_{V(G_{1i})}\in\iso(I_{1i},I_{2\pi(i)})$.
The condition
$\phi|_{\beta_1(t_1)}\in\iso(G_1[\beta_1(t_1)],S_1,G_2[\beta_2(t_2)],S_2)$
is equivalent to $\phi|_{\beta_1(t_1)}\in\iso(\CI')$.
Therefore, $\phi\in\iso(I_1,I_2)$, if and only if $\phi\in\hat\iso(\CI'')$.
\end{correctness}

\begin{runningtime}
Due to dynamic programming the number of calls is polynomial in $|V(G)|$.
More precisely, for each bag of the clique separator decomposition,
we construct a labeled tree decomposition and both have polynomially
many bags.

Consider the first case.
The construction
of the labeled tree decomposition from \cref{theo:labelDecomposition}
takes time $2^{\CO(k\log k)}|V(G)|^{\CO(1)}$.
Updating the graph $H=\eta_j''(s_j)$ in the case where the clique separators
where attached is done by \cref{lem:structure-cliques} in time
$2^{\CO(k\log k)}|V(H)|^{\CO(1)}$.
Notice that the depth of $H$
is bounded by $\CO(k)$
and the maximum forward-degree is bounded by $k^{\CO(1)}$.
Therefore this
gives a bound on the number of vertices
$|V(H)|\in 2^{\CO(k\log k)}$.

Consider the second case in which all adhesion sets are
different.
The isomorphism subroutine in
\cref{lem:big_iso} runs in time
$2^{\CO(k \cdot \log^{c}k)}$.
Notice that for given $\iso(\CI'')$ and $\iso(I_{1i},I_{2i'})$
the coset $\hat\iso(\CI'')$ can be constructed 
in polynomial time, since for each isomorphism $\phi''\in\iso(\CI'')$
there is already a unique $\pi\in\sym(\ell)$
such that $\phi''|_{S_{1i}}\in\iso(I_{1i},I_{2\pi(i)})|_{S_{1i}}$
and therefore it suffices to extend $\phi$ using generators
for $\iso(I_{1i},I_{2\pi(i)})$.

Next, consider the second case in which all adhesion sets are equal.
Assume that $\beta_j(t_j)$ is not a clique in $G_j^k$.
Then by Property \ref{iii:8}
the cardinality $|\beta_j(t_j)|$ is bounded by $\funcmedium$
and the number of children of
$t_j$ is bounded by $k$.
In this case we have that $|\CF|\leq k$
for our subroutine which leads to a run time
$|\CF|!^{\CO(1)}2^{\CO(\log^c(|\beta_1(t_1)|))}\subseteq 2^{\CO(k\log k)}$.
In the case where $\beta_j(t_j)$ is a clique in $G_j^k$
we have that $|\beta_j(t_j)|\leq k+1$ and
$|\CF|\leq |V(G_j)|$.
This leads to a run time of
$(|V_1|!|\CF|)^{\CO(1)}\subseteq 2^{\CO(k\log k)}|V(G_1)|^{\CO(1)}$.
It remains to explain how the coset $\hat\iso(\CI'')$ can be constructed 
in polynomial time.
The cosets are colored according to the isomorphism types of
the graphs $G_{ji}$ which ensures property \ref{eq:n-property-small}.
Let $P=P_1\cupdot\ldots\cupdot P_p$ be a partition of $[\ell]$
into isomorphism classes, i.e. $i,j\in P_s$ if and only if
$\Theta_i\tau_{i,k}=\Theta_j\tau_{j,k}$ for some (and due to property
\ref{eq:n-property-small} for all) $k\in[\ell]$.
Now, for each isomorphism $\phi''\in\iso(\CI'')$
there is a unique coset $\Pi$ of $\sym(P_1)\times\ldots\times\sym(P_p)$
such that $\phi''\in\iso(I_{1i},I_{2\pi(i)})|_{\beta_1(t_1)}$
for all $\pi\in\Pi$
and therefore it suffices to extend $\phi$ using generators
for $\iso(I_{1i},I_{2\pi(i)})$ and $\Pi$.
\end{runningtime}
\end{proof}

\section{Canonization}\label{sec:canoniztion:tools}

In this section we adapt our techniques to obtain an algorithm which computes a
canonization for a given graph of tree width at most $k$.
Since Babai's quasi-polynomial time algorithm \cite{DBLP:conf/stoc/Babai16} only
tests isomorphism of two given input graphs and can not be used for
canonization purposes, our canonization algorithm has a slightly worse running
time.
However, our canonization algorithm still significantly improves on the previous best due to Lokshtanov, Pilipczuk, Pilipczuk, and Saurabh \cite{lokshtanov2017fixed}.

One of the main tasks for building the canonization algorithm out of our isomorphism test is to adapt Lemma \ref{lem:big_iso}.
To achieve this we shall use several group theoretic algorithms concerned with canonization.
We start by giving the necessary background.

\subsection{Background on canonization with labeling cosets}

Two graphs graphs $G$ and $H$ are called isomorphic with respect to a
group
$\Delta\leq\sym(V(G))$, if there is a bijection $\delta\in\Delta$ such
that $G^\delta=H$. In this case we write $G\cong_\Delta H$.
By $\aut_\Delta(G)$ we denote the automorphism group
of a graph $G$ restricted to $\Delta$, i.e.,
all $\delta\in\Delta$ such that $G^\delta=G$.

In the following let $\CX$ denote a class
of graphs, closed under isomorphisms.
\begin{defn}
Assume that $V = \{1,\ldots,|V|\}$.
A function $\cf_{\Delta}:\CX\to\CX$ is a \emph{canonical form} with
respect to a group $\Delta\leq\sym(V)$ for a graph
class $\CX$, if
$\cf_\Delta(G)\cong_{\Delta} G$ for all $G\in\CX$, and
if $G\cong_{\Delta} H$ implies $\cf_\Delta(G)=\cf_\Delta(H)$ for all
  $G,H\in\CX$.
\end{defn}

\begin{defn}
Assume that $V = \{1,\ldots,|V|\}$.
The \emph{canonical labelings} of a graph~$G$ with respect to a group $\Delta\leq\sym(V(G))$ are the
elements in $\Delta$ that bring $G$ in the canonical form, i.e.,
$\can(G;\Delta)\coloneqq\{\delta\in\Delta~|~G^\delta=\cf_\Delta(G)\}$.
\end{defn}
For the purpose of recursion, it is usually more convenient
to compute the entire coset of canonical labelings
rather than just a canonical form.

\begin{lem}[\cite{babai1983canonical}]\label{lem:char}
Assume that $V = \{1,\ldots,|V|\}$.
The following three conditions characterize canonical labelings.
\begin{enumerate}[label=\textnormal{(CL\arabic*)},leftmargin=1.5cm]
  \item\label{luks:cl1} $\can(G;\Delta)\subseteq\Delta$,
  \item\label{luks:cl2} $\can(G;\Delta)=\delta\can(G^\delta;\Delta)$
  for all $\delta\in\Delta$, and
  \item\label{luks:cl3} $\can(G;\Delta)=\aut_{\Delta}(G)\pi$
  for some (and thus for all) $\pi\in\can(G;\Delta)$.
\end{enumerate} 
\end{lem}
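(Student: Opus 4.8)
The plan is to prove the asserted equivalence in two directions. \emph{Soundness:} for any canonical form $\cf_\Delta$, the associated set $\can(G;\Delta)=\{\delta\in\Delta\mid G^\delta=\cf_\Delta(G)\}$ satisfies conditions (CL1)--(CL3). \emph{Completeness:} conversely, given any assignment $G\mapsto\can(G;\Delta)\subseteq\sym(V)$ satisfying (CL1)--(CL3), one constructs a canonical form $\cf_\Delta$ whose coset of canonical labelings is precisely $\can(G;\Delta)$. Throughout I would keep the left-to-right composition convention of the preliminaries in force, so that $G^{\sigma\rho}=(G^\sigma)^\rho$ and all cosets are written on the correct side.

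For soundness, (CL1) is immediate from the definition. For (CL2) I would fix $\delta\in\Delta$; since $\delta$ witnesses $G^\delta\cong_\Delta G$, the defining property of a canonical form gives $\cf_\Delta(G^\delta)=\cf_\Delta(G)$, hence $\delta\can(G^\delta;\Delta)=\{\delta\sigma\mid\sigma\in\Delta,\ G^{\delta\sigma}=\cf_\Delta(G)\}$, and as $\sigma$ ranges over $\Delta$ and $\delta\in\Delta$ the product $\delta\sigma$ ranges over $\Delta$, so this set is exactly $\can(G;\Delta)$. For (CL3) I would first note that the defining property of $\cf_\Delta$ produces some $\pi\in\Delta$ with $G^\pi=\cf_\Delta(G)$, so $\can(G;\Delta)\neq\emptyset$; then for $\sigma\in\Delta$ one has $\sigma\in\can(G;\Delta)$ iff $G^\sigma=G^\pi$ iff $G^{\sigma\pi^{-1}}=G$ iff $\sigma\pi^{-1}\in\aut_\Delta(G)$, i.e. $\can(G;\Delta)=\aut_\Delta(G)\pi$; since any two elements of $\can(G;\Delta)$ differ by a member of $\aut_\Delta(G)$, the coset does not depend on the chosen $\pi$, which yields the ``for all'' clause.

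For completeness, I would observe that (CL3) as stated already presupposes $\can(G;\Delta)\neq\emptyset$, and define $\cf_\Delta(G):=G^\pi$ for an arbitrary $\pi\in\can(G;\Delta)$. This is well defined because, by (CL3), any other choice has the form $\alpha\pi$ with $\alpha\in\aut_\Delta(G)$, and $G^{\alpha\pi}=(G^\alpha)^\pi=G^\pi$. I would then check the three parts of being a canonical form: $\cf_\Delta(G)\cong_\Delta G$ since $\pi\in\Delta$ by (CL1); and if $H=G^\delta$ with $\delta\in\Delta$, then (CL2) gives $\can(G;\Delta)=\delta\can(H;\Delta)$, so choosing $\pi\in\can(H;\Delta)$ we get $\delta\pi\in\can(G;\Delta)$ and $\cf_\Delta(G)=G^{\delta\pi}=(G^\delta)^\pi=H^\pi=\cf_\Delta(H)$. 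Finally I would verify that $\can(G;\Delta)$ is the coset of canonical labelings of $\cf_\Delta$: the inclusion $\subseteq$ holds because each $\pi\in\can(G;\Delta)$ lies in $\Delta$ (by (CL1)) and sends $G$ to $\cf_\Delta(G)$ (by the definition and well-definedness of $\cf_\Delta$); and $\supseteq$ holds because if $\delta\in\Delta$ with $G^\delta=\cf_\Delta(G)=G^\pi$ then $\delta\pi^{-1}\in\aut_\Delta(G)$, so $\delta\in\aut_\Delta(G)\pi=\can(G;\Delta)$ by (CL3).

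I do not expect a genuine obstacle: the whole argument is a bookkeeping exercise in coset arithmetic. The only points that need a little care are (i) consistently applying the left-to-right composition convention, so that products and cosets sit on the correct side, and (ii) noticing that (CL3) is exactly what guarantees $\can(G;\Delta)\neq\emptyset$ --- the fact that makes $\cf_\Delta$ well defined in the completeness direction.
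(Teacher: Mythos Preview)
Your proof is correct and complete. Note, however, that the paper does not actually prove this lemma: it is stated with a citation to Babai and Luks~\cite{babai1983canonical} and no proof is given, only the remark following the lemma that (CL3) makes $G^\pi$ independent of the choice of $\pi$ and hence a canonical form---which is precisely the well-definedness step in your completeness direction. Your argument is the standard one and matches what the cited reference contains.
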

These conditions imply that the assignment of~$G^\pi$ for an arbitrary~$\pi
\in \can(G;\Delta)$ is independent of the choice of~$\pi$ and
that~$G^\pi$ is a canonical form of~$G$
with respect to $\Delta$, justifying the name.

Above we made the assumption that
$V$ is a fixed linearly ordered set.
We can drop this assumption
and can define canonical labelings
for arbitrary vertex sets as seen next.
For a labeling coset $\tau\Delta\leq\lab(V(G))$ we define
$\can(G;\tau\Delta)$ to be the set~$\tau\can(G^\tau;\Delta)$ which is well defined
due to Property~\ref{luks:cl2}.
In general, a canonical form $\cf$ for a graph $G$
over an arbitrary (not necessary ordered) vertex set $V$
is a function such that
$\cf(G)\cong G$ for all $G\in\CX$, and
$G\cong H$ implies $\cf(G)=\cf(H)$ for all
$G,H\in\CX$.
Notice that $G^\pi$ defines such a canonical form for
some $\pi\in\can(G;\lab(V(G)))$.
An algorithm to compute $\can(G;\lab(V(G)))$ for
a graph of tree width at most $k$ will be present in
the last section.

The \emph{composition-width} of a coset~$\tau\Delta$,
denoted~$\cw(\tau\Delta)$ is the smallest integer~$k\in \NN$ such that
$\tau\Delta\in\tilde\Gamma_k$, that is, 
every non-Abelian composition factor of~$\Delta$ is isomorphic to a subgroup
of~$\sym(k)$.
The composition width of a collection of cosets~$\cw(\CS)$ is
the maximum composition width among all elements of~$\CS$.
Let $\omega(n)$ be the smallest function such that,
if $\Delta\leq\sym(X)$ is primitive, then
$|\Delta|\leq|X|^{\omega(\cw(\Delta))}$.

\begin{theo}[\cite{babai1983canonical}]\label{lem:luks}
Given a graph $G$ and a labeling coset $\tau\Delta\in\lab(V)$,
a canonical labeling coset $\can(G;\tau\Delta)$
can be computed in time $|V(G)|^{\CO(\omega(\cw(\Delta)))}$.
\end{theo}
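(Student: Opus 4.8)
The plan is to reduce the statement to the classical Luks-style canonization recursion and to control its single expensive ingredient --- the size of primitive groups in the base case --- through the function $\omega$. First I would invoke the definition $\can(G;\tau\Delta)=\tau\can(G^\tau;\Delta)$ to assume a linearly ordered vertex set $V=\{1,\dots,n\}$ and $\tau=\id$, reducing the task to computing $\can(G;\Delta)$ for a group $\Delta\le\sym(n)$ given by generators. Then I encode $G$ by its adjacency string $x\colon V\times V\to\{0,1\}$; the group $\Delta$ acts diagonally on the set $\Omega=V\times V$ of $\le n^{2}$ points, an element of $\Delta$ is an automorphism of $G$ exactly when it fixes $x$, and since composition factors are an invariant of the abstract group this (faithful) action still has composition width $k:=\cw(\Delta)$. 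So it suffices to compute a canonical labeling coset of the string $x$ under $\Delta$ acting on $\Omega$ in time $|\Omega|^{\CO(\omega(k))}$; by \ref{lem:char} such a coset also yields a canonical form.

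I would run the standard recursion computing, more generally, $\can(x;\Delta\sigma)$ for a labeling coset $\Delta\sigma\le\lab(\Omega)$ --- carrying a coset rather than a group keeps the recursion self-reducible --- always returning a set that satisfies the three defining properties \ref{luks:cl1}--\ref{luks:cl3}, and splitting into the usual cases. If $\Delta=1$, return $\{\sigma\}$. If $\Delta$ is \emph{intransitive} on $\Omega$, split $\Omega=\Omega_1\cupdot\Omega_2$ into two $\Delta$-invariant pieces of roughly equal size, recursively canonize $x|_{\Omega_1}$, then canonize $x|_{\Omega_2}$ with respect to the subgroup of $\Delta\sigma$ stabilizing the canonical form of the first piece; representing the first answer compactly by its automorphism coset makes this a single recursive call per piece, so the cost is additive. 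If $\Delta$ is \emph{transitive}, pick (deterministically from $\Delta$) a minimal block system $\CB$. If $\Delta$ is even \emph{primitive}, then $|\Delta|\le|\Omega|^{\omega(k)}$ by the definition of $\omega$, so I enumerate all of $\Delta$ from a strong generating set (Schreier--Sims), compute the relabeled string $x^{\delta\sigma}$ for each $\delta\in\Delta$, take the lexicographically least one and collect all $\delta\sigma$ achieving it; by \ref{luks:cl3} this set is a coset of the automorphism group, and the cost is $|\Omega|^{\CO(\omega(k))}$. If $\Delta$ is transitive but \emph{imprimitive}, let $N\trianglelefteq\Delta$ be the kernel of the action on $\CB$; then $\Delta/N$ embeds into a primitive group on $|\CB|$ points, so $[\Delta:N]\le|\CB|^{\omega(k)}$, while $N$ is intransitive on $\Omega$ (its orbits lie inside blocks). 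I recurse on each of the $\le|\CB|^{\omega(k)}$ cosets $N\rho\subseteq\Delta\sigma$ (each handled by the intransitive branch) and combine the results by taking the lexicographically best canonical form and unioning the ties.

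The analysis rests on the classical fact (Luks~\cite{luks1982isomorphism}) that $\Gamma_k$ is closed under subgroups, quotients and extensions, so every point stabilizer, block kernel, induced action and primitive section arising in the recursion again lies in $\Gamma_k$; hence every primitive group $P$ met on $m'$ points obeys $|P|\le(m')^{\omega(k)}$. An induction on $|\Omega|$ then yields $T(\Delta,\Omega)\le|\Omega|^{\CO(\omega(k))}$: in the imprimitive transitive case $T(\Delta,\Omega)\le|\CB|^{\omega(k)}\cdot\sum_{B\in\CB}T(\,\cdot\,,B)$ with $|B|=|\Omega|/|\CB|$, which closes since $|\CB|^{\omega(k)+1}\le|\CB|^{c\,\omega(k)}$ for a suitable constant $c$; the primitive case is the bound above, and the intransitive case is additive. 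Unwinding the reduction gives $T\le(n^{2})^{\CO(\omega(k))}=|V(G)|^{\CO(\omega(\cw(\Delta)))}$, and because \ref{luks:cl1}--\ref{luks:cl3} are preserved by each of the three combination steps the returned coset is exactly $\can(G;\tau\Delta)=\tau\can(G^{\tau};\Delta)$.

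The hard part is the transitive imprimitive case, where one must simultaneously (a) keep the whole procedure canonical, i.e.\ dependent only on the $\Delta$-orbit of the input as demanded by \ref{luks:cl2} --- which forces all choices (block system, tie-break orderings) to be functions of $\Delta$ alone and the combination of the sub-answers to be a genuine minimum over cosets --- and (b) keep the multiplicative blow-up under control, which is precisely where composition-width monotonicity of $\Gamma_k$ is needed, guaranteeing that the orders of the primitive quotients encountered along any imprimitivity tower multiply to at most $|\Omega|^{\omega(k)}$. By contrast the primitive base case is routine once the order bound $|P|\le(m')^{\omega(k)}$ is available, since then brute-force enumeration already fits the budget.
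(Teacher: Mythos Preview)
The paper does not prove this theorem at all: it is stated with the citation \cite{babai1983canonical} and used as a black box. Your sketch is essentially the Babai--Luks string-canonization recursion from that reference and is correct in outline, so in that sense you have supplied what the paper omits.

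One small point worth tightening: in the intransitive case you propose splitting $\Omega$ into two $\Delta$-invariant pieces ``of roughly equal size''. For \ref{luks:cl2} to hold, this split must be a function of $\Delta$ alone (say, via a canonical ordering of the $\Delta$-orbits on $\Omega$ and then grouping an initial segment), not an arbitrary balanced partition; you acknowledge this constraint later but the phrasing in the algorithm description could be read as allowing a non-canonical choice. Also, in the transitive imprimitive step the statement that ``$\Delta/N$ embeds into a primitive group on $|\CB|$ points'' is slightly off: $\Delta/N$ \emph{is} the (transitive, but not necessarily primitive) action on $\CB$, and what you actually need is that any primitive quotient along a maximal block tower has order bounded by its degree to the $\omega(k)$, so that the product of indices telescopes to $|\Omega|^{\omega(k)}$. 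Your concluding paragraph states this correctly, so the earlier sentence is just imprecise rather than wrong.
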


It is actually known that $\omega(n)\in\CO(n)$ as stated in
\cite{DBLP:conf/focs/BabaiKL83}, see \cite{liebeck1999simple}.
Analogously, for a hypergraph $X$
we first define the canonical labelings with respect
to a group and a start segment of $\NN$ as the vertex set.
Meaning $\can(X;\Delta)$ satisfies \ref{luks:cl1}-\ref{luks:cl3}
with $X$ in place of $G$.
Then for graphs on arbitrary vertex set and a given labeling
coset we define $\can(X;\rho\Delta)$ to be the set
$\rho\can(X^\rho;\Delta)$.

\begin{theo}[\cite{miller1983isomorphism},
see \cite{Ponomarenko}]\label{theo:miller}
Given a hypergraph $X = (V,E)$ and a labeling coset
$\tau\Delta\in\lab(V)$,
a canonical labeling coset $\can(X;\tau\Delta)$
can be computed in time $(|V(X)||E(X)|)^{\CO(\omega(\cw(\Delta)))}$.
\end{theo}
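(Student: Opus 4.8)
Since the statement is a classical result, the plan is to reduce it to the group-theoretic canonization engine already invoked in \cref{lem:luks}, taking care that the many hyperedges do not inflate the group complexity that enters the exponent. First I would dispose of the labeling coset: by definition $\can(X;\tau\Delta)=\tau\can(X^{\tau};\Delta)$, so it suffices to assume $V=\{1,\dots,|V|\}$ and $\Delta\le\sym(V)$ and to output a coset $\can(X;\Delta)\le\Delta$ obeying \ref{luks:cl1}--\ref{luks:cl3}. Identifying each hyperedge $S\in E$ with its characteristic vector $\chi_S\colon V\to\{0,1\}$, canonizing $X$ under $\Delta$ becomes the problem of canonizing the \emph{family} $\{\chi_S:S\in E\}$ of binary strings on $V$ under the induced position action of $\Delta$. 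When $|E|=1$ this is merely canonizing a vertex-$2$-colouring, which reduces by an easy gadget to \cref{lem:luks} and runs in time $|V|^{\CO(\omega(\cw(\Delta)))}$.

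For $|E|\ge 2$ I would run the Luks-style divide-and-conquer on $\Delta$ that underlies \cref{lem:luks}, but carrying all $|E|$ strings through the recursion simultaneously. The recursion uses the two standard reductions: if $\Delta$ is intransitive on $V$ one processes its orbits along the chain of windows $O_1\subseteq O_1\cup O_2\subseteq\cdots$; if $\Delta$ is transitive one fixes a minimal block system, branches over the (canonically ordered) block images, and recurses into the block stabilizer. The branching at each transitive node is governed by the order of the induced primitive action, which by the very definition of $\omega$ is at most $|V|^{\omega(\cw(\Delta))}$ --- this is where the exponent comes from. The point requiring care is the bookkeeping on hyperedges: at each recursive call I would restrict every $\chi_S$ to the current window and \emph{merge} hyperedges with equal restriction into a single coloured object, the colour recording the multiplicity together with (recursively) a canonical form of the part of the edge lying outside the window. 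This keeps the number of distinct objects at every node at most $|E|$, so each node costs $\poly{|V|,|E|}$; at the leaves (trivial group) canonization is just lexicographic sorting of the now coordinate-fixed vectors. Propagating the canonical labelings back up via \ref{luks:cl2} and intersecting cosets yields $\can(X;\Delta)$ in total time $(|V|\,|E|)^{\CO(\omega(\cw(\Delta)))}$.

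The step I expect to be the main obstacle is exactly this merging of hyperedges and its correctness. Because $E$ is a \emph{set}, edges may be permuted freely among themselves, and a naive implementation would be forced to operate inside a subgroup of $\sym(E)$, whose composition width is $|E|$; since $\cw(\Delta)$, and never $|E|$, must appear in the exponent, this would destroy the claimed bound. The cure is the grouping-by-restriction above --- the canonization analogue of the colour-refinement step in Luks's isomorphism algorithm --- and what needs genuine care is that the merged colours are assigned isomorphism-invariantly and encode enough of each hyperedge to recover a canonical form of $X$ as a whole, rather than only of its projections onto the recursion windows. This is precisely the content carried out in the cited works of Miller~\cite{miller1983isomorphism} and Ponomarenko~\cite{Ponomarenko}.
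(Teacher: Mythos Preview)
The paper does not give its own proof of this statement: \cref{theo:miller} is simply quoted from the literature (Miller~\cite{miller1983isomorphism}, with the refined running-time analysis as in Ponomarenko~\cite{Ponomarenko}) and used as a black box. So there is nothing to compare your proposal against in the paper itself.

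That said, your sketch is a faithful high-level outline of the standard proof. The reduction to $V=\{1,\dots,|V|\}$ and $\Delta\le\sym(V)$ is correct, and the core of Miller's argument is indeed to run the Babai--Luks recursion on $\Delta$ (intransitive: process orbits; transitive: branch over a primitive quotient, with branching factor $\le |V|^{\omega(\cw(\Delta))}$) while carrying the whole family of hyperedges along. You have also put your finger on the one genuine idea: one must \emph{not} treat the hyperedge set via $\sym(E)$, since that would push $|E|$ into the exponent; instead one keeps $|E|$ in the base by collapsing, at each recursive window, hyperedges with identical restriction into a single weighted/coloured representative and recursing on the resulting multiset. Your description of how to assign those colours (multiplicity plus a recursively obtained canonical form of the complementary part) is exactly the delicate point, and you are right that its isomorphism-invariance and sufficiency for reconstructing a global canonical form is what Miller's paper establishes. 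In short: the plan is sound and matches the cited references; for the present paper, however, no proof is expected and a citation suffices.
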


We describe several polynomial time operations that can used in general when dealing with labeling cosets of a set of natural numbers.

We first define an ordering
on~$\lab(\{1,\ldots,n\})= \sym(\{1,\ldots,n\})$.
For two such permutations~$\pi_1,\pi_2$ define~$\pi_1\prec \pi_2$ if there is an~$i\in \{1,\ldots,n\}$ such that~$\pi_1(i) <\pi_2(i)$ and~$\pi_1(j)=\pi_2(j)$ for all~$1\leq j<i$.
We extend the definition to labeling cosets of~$\{1,\ldots,n \}$ as follows.
Suppose~$\tau_1\Theta_1,\tau_2\Theta_2\leq \lab(\{1,\ldots,n\})$ are labeling cosets.
Then we define~$\tau_1\Theta_1\prec\tau_2\Theta_2$ if~$|\tau_1\Theta_1|
<|\tau_2\Theta_2|$ or if $|\tau_1\Theta_1|=|\tau_2\Theta_2|$ and the smallest element of~$\tau_1\Theta_1\setminus
\tau_2\Theta_2$ is smaller (w.r.t.~to~$\prec$ on permutations) than the smallest element of~$\tau_2\Theta_2\setminus \tau_1\Theta_1$.

Let~$\tau_1,\tau_2,\ldots,\tau_t$ be a set of bijections with the same domain
and range. We denote by~
\[\lcosetgen \tau_1,\tau_2,\ldots,\tau_t
  \rcosetgen\]
the smallest coset containing all~$\tau_i$. This coset can be computed in polynomial
time (see for example Lemma 16 of the arXiv
version of~\cite{DBLP:conf/focs/GroheS15}).

\begin{lem}\label{lem:can:sims}
There is a polynomial time algorithm that, given a labeling coset~$\tau\Theta$ of the set~$\{1,\ldots,n \}$, as~$\tau$ and a generating set~$S$ of~$\Theta$, computes a sequence of at most~$n\log n$ elements~$(\tau_1,\tau_2,\ldots,\tau_t)$ of~$\tau\Theta$ such that
\begin{enumerate}
\item $\tau\Theta = \lcosetgen \tau_1,\ldots,\tau_t\rcosetgen$,
\item $\tau_i$ is the smallest element of~$\tau\Theta$ not contained in~$\lcosetgen\tau_1,\ldots,\tau_{i-1}\rcosetgen$, and
\item the output of the algorithm only depends on~$\tau\Theta$ (and not on~$\tau$ or~$S$).
\end{enumerate}
\end{lem}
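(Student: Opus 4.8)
The plan is to build the sequence greedily via a variant of the Schreier–Sims procedure, where the greediness is with respect to the order $\prec$ on permutations rather than the usual base ordering. First I would reduce to the case of a subgroup of $\sym(\{1,\ldots,n\})$: given the input $\tau$ and a generating set $S$ of $\Theta$, I can work inside $\tau\Theta$ directly by noting that $\lcosetgen\tau_1,\ldots,\tau_t\rcosetgen$ depends only on the coset, and that the smallest element (w.r.t.\ $\prec$) of $\tau\Theta$ can be computed in polynomial time by the standard minimal-coset-element routine (repeatedly sifting to fix the image of $1$ to be as small as possible, then the image of $2$, and so on, using a strong generating set of $\Theta$, which the Schreier–Sims algorithm produces from $S$ in polynomial time). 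Call this smallest element $\tau_1$.

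Next, the inductive step: having produced $\tau_1,\ldots,\tau_{i-1}$ generating a subcoset $\Lambda_{i-1} = \lcosetgen\tau_1,\ldots,\tau_{i-1}\rcosetgen \subsetneq \tau\Theta$, I compute $\tau_i$ as the $\prec$-smallest element of $\tau\Theta\setminus\Lambda_{i-1}$. This again reduces to a minimal-coset-element computation: $\Lambda_{i-1}$ is a labeling subcoset of $\tau\Theta$ (it equals $\sigma\Psi$ for $\Psi\le\Theta$ a subgroup with a known generating set, obtained from the $\tau_j$ via the polynomial-time coset-generation routine cited after Lemma~16 of~\cite{DBLP:conf/focs/GroheS15}), so $\tau\Theta\setminus\Lambda_{i-1}$ is a union of cosets of $\Psi$ in $\Theta$; iterating over a transversal and taking the global $\prec$-minimum, or more efficiently sifting greedily on images $1,2,\ldots$ while excluding the forbidden coset, gives $\tau_i$ in polynomial time. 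Termination and the length bound follow from the fact that $|\Lambda_i| \ge 2|\Lambda_{i-1}|$ (each new generator at least doubles the order, since $\tau_i\notin\Lambda_{i-1}$ forces the subgroup part to strictly grow, hence at least double by Lagrange), so $t \le \log_2|\Theta| \le \log_2(n!) \le n\log n$.

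For property~(1), the coset $\lcosetgen\tau_1,\ldots,\tau_t\rcosetgen$ is a subcoset of $\tau\Theta$ by construction, and it cannot be proper: if it were, the algorithm would not have stopped, since there would still be a $\prec$-smallest element of $\tau\Theta$ outside it to append as $\tau_{t+1}$. Property~(2) is immediate from the construction of each $\tau_i$. Property~(3) — that the output depends only on $\tau\Theta$ and not on the particular $\tau$ or $S$ — follows because every quantity computed is defined purely in terms of the set $\tau\Theta$: $\tau_1$ is \emph{the} $\prec$-minimum of $\tau\Theta$, and inductively $\tau_i$ is \emph{the} $\prec$-minimum of $\tau\Theta\setminus\lcosetgen\tau_1,\ldots,\tau_{i-1}\rcosetgen$, with $\lcosetgen\cdot\rcosetgen$ a well-defined (input-independent) closure operator; the auxiliary representations ($\tau$ as a labeling, $S$ as generators) only influence intermediate data structures, not the emitted sequence.

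I expect the main obstacle to be the efficient implementation of ``$\prec$-smallest element of $\tau\Theta$ avoiding a subcoset $\Lambda_{i-1}$'': the naive approach enumerates a transversal of $\Lambda_{i-1}$ in $\tau\Theta$, which could be exponentially large. The fix is to interleave the greedy image-by-image sifting for the $\prec$-minimum with membership testing against $\Lambda_{i-1}$: one maintains, along the partial assignment of images $1\mapsto a_1, 2\mapsto a_2,\ldots$, both a stabilizer chain for $\Theta$ (to know which extensions are feasible in $\tau\Theta$) and the sifted ``syllable'' against the stabilizer chain of $\Lambda_{i-1}$, branching to the lexicographically next choice precisely when continuing would force the element into $\Lambda_{i-1}$. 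This is a standard-but-delicate orbit/stabilizer bookkeeping argument; once it is in place, each $\tau_i$ costs polynomial time and the total running time is polynomial as claimed.
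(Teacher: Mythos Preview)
Your proposal is correct and follows essentially the same approach as the paper: greedily append the $\prec$-smallest element of $\tau\Theta$ outside the already-generated subcoset, with the length bound coming from the doubling argument you give. The paper's version of the key subroutine (smallest element of $\tau\Theta$ avoiding $\lcosetgen\tau_1,\ldots,\tau_\ell\rcosetgen$) is exactly your ``image-by-image sifting'' fix, phrased as a clean recursion: find the first position $j$ whose image is not forced, pick the smallest image $i_m$ for which the restricted coset $(\tau\Theta)_{j\mapsto i_m}$ is not entirely contained in the forbidden subcoset, and recurse on both restricted cosets---so what you flag as ``standard-but-delicate bookkeeping'' is in fact short and direct.
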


\begin{proof}
It suffices to find a solution for the following task:
Given~$\tau_1,\tau_2,\ldots,\tau_\ell$ find the smallest element
in~$\tau\Theta$ not contained $\bar\langle \tau_1,\tau_2,\ldots,\tau_\ell\bar\rangle$.

This task can be solved in polynomial time as follows.
We find the smallest~$j$ such that not all elements of~$\tau\Theta$ have the
same image of~$j$.
Otherwise, if each element of $\tau\Theta$ has the same image,
then $\tau\Theta$ consists of only one element and we are done.
With standard group theoretic techniques, including the
Schreier-Sims algorithm, we can compute the possible images of~$j$ under~$\tau\Theta$, say~$i_1,i_2,\ldots,i_k$.
For each~$m\in \{1,\ldots,k\}$ we compute~$(\tau\Theta)_{j\mapsto i_{m}}$, the
set of those elements in~$\tau\Theta$ that map~$j$ to~$i_{m}$.
Notice that if $\tau\Theta$ contains an element not in
$\bar\langle \tau_1,\tau_2,\ldots,\tau_\ell\bar\rangle$,
then there is also a $(\tau\Theta)_{j\mapsto i_{m}}$
that contains an element not in $\bar\langle \tau_1,\tau_2,\ldots,\tau_\ell\bar\rangle$.
We find the smallest~$i_m$ such that~$(\tau\Theta)_{j\mapsto i_{m}}$ is
not contained in~$\bar\langle \tau_1,\tau_2,\ldots,\tau_\ell\bar\rangle$.
Now we compute~$(\bar\langle
\tau_1,\tau_2,\ldots,\tau_\ell\bar\rangle)_{j\mapsto i_{m}}$ and recurse
with~$(\tau\Theta)_{j\mapsto i_{m}}$.

Each of the operations can be performed in polynomial time  with standard techniques
(see~\cite[Beginning of Section 9]{DBLP:conf/focs/GroheS15}, for a detailed 
exposition, the reader can refer to an arXiv version of the paper).
\end{proof}

Note that the lemma can be used to compute canonical generating sets of a permutation group defined on a linearly ordered set.
\begin{cor}\label{cor:order:labelings:of:natural:numbers}
\begin{enumerate}
\item Given a labeling coset~$\tau\Theta$ of~$\{1,\ldots,n \}$  we can compute the smallest element (w.r.t.~``$\prec$'') in time polynomial in~$n$.
\item Given a collection of labeling cosets~$\{\tau_1\Theta_1, \ldots,\tau_t\Theta_t\}$ of~$\{1,\ldots,n \}$ via generating sets, we can compute their order (w.r.t.~``$\prec$'') in time polynomial in~$t$ and~$n$.
\end{enumerate}
\end{cor}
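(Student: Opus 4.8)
The plan is to run the algorithm of \cref{lem:can:sims} on the input coset~$\tau\Theta$ and return the first element~$\tau_1$ of the sequence it produces. By property~(2) of that lemma, instantiated at~$i=1$ (where the previously generated coset is empty), $\tau_1$ is precisely the~$\prec$-smallest element of~$\tau\Theta$, and it is computed in polynomial time. Equivalently, one may run the greedy routine from the proof of that lemma directly: use the Schreier-Sims algorithm to list the possible images of~$1$ under~$\tau\Theta$, keep the smallest, pass to the subcoset of~$\tau\Theta$ realizing it, and iterate through the coordinates~$2,\dots,n$.

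\textbf{Part~2: reduction to pairwise comparisons.} For the second part I would compute, for each of the~$\binom{t}{2}$ pairs~$i\neq j$, whether~$\tau_i\Theta_i\prec\tau_j\Theta_j$, whether~$\tau_j\Theta_j\prec\tau_i\Theta_i$, or whether the two cosets coincide; since~$\prec$ is a total preorder on labeling cosets of~$\{1,\dots,n\}$ (restricted to equal-size cosets it is, via characteristic vectors over the~$\prec$-ordered ground set~$\sym(n)$, a lexicographic order, while cardinality takes precedence), the required order of the whole collection is read off from these relations. For a fixed pair I would first compute~$|\Theta_i|$ and~$|\Theta_j|$ by the Schreier-Sims algorithm; if they differ, the coset with the smaller group is the~$\prec$-smaller one and we stop. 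Otherwise I would test whether~$\tau_i\Theta_i=\tau_j\Theta_j$, which by the standard coset-containment test amounts to checking the membership~$\tau_i\in\tau_j\Theta_j$ together with the fact that each generator of~$\Theta_i$ lies in~$\Theta_j$, both polynomial time. If the cosets are equal we stop; otherwise they have equal cardinality and are distinct, hence neither is contained in the other, so both~$\tau_i\Theta_i\setminus\tau_j\Theta_j$ and~$\tau_j\Theta_j\setminus\tau_i\Theta_i$ are nonempty, and it remains to compute their~$\prec$-minima and compare these two permutations.

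\textbf{The main step: the minimum of a coset difference.} The non-routine part is to compute~$\min_{\prec}(A\setminus B)$ for two labeling cosets~$A,B$ of~$\{1,\dots,n\}$ with~$A\not\subseteq B$. I would build the sought permutation one coordinate at a time. Maintaining a partial assignment that fixes the images of~$1,\dots,m-1$, I keep the subcosets~$A'\leq A$ and~$B'\leq B$ consisting of all elements that respect this partial assignment (these are point-stabilizer cosets, or empty, and are computable in polynomial time), subject to the invariant~$A'\not\subseteq B'$. To choose the image~$v_m$ of~$m$, I iterate over the possible values of~$v_m$ under~$A'$ in increasing numerical order and take the first one such that, after also fixing~$m\mapsto v_m$, the refined~$A$-subcoset is still not contained in the refined~$B$-subcoset (by the polynomial-time containment test above, with the convention that~$\emptyset$ is contained in every coset); such a~$v_m$ exists exactly because~$A'\not\subseteq B'$. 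After all~$n$ coordinates the~$A$-subcoset is a single permutation, which by construction equals~$\min_{\prec}(A\setminus B)$. Each of the~$O(t^2)$ pairs is thus handled in time polynomial in~$n$, giving an overall running time polynomial in~$t$ and~$n$, as claimed. I expect this last step to be the main obstacle: a difference of two cosets is not itself a coset, so its~$\prec$-minimum cannot simply be read off; the construction goes through only because the question ``does the current partial assignment extend to an element of~$A$ lying outside~$B$?'' is decided by a \emph{coset} non-containment test on the restricted subcosets~$A'$ and~$B'$.
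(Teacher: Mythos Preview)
Your proposal is correct and matches the paper's intended derivation: the paper states the result as an immediate corollary of \cref{lem:can:sims} without further proof, and your argument is precisely the natural way to extract it --- Part~1 is the element~$\tau_1$ produced by that lemma, and Part~2 reuses the subroutine from its proof (find the $\prec$-smallest element of one coset outside another by greedily fixing images and testing coset non-containment) to compute~$\min_\prec(\tau_i\Theta_i\setminus\tau_j\Theta_j)$ for each pair. The coordinate-by-coordinate variant you describe is essentially the same as the paper's routine, which merely adds the harmless optimization of skipping coordinates on which the current $A$-subcoset is already constant.
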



\subsection{Graphs with coset-labeled hyperedges}

With the results establiched in the previous subsection we can now give a variant of Lemma \ref{lem:big_iso} suitable for canonization.
For this purpose we adapt the problem of coset-hypergraph isomorphism to the context of computing canonical labelings.
More precisely, we define a new combinatorial object, namely hypergraphs with coset-labeled
hyperedges, and canonical labelings for those objects.

A \emph{hypergraph with coset-labeled hyperedges} is a 4-tuple~$(V,\CS,\chi,\CF)$ consisting of
\begin{enumerate}
\item a vertex set $V$,
\item the hyperedges~$\CS\subseteq \Pow(V)$ which form a collection of subsets of~$V$, 
\item a coloring $\chi\colon \CS\rightarrow \NN$ of the hyperedges, and  
\item a collection~$\CF =\{\tau_S\Theta_S\leq\lab(S)\mid S\in \CS \}$ of
labeling cosets of the hyperedges (that is, for each~$S$ the collection contains a coset~$\tau_S\Theta_S$)
which satisfies~$\Theta_S = \Theta_{S'}$ for all~$S,S'\in \CS$
with~$\chi(S)= \chi(S')$).
\end{enumerate}

For a map~$\phi:V\to V'$ we
define~$(V,\CS,\chi,\CF)^\phi$ to be the tuple~$(V^\phi, 
\CS^\phi, \chi^\phi,\CF^\phi)$, where~$\CS^\phi = \{S^\phi\mid
S\in \CS\}$,~$\chi^\phi=\phi^{-1}\chi$ and~$\CF^\phi$ is the
collection~$\{\phi^{-1} \tau_S\Theta_S\}$.
The automorphism group of a
hypergraph with coset-labeled hyperedges $X$
with respect to a group $\Delta\leq\sym(V(X))$, written $\aut_\Delta(X)$,
are all permutations $\delta\in\Delta$ such that 
$X^\delta = X$.

Now, we adapt the canonization to our combinatorial object and 
define \emph{canonical labeling} for hypergraphs  with coset-labeled
hyperedges.
As before, we assume that $V(X)=\{1,\ldots,|V(X)|\}$.
A \emph{canonical labeling}
is a map
that assigns every
pair consisting of a
hypergraph with coset-labeled hyperedges~$X$
and a group $\Delta\leq\sym(|V|)$
a labeling coset $\can(X;\Delta)$ such
that
\begin{enumerate}[label=(CL\arabic*),align=left,leftmargin=1.25cm]
\item\label{hyper:cl1} $\can(X;\Delta)\subseteq \Delta$,
\item\label{hyper:cl2} $ \can(X;\Delta) = \delta \can(X^\delta;\Delta)
$ for all $\delta\in\Delta$, and
\item\label{hyper:cl3} $\can(X;\Delta) =\aut_\Delta(X) \pi $ for some (and thus for all)~$\pi\in \can(X;\Delta)$.
\end{enumerate}

As before, we extend the canonical labelings to arbitrary (not necessary
ordered) vertex sets.
For a labeling coset $\tau\Delta\leq\lab(V)$
we write $\can(X;\tau\Delta)$ for $\tau\can(X^\tau;\Delta)$.
This is well defined due to Property~\ref{hyper:cl2}. 

We are now ready to formulate the analogue of Lemma \ref{lem:big_iso} for the canonization algorithm.

\begin{lem}\label{theo:big}
There is an algorithm that computes a canonical labeling of a
pair~$(X;\tau\Delta)$ consisting of a hypergraph with coset-labeled
hyperedges~$X= (V,\CS,\chi,\CF)$ and a labeling coset $\tau\Delta\leq\lab(V)$
which has a running time
of~$(|V(X)||\CS|)^{\CO(\max\{\cw(\Delta),\cw(\CF)\})}$.
\end{lem}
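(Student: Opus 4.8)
The plan is to reduce the canonization of a hypergraph with coset-labeled hyperedges $X=(V,\CS,\chi,\CF)$ to two already-available canonization routines: the canonization of ordinary hypergraphs (\cref{theo:miller}) and the canonization of graphs with respect to a labeling coset (\cref{lem:luks}), together with the polynomial-time coset bookkeeping of \cref{lem:can:sims} and \cref{cor:order:labelings:of:natural:numbers}. Throughout, we work with $V=\{1,\dots,|V|\}$ and a group $\Delta\le\sym(V)$, extending to arbitrary vertex sets and labeling cosets $\tau\Delta$ at the end via $\can(X;\tau\Delta)=\tau\can(X^\tau;\Delta)$ exactly as the excerpt does for graphs and hypergraphs; this extension is free and incurs no extra cost.

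\textbf{Step 1 (separate the combinatorial and coset-theoretic content).} First I would canonize the \emph{underlying colored hypergraph} $(V,\CS,\chi)$ with respect to $\Delta$ using \cref{theo:miller}: this yields a labeling coset $\tau_0\Delta_0=\can((V,\CS,\chi);\Delta)$, computable in time $(|V||\CS|)^{\CO(\omega(\cw(\Delta)))}$, with $\Delta_0=\aut_\Delta(V,\CS,\chi)$. After applying $\tau_0$ we may assume $\Delta=\Delta_0$ stabilizes the colored hypergraph setwise; it remains to canonize the extra data $\CF$ relative to this group. The point of this first step is that the hyperedges and their colors are purely combinatorial, while $\CF$ assigns to each hyperedge $S$ a labeling coset $\tau_S\Theta_S\le\lab(S)$; the required invariance \ref{hyper:cl1}--\ref{hyper:cl3} must hold for the whole tuple, so the two pieces have to be canonized jointly, but canonizing the hypergraph first and then refining is the natural order.

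\textbf{Step 2 (encode each hyperedge's labeling coset as an ordered list of images, then as vertex colors / a graph).} The crucial idea: by \cref{lem:can:sims}, each labeling coset $\tau_S\Theta_S$ of $S$ (where $|S|$ is the ``active'' domain after restricting to $S$) has a \emph{canonical} generating sequence $(\tau_{S,1},\dots,\tau_{S,m_S})$ with $m_S\le|S|\log|S|$ depending only on $\tau_S\Theta_S$ itself. However we cannot directly record these as absolute permutations, since we have not yet fixed a global labeling of $V$; instead, for each hyperedge $S$ and each $i$, the sequence $\tau_{S,i}$ determines an ordered partition (or, more precisely, a linearly preordered structure) on $S$, and more usefully, the pair $(S, \tau_S\Theta_S)$ is determined up to the action of $\sym(S)$ by a structure internal to $S$. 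I would attach to $S$ a gadget: a bounded-depth rooted tree (or layered graph) on $S\uplus(\text{auxiliary nodes})$ that encodes, layer by layer, the orbit structure and the canonical generator sequence, in such a way that two pairs $(S,\tau_S\Theta_S)$ and $(S',\tau_{S'}\Theta_{S'})$ are ``equivalent'' (same $\Theta_S=\Theta_{S'}$, compatible $\tau$'s) iff their gadgets are isomorphic by a map respecting the inclusions $S\hookrightarrow V$, $S'\hookrightarrow V$. Since $\CF$ forces $\Theta_S=\Theta_{S'}$ whenever $\chi(S)=\chi(S')$, it suffices that the gadget depend only on $(\chi(S),\tau_S)$ in the right equivariant way; concretely, one recolors the hyperedge $S$ by a refinement $\chi'(S)$ of $\chi(S)$ that additionally records the isomorphism type of $\Theta_S$ and an ordering-compatible fingerprint of $\tau_S$, and one adds, for each $S$, edges from the hyperedge-vertex to the vertices of $S$ labeled by the position of each $v\in S$ in the canonical sequence of $\tau_S\Theta_S$.

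\textbf{Step 3 (one more hypergraph-canonization / graph-canonization call, then assemble).} After Step 2, $X$ together with the gadgets is an ordinary vertex-and-edge-colored graph $Y$ on $\CO(|V|+|\CS|\cdot\poly(|V|))$ vertices whose automorphisms (with respect to the residual group $\Delta$ from Step 1) are exactly $\aut_\Delta(X)$, by construction of the gadgets in Step 2. I then apply \cref{lem:luks} (or \cref{theo:miller} if it is more convenient to keep a hypergraph formulation) to compute $\can(Y;\Delta)=\aut_\Delta(X)\pi$ in time $|V(Y)|^{\CO(\omega(\cw(\Delta)))}$. Composing the labeling cosets from Steps 1 and 3 gives $\can(X;\Delta)$; verifying \ref{hyper:cl1}--\ref{hyper:cl3} is then routine from the characterizing properties of the constituent canonical-labeling routines (each of \cref{theo:miller}, \cref{lem:luks} satisfies the analogues \ref{luks:cl1}--\ref{luks:cl3}, and these compose). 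For the running time, $|V(Y)|\le|V|\cdot(|\CS|+1)\cdot\poly(|V|)$, and $\cw(\Delta)$ is unchanged by Step 1 while the gadgets do not introduce large composition factors; the dominant cost is $(|V(Y)|)^{\CO(\omega(\cw(\Delta)))}$, and since $\omega(n)\in\CO(n)$ and $\cw(\CF)$ only enters through the sizes/types of the $\Theta_S$ recorded in the coloring (never through a permutation-group computation more expensive than Schreier--Sims on a group of degree $\le|V|$), this is bounded by $(|V(X)||\CS|)^{\CO(\max\{\cw(\Delta),\cw(\CF)\})}$ as claimed. Finally, extend to labeling cosets $\tau\Delta$ and arbitrary $V$ by the standard device $\can(X;\tau\Delta):=\tau\can(X^\tau;\Delta)$.

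\textbf{The hard part} will be Step 2: designing the hyperedge gadget so that it is simultaneously (i) a faithful invariant of the pair $(S,\tau_S\Theta_S)$ up to the $\sym(S)$-action, (ii) equivariant, so that an isomorphism of the whole object $X$ must respect it, and (iii) cheap enough that it does not blow up the vertex count or the composition width beyond $\max\{\cw(\Delta),\cw(\CF)\}$. The subtlety is that a labeling \emph{coset} (not just a group) is attached to each hyperedge, so the gadget must encode a compatible \emph{choice} of labeling up to $\Theta_S$, and the compatibility constraint ``$\Theta_S=\Theta_{S'}$ when $\chi(S)=\chi(S')$'' must be respected — this is exactly why \cref{lem:can:sims}'s \emph{canonical} generating sequence (depending only on the coset) is the right tool, and why the refined coloring $\chi'$ rather than raw permutation data is what one feeds to the final canonization call.
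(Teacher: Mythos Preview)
Your Step~1 (first canonize the underlying colored hypergraph via \cref{theo:miller}, then refine) is correct and matches the paper. The gap is in Step~2.

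The canonical generating sequence of \cref{lem:can:sims} is defined for labeling cosets of the \emph{fixed ordered} set~$\{1,\dots,n\}$; the output depends on that order. If you apply it to~$\tau_S\Theta_S$ using the order that~$S$ inherits from~$V=\{1,\dots,|V|\}$, the resulting sequence is \emph{not} equivariant under~$\Delta$: for~$\delta\in\Delta$ the sequence attached to~$(S,\tau_S\Theta_S)$ is not carried by~$\delta$ to the sequence attached to~$(S^\delta,\delta^{-1}\tau_S\Theta_S)$, because~$\delta$ need not be monotone on~$S$. Hence any gadget built from these sequences fails~\ref{hyper:cl2}. You flag equivariance as the hard part, but the proposed mechanism does not deliver it. More generally, you are implicitly asking for a polynomial-size graph gadget on~$S\uplus A_S$ whose isomorphisms~$S\to S'$ are \emph{exactly} the coset~$\tau_S\Theta_S\tau_{S'}^{-1}$; nothing in the toolkit here produces such a gadget, and your running-time claim that~$\cw(\CF)$ enters only through a coloring (hence only~$\cw(\Delta)$ in the exponent) would be strictly stronger than the stated lemma --- a red flag that the encoding is losing information.

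The paper's proof avoids this by encoding~$\CF$ in the \emph{group}, not in the graph. After reducing (as you do) to~$\Delta$ stabilizing~$(\CS,\chi)$, it forms the disjoint union~$U=\bigcup_{S\in\CS}\{S\}\times S$ and defines a labeling coset~$\hat\Delta\le\lab(V\cup U)$ whose elements restrict to~$\Delta$ on~$V$, order the blocks~$U_S$ by the induced order of~$S^\delta$, and \emph{within each~$U_S$ are ordered according to some~$\rho_S\in\tau_S\Theta_S$}. Thus~$\hat\Delta$ literally contains the cosets~$\tau_S\Theta_S$ as its action on the blocks, so~$\cw(\hat\Delta)\le\max\{\cw(\Delta),\cw(\CF)\}$. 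One then canonizes the trivial bipartite graph~$Y$ on~$V\cup U$ with edges~$\{v,(S,v)\}$ via \cref{lem:luks} with respect to~$\hat\Delta$ and restricts to~$V$. The verification of~\ref{hyper:cl3} is the only real work: one shows that if~$\hat\delta,\hat\delta'\in\can(Y;\hat\Delta)$ then~$\hat\delta\hat\delta'^{-1}|_V$ preserves~$\CF$, using that in~$Y^{\hat\delta}$ each vertex of~$U_S^{\hat\delta}$ has a unique neighbour in~$S^{\hat\delta}$. This is the missing idea in your plan: put the coset data into the labeling coset you hand to \cref{lem:luks}, rather than trying to realise it combinatorially.
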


\begin{proof}
Since $\can(X;\tau\Delta)$
is $\tau\can(X^\tau;\Delta)$ by definition,
we need to give an algorithm for given
a pair~$(X; \Delta)$ with~$X=
(V,\CS,\chi,\CF)$ that computes $\can(X;\Delta)$.
If $\Delta$ does not preserve $\CS$ and $\chi$,
we proceed as follows.
Using \cref{theo:miller}
we compute a
canonical labeling coset~$\tau\Delta' = \can((V,\CS,\chi);\Delta)$
of the hypergraph so that~$\Delta'$ preserves~$\CS^\tau$ and
$\chi^\tau$.
We compute and return $\can(X;\tau\Delta')$ recursively
(Notice that $\can(X;\tau\Delta')$ is defined
as $\tau\can(X^\tau;\Delta')$).

So we can assume in the following that
$\Delta$ preserves $\CS$ and $\chi$, i.e,
$\CS^\delta=\CS$ and $\chi^\delta=\chi$ for all $\delta\in\Delta$.
Let~$U$ be the set of pairs~$\{(S,s)\mid S\in \CS,s\in S\}$.
We can think of~$U$ as the
disjoint union of all sets~$S\in \CS$. For notational convenience we
define~$U_S \coloneqq \{S\}\times S$.

Next we define a labeling~$\hat\Delta$ on the disjoint union~$V \cup U$.
Roughly speaking, we define a labeling on $V \cup U$ that orders
$V$ according to a $\delta\in\Delta$ and orders the sets $U_S$
according to the ordering of the sets $S^\delta$.
Inside the set $U_S$ the vertices are ordered according
to some $\rho_S\in\tau_S\Theta_S$.
More formally, we
define~$\hat\Delta$ to be the labeling of~$V\cup U$ consisting of the
labelings~$\hat\delta$ for which there are
$\rho_S\in\tau_S\Theta_S$.
\begin{enumerate}
\item $\hat\delta|_V=\delta$ for some element of~$\delta\in \Delta$,
\item $(S,s)^{\hat\delta}<(S',s')^{\hat\delta}$ if $S^\delta\prec S'^\delta$
(the ordering $\prec$ of subsets of~$\NN$ as defined in the preliminaries), and
\item $(S,s)^{\hat\delta}<(S,s')^{\hat\delta}$
if $s^{\rho_S}<s'^{\rho_S}$.
\end{enumerate}

The coset~$\hat\Delta$ is a labeling coset of the set~$V\cup U$ and has
composition-width at most $\max\{\cw(\Delta),\cw(\CS)\}$.

Consider the bipartite graph~$Y$ defined on~$V\cup U$ by connecting~$v$ and
$(S,v)$ for each~$S\in \CS$ for which~$v\in S$.
Return the canonical labeling coset of the graph $Y$
restricted to $V$, i.e., $\can(Y;\hat\Delta)|_V$.

\begin{canonicallabeling}
We argue that~$\can(Y;\hat\Delta)$ restricted to~$V$
is a canonical labeling coset of~$(X;\Delta)$.

\ref{hyper:cl1} By definition, all elements of~$\hat\Delta$ restricted to~$V$
are in~$\Delta$.

\ref{hyper:cl2}
We consider the result of applying the algorithm
to~$(X^\delta;\Delta)$ for any $\delta\in\Delta$.
In the case that $\Delta$ does not preserve
$\CS$ and $\chi$, the algorithm returns
$\delta^{-1}\tau\Delta'$.
So assume that $\Delta$ preserves $\CS$ and $\chi$
already,
In this case
the definition of $V\cup U$ remains unchanged since $\Delta$ preserve $\CS$.
In place of~$\hat\Delta$, the algorithm computes~$\hat\Delta^\delta$. In
Place of~$Y$ the algorithm computes~$Y^\delta$.
But then the result of the computation
is~$\can(Y^\delta;\hat\Delta^\delta)|_V = \delta^{-1}\can(Y;\hat\Delta )|_V$.

\ref{hyper:cl3} Suppose~$\hat\delta ,\hat\delta' \in \can(Y;\hat\Delta)$
and $\hat\delta|_V=\delta$ and $\hat\delta'|_V=\delta'$.
We need to
show that~$\delta\delta'^{-1}$ is an automorphism of~$X$. Since~$\delta
,\delta \in \Delta$,
it is clear that~$\delta\delta'^{-1}$ preserves~$\CS$ and~$\chi$.
It remains to show that~$\delta\delta'^{-1}$ is compatible with~$\CF$.
In other words, we need to show that~$\CF^{\delta} = \{\delta^{-1}
\tau_S\Theta_S\mid S\in \CS \}$ and~$\CF^{\delta'} = \{{\delta'}^{-1}
\tau_S\Theta_S\mid S\in \CS \}$ are actually the same set.

Choose~$S\in \CS$. Let~$S'$ be the element of~$\CS$ for
which~$S^\delta=S'^{\delta'}$. Such an element exists and is unique.
Since $\chi$ is preserved we have~$\Theta_{S}=\Theta_{S'}$ and
therefore it suffices to show that~$\delta^{-1} \tau_S\Theta_S =\delta'^{-1}
\tau_{S'} \Theta_{S}$.

Consider the graph~$Y^{\hat\delta} = Y^{\hat\delta'}$. Let~$\gamma_S$ be the
map from~$S^{\hat\delta}$ to~$(U_{S})^{\hat\delta}$ that maps
each~$v^\delta\in S^\delta$ to~$(S,v)^{\hat\delta}$.
Similarly let~$\gamma'_{S'}$ be the map from~$S'$ to~$(U_{S'})^{\hat\delta'} =
(U_{S})^{\hat\delta}$ that maps each~$v^{\delta'}\in S'^{\delta'}$
to~$(S',v)^{\hat\delta'}$.
In the graph~$Y^{\hat\delta}$, for~$s\in S^\delta$ we
have that~$s^{\gamma_S}$ is the only vertex in~$(U_{S})^{\hat\delta}$
that is a neighbor of~$s$.
Similarly, in~$Y^{\hat\delta'} =Y^{\hat\delta}$,
we have that~$s^{\gamma'_{S'}}$ is the only vertex
in~$(U_{S'})^{\hat\delta'}$ that is a neighbor of~$s$. 
It follows that~$\gamma_{S} =\gamma'_{S'}$.

By construction of~$\hat\Delta$ we know that~$\delta\gamma_S=\rho_S+m$
for some~$\rho_S\in\tau_S\Theta_S$ and some integer $m$.
Similarly~$\delta'\gamma'_{S'}=\rho'_{S'}+m'$ for some $\rho'_{S'}\in
\tau'_{S'}\Theta_{S'}$ and some integer $m$.
In fact, it must be the case that~$m=m'$ since~$(U_{S'})^{\hat\delta'} =
(U_{S})^{\hat\delta}$.
We conclude that
$\delta\delta'^{-1}=\delta\gamma_S\gamma'^{-1}_{S'}\delta'^{-1}
=\rho_S\rho'^{-1}_{S'}$.
Therefore $\delta\delta'^{-1}\tau'_{S'}\Theta_S=\tau_{S}\Theta_S$ as desired.
\end{canonicallabeling}

\begin{runningtime}
We use two subroutines throughout the algorithm.
The first one is Miller's canonization algorithm
for hypergraphs from \cref{theo:miller} which runs in time
$(|V||S|)^{\CO(\cw(\Delta))}$.
The second one is a canonization algorithm for
the graph $Y$ which runs in time
$(|V|+|V||\CS|)^{\CO(\max\{\cw(\Delta),\cw(\CS)\})}$.
\end{runningtime}
\end{proof}

We also need a second combinatorial subproblem that handles the case when~$V$ is
small but the family~$\CF$ is possibly large (this is analogue to multiple-colored-coset-isomorphism and Lemma \ref{lem:small_iso}).

A \emph{set with multiple colored labeling cosets} is a tuple~$(V,\mathcal{F},\chi)$ consisting of 
\begin{enumerate}
\item a vertex set $V$,
\item a collection~$\mathcal{F} =\{\tau_1\Theta_1, \tau_2\Theta_2,\ldots,\tau_t\Theta_t\}$ of labeling cosets of~$V$, and
\item a coloring~$\chi \colon \{1,\ldots,t\} \rightarrow \mathbb{N}$ assigning to every labeling coset in~$\mathcal{F}$ a positive integer.
\end{enumerate}

For a map~$\phi:V\to V'$ we
define~$(V,\CF, \chi)^\phi$ analogously to before. Similarly the
automorphism and canonical labelings of~$(V,\CF, \chi)$ are defined
analogously to above.

We are now ready to formulate
a lemma that is the analogue of
\cref{lem:big_iso}, but with a
weaker run time
since Babai's quasi-polynomial time result
does not lead to canonization yet.

\begin{lem}\label{theo:small}
There is an algorithm that computes a canonical labeling of a
pair~$(X;\tau\Delta)$ consisting of a set with multiple colored labeling
cosets~$X= (V,\mathcal{F},\chi)$ and a labeling coset $\tau\Delta\leq\lab(V)$
which has a running time of~$(|V|!|\CF|)^{\CO(1)}$.
\end{lem}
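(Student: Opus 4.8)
The plan is to compute the canonical labeling by a brute-force search over the group $\Delta$; this is affordable because the target running time already contains the factor $|V|!\geq|\Delta|$.

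First I would reduce to the ordered case: since $\can(X;\tau\Delta)$ equals $\tau\can(X^\tau;\Delta)$ by definition, and computing $X^\tau$ and composing $\tau$ back in costs only polynomial time, it suffices to handle a pair $(X;\Delta)$ with $V=\{1,\ldots,n\}$ and $\Delta\leq\sym(n)$ given by a generating set. Up to its immaterial indexing, such an object $X=(V,\CF,\chi)$ is nothing but the multiset of colored labeling cosets $\{(\chi(i),\tau_i\Theta_i)\mid i\in[t]\}$ of the ordered set $\{1,\ldots,n\}$, and for $\delta\in\Delta$ the object $X^\delta$ is the multiset $\{(\chi(i),\delta^{-1}\tau_i\Theta_i)\mid i\in[t]\}$, with the colors unchanged. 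By \cref{cor:order:labelings:of:natural:numbers} the labeling cosets of $\{1,\ldots,n\}$ carry a linear order $\prec$ computable in polynomial time from generating sets; combining it lexicographically with the natural order on the colors and sorting, one obtains a polynomial-time computable total order on objects $(V,\CF,\chi)$ with $V=\{1,\ldots,n\}$, under which two such objects coincide precisely when their sorted sequences of colored cosets agree. Crucially this order is intrinsic, depending only on the cosets as sets and not on the chosen generators or representatives.

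The algorithm then enumerates all elements of $\Delta$ (at most $n!$ many, obtainable in time $|\Delta|\cdot\poly{n}$ by breadth-first closure under the generators) and, for each $\delta\in\Delta$, computes the sorted sequence of colored cosets of $X^\delta$, which by \cref{cor:order:labelings:of:natural:numbers}(2) takes time $\poly{|\CF|,n}$. Let $X_0$ be the unique $\prec$-minimal object among $\{X^\delta\mid\delta\in\Delta\}$; the algorithm returns $\can(X;\Delta)\coloneqq\{\delta\in\Delta\mid X^\delta=X_0\}$, from which a representative and a generating set of its group part are read off in polynomial time, and then $\can(X;\tau\Delta)=\tau\can(X^\tau;\Delta)$, a labeling subcoset of $\tau\Delta$. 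The running time is $|\Delta|\cdot\poly{|\CF|,n}\leq n!\cdot\poly{|\CF|,n}\subseteq(|V|!\,|\CF|)^{\CO(1)}$.

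Finally I would check properties \ref{hyper:cl1}--\ref{hyper:cl3}. Property \ref{hyper:cl1} is immediate from the description of the output. For \ref{hyper:cl2}: since $X^{\delta\delta'}=(X^\delta)^{\delta'}$ and right multiplication by a fixed $\delta\in\Delta$ permutes $\Delta$, the set of minimizers of $X^\delta$ over $\Delta$ is exactly $\delta^{-1}$ times the set of minimizers of $X$, so $\can(X^\delta;\Delta)=\delta^{-1}\can(X;\Delta)$. For \ref{hyper:cl3}: if $\delta_0$ is a minimizer then $X^{a\delta_0}=(X^a)^{\delta_0}=X^{\delta_0}$ for every $a\in\aut_\Delta(X)$, so $\aut_\Delta(X)\delta_0$ consists of minimizers; conversely $X^{\delta_1}=X^{\delta_0}$ gives $X^{\delta_1\delta_0^{-1}}=X$, i.e.\ $\delta_1\delta_0^{-1}\in\aut_\Delta(X)$, so the set of minimizers equals the coset $\aut_\Delta(X)\delta_0$. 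I do not expect a genuine obstacle here: the only step that is not pure bookkeeping is the availability of the polynomial-time intrinsic order on labeling cosets from \cref{cor:order:labelings:of:natural:numbers}, which is exactly what makes the canonical form well defined and keeps the per-element comparison cheap; the remainder is the standard brute-force canonization scheme.
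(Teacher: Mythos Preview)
Your proposal is correct and follows essentially the same approach as the paper: reduce to the ordered case, iterate over all $\delta\in\Delta$, sort the colored cosets $\delta^{-1}\tau_i\Theta_i$ using the intrinsic order on labeling cosets from \cref{cor:order:labelings:of:natural:numbers} (refined by the color), and return the set of minimizers. The paper phrases the output as $\lcosetgen\delta_1,\ldots,\delta_s\rcosetgen$ rather than the raw set of minimizers, but your verification of \ref{hyper:cl3} shows this set is already a coset, so the two descriptions agree.
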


\begin{proof}
Since $\can(X;\tau\Delta)$
is $\tau\can(X^\tau;\Delta)$ by definition,
we need to give an algorithm that
computes a canonical labeling for a given group
$\Delta\leq\sym(\{1,\ldots,|V|\})$
and a given set with multiple colored labeling cosets
on a ground set satisfying $V=\{1,\ldots,|V|\}$.
Let~$\mathcal{F}= \{\tau_1\Theta_1,  \tau_2\Theta_2, \ldots, \tau_t\Theta_t\}$ be the family of labeling cosets.

For each labeling~$\delta\in \Delta$ we do the following.
Consider the sets of the form~$\delta^{-1} \tau_i\Theta_i$.
We order these sets so that~$\delta^{-1} \tau_i\Theta_i\prec_\chi \delta^{-1}
\tau_j\Theta_j$ if~$\chi(\tau_i\Theta_i) <  \chi(\tau_j\Theta_j)$ or
if~$\chi(\tau_i\Theta_i) =  \chi(\tau_j\Theta_j)$ and~$\delta^{-1}\tau_i\Theta_i
\prec\delta^{-1}\tau_j\Theta_j$, where~$\prec$ is the ordering of coset labelings
of~$\{1,\ldots,n\}$ defined in \Cref{sec:canoniztion:tools}. By
Corollary~\ref{cor:order:labelings:of:natural:numbers} this ordering can be
computed in polynomial time. For every~$\delta$ we get an ordered
sequence~$\delta^{-1} \tau_{i_1}\Theta_{i_1},
\ldots,\delta^{-1} \tau_{i_t}\Theta_{i_t}$.

Let~$\{\delta_1,\ldots,\delta_s\}$ be the collection of those~$\delta$ for which this ordered sequence is lexicographically minimal.
Then we let~$\lcosetgen \delta_1,\ldots,\delta_s\rcosetgen$ be the output of the algorithm.

It follows from the construction that~$\lcosetgen \delta_1,\ldots,\delta_s\rcosetgen$ satisfies the properties of a canonical labeling coset.

By Corollary~\ref{cor:order:labelings:of:natural:numbers}, for each~$\delta$ the time requirement of the algorithm is polynomial in~$|\mathcal{F}|$ and~$|V|$. There are at most~$|V|!$ many elements in~$\Delta$, and thus at most equally many choices for~$\delta$, giving us the desired time bound.
\end{proof}

\subsection{The canonization algorithm}\label{sec:canon:algo}

Finally, we have assembled all the tools to describe an algorithm that canonizes
graphs of bounded tree width.

\begin{theo}\label{theo:main-alg}
Let $k\in\NN$ and let $G$ be a connected graph.
There is an algorithm that
either correctly concludes that $\tw(G) > k$,
or computes a canonical labeling coset $\can(G;\lab(V))$
in time $2^{\CO(k^2\log k)}|V (G)|^{\CO(1)}$.
\end{theo}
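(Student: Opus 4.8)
The plan is to mirror the recursive isomorphism algorithm of \cref{theo:main-alg_iso}, replacing each isomorphism subroutine by its canonization counterpart from this section and replacing the step ``try every image of the fixed low-degree vertex'' by the standard Babai--Luks canonization-by-branching. We first compute $G^k$ via \cref{lem:imp} (reporting $\tw(G)>k$ if this fails) together with its canonical clique separator decomposition from \cref{theo:cliqueDecomposition}; since $G^k$ is an isomorphism-invariant function of $G$, this is at the same time an isomorphism-invariant decomposition of $G$ itself, and we canonize $G$ using it as a guide. The recursion at a node $t$ of this decomposition (refined, inside every clique-separator-free piece, by the labeled tree decomposition of \cref{theo:labelDecomposition}) receives the adhesion set $S$ to the parent together with a labeling coset $\Lambda\le\lab(S)$ already fixed by the caller, and returns $\can(G_t;\Lambda)$, the canonical labeling coset of the part $G_t$ of $G$ below $t$ that extends $\Lambda$; the top-level call uses $S=\emptyset$ and therefore outputs $\can(G;\lab(V))$. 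Whenever a labeling coset is passed to a child we first replace it by an isomorphism-invariant restriction to the child adhesion set, computed exactly as in the proof of \cref{theo:labelDecomposition} (canonize the hypergraph on the bag consisting of that single hyperedge), so that property \ref{hyper:cl2} is not disturbed.

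The heart of the algorithm is the treatment of one labeled bag $t$ with children $t_1,\dots,t_\ell$, whose parts $G_{t_i}$ have already been canonized. Restricting the canonical labeling cosets $\can(G_{t_i};\cdot)$ to the adhesion sets $S_i=\beta(t)\cap V(G_{t_i})$, coloring two of them alike precisely when $G_{t_i}$ and $G_{t_j}$ have the same canonical form, and adding the edges of $G[\beta(t)]$ and the parent adhesion set $S$ as further colored hyperedges (the edges carrying the trivial full labeling coset, the set $S$ carrying the fixed coset $\Lambda$), turns the local situation exactly into a hypergraph with coset-labeled hyperedges $X=(\beta(t),\CS,\chi,\CF)$; the compatibility requirement $\Theta_S=\Theta_{S'}$ for equally colored $S,S'$ holds because isomorphic graphs have, after restriction, the same automorphism part. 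When the child adhesion sets are pairwise distinct we compute $\can(X;\alpha(t))$ with \cref{theo:big}, exploiting the bounded-forward-degree, bounded-depth graph $\eta(t)=H_t$ of \cref{theo:labelDecomposition} (augmented, as in the proof of \cref{theo:main-alg_iso}, by \cref{lem:structure-cliques} to carry the clique structure of $G[\beta(t)]$, which is needed because adhesion sets are cliques of $G^k$ rather than of $G$); when the child adhesion sets are all equal we extract the family of $S_i$-restrictions and invoke \cref{theo:small} instead. Gluing the resulting per-bag canonical labeling cosets up the decomposition by means of property \ref{luks:cl2} produces $\can(G_t;\Lambda)$, and an induction shows that it lies in $\gammak$: the group parts $\alpha(t)$ are in $\gammak$ by \cref{theo:labelDecomposition}\ref{i:6}, the restricted children cosets are in $\gammak$ by the inductive hypothesis, and \cref{theo:big,theo:small} stay inside $\gammak$.

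At the top of a clique-separator-free piece $P$ there is no canonical vertex to start from, so we branch: for each vertex $v$ of degree at most $k$ in $P$ we run \cref{theo:labelDecomposition} on $(P,v)$, attach the already processed children to the highest admissible bags, re-equip the affected bags with clique vertices via \cref{lem:structure-cliques}, canonize bottom-up inside the resulting labeled tree as above, and record the canonical form of $G_t$ thus obtained. We then output the union of the labeling cosets over exactly those $v$ for which this canonical form is lexicographically smallest. This union again satisfies \ref{luks:cl1}--\ref{luks:cl3}: each $v$ individually yields a canonical labeling coset, the winning vertices form a single orbit of the stabilizer of $S$ in $\aut(G_t)$, and two winning vertices give labeling cosets differing by an element of that stabilizer, so the union equals $\aut(G_t)_{S}\,\pi$ for any winning labeling $\pi$.

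For the running time, by \cref{theo:cliqueDecomposition} and \cref{theo:labelDecomposition}\ref{i:5} only polynomially many pairs (tree node, fixed adhesion labeling) are processed by dynamic programming, the branching over low-degree start vertices costs a further factor $|V(G)|$, every bag has at most $\funclarge=2^{\CO(k\log k)}$ vertices together with an attached graph on at most $2^{\CO(k\log k)}$ vertices, and the composition width of every labeling coset occurring is $\CO(k)$; hence one call of \cref{theo:big} takes time $\bigl(2^{\CO(k\log k)}\bigr)^{\CO(k)}=2^{\CO(k^2\log k)}$, and in the ``all equal'' case the relevant bag has at most $\funcmedium=\CO(k^2)$ vertices, so \cref{theo:small} takes time $\bigl((\CO(k^2))!\bigr)^{\CO(1)}=2^{\CO(k^2\log k)}$ (the subcase where such a bag is a clique of $G^k$ is even cheaper). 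The genuine difficulty is not this bookkeeping but the correctness proof: one must check that the coset-labeled-hyperedge instance assembled at each bag is an isomorphism-invariant function of the data strictly below $t$, so that \ref{hyper:cl2} propagates, and that the three canonical-labeling axioms survive both the gluing along the two nested decompositions and the lexicographic-minimum branching over low-degree start vertices, in particular that the union of the winning cosets is genuinely a coset of the form $\aut(G_t)_{S}\,\pi$. That we canonize $G$ rather than $G^k$ creates no further trouble is precisely what \cref{lem:structure-cliques} is designed to ensure.
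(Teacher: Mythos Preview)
Your outline is essentially the paper's proof: clique-separator decomposition of $G^k$, labeled tree decomposition of each basic piece via \cref{theo:labelDecomposition}, branching over low-degree start vertices and keeping only those that yield the lexicographically smallest canonical form, and per-bag canonization via \cref{theo:big} or \cref{theo:small} using the $\gammak$ coset $\alpha(t)$.

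One point of confusion worth flagging: the graph $\eta(t)$ and its augmentation by \cref{lem:structure-cliques} play no computational role here. \cref{theo:big} takes only the hypergraph $X$ and a labeling coset and runs in time $(|V||\CS|)^{\CO(\cw)}$; the bound comes purely from $\cw(\alpha(t))\le k+1$ and $\cw(\tau_i\Theta_i|_{S_i})\le k+1$, with no graph structure needed. Replacing the structural graph $\eta$ by the coset $\alpha$ is precisely the change the paper singles out when passing from the isomorphism algorithm to the canonization algorithm, and \cref{lem:structure-cliques} is invoked only for the counting bound $2^k\funclarge$ on the number of clique adhesion sets attached to a bag (property~\ref{ii:4}), not to build any auxiliary graph. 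A second, purely organizational difference: the paper canonizes each child with respect to the full $\lab(V(G_i))$ and lets the parent combine the restricted results, rather than passing a fixed $\Lambda\le\lab(S)$ downward; your top-down phrasing would also work but the bottom-up version is what is actually written.
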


We remark that the algorithm is closely related to 
the one described in \cref{theo:main-alg_iso}.
In the isomorphism-algorithm
the graph $\eta(t)$ serves as a tool to exploit
the structure of each bag $t$.
As the isomorphism-techniques depend on Babai's quasi-polynomial
time result which does not extend to canonization yet, we will now use the
fact that additionally each bag can be guarded
with a labeling coset $\alpha(t)$ of bounded
composition-width. The
labeling coset $\alpha(t)$ serves a
tool for efficient
canonization.
The main change of the algorithm therefore arises in \ref{iii:6}.

\begin{proof}
We describe a dynamic programming algorithm $\CA$ with input $I=(G,S,T,\beta,\alpha)$ where
$(T,\beta)$ is a tree decomposition
of a graph
$G$ and $S\subseteq V(G)$ is a subset of
the vertices contained in the root bag, and $\alpha$ is a (partial)
function that maps nodes $t\in V(T)$ to 
labeling cosets $\alpha(t)\leq\lab(\beta(t))$.
The algorithm computes a
labeling coset $\CA(I)$ such
that
\begin{enumerate}[label=(CL\arabic*),leftmargin=1.5cm]
\item\label{algo:cl1} $\CA(I)\leq\lab(V(G))$ and $\CA(I)|_S\leq\lab(S)$, and
\item\label{algo:cl2} $\CA(I) = \tau
\CA(I^\tau)$ for all~$\tau\in\lab(V(G))$, and
\item\label{algo:cl3}  $ \aut(G,S,T,\beta,\alpha)\pi \subseteq
\CA(I)\subseteq\aut(G)\pi$ for some (and thus for all) $\pi\in\CA(I)$.
\end{enumerate}

In our recursive procedure we maintain the following properties of
the tree decomposition for each unlabeled bag $\beta(t)$
($\alpha(t)=\bot$).\\
\begin{minipage}[t]{0.45\textwidth}
\begin{enumerate}[label=(U\arabic*)]
\item\label{ee:1} The graph $G^k[\beta(t)]$ is clique-separator free, and
\end{enumerate}
\end{minipage}
\begin{minipage}[t]{0.5\textwidth}
\begin{enumerate}[label=(U\arabic*)]
\setcounter{enumi}{1}
\item\label{ee:2} Each adhesion set of $t$ and also $S$ are cliques in $G^k$.
\end{enumerate}
\end{minipage}\\[1.5ex]

For each labeled bag $\beta(t)$ we require the following.\\
\begin{minipage}[t]{0.45\textwidth}
\begin{enumerate}[label=(L\arabic*)]
\item\label{ii:2} The cardinality $|\beta(t)|$ is bounded by $\funclarge$,
\item\label{ii:4} The number of children of $t$ is bounded by
$k\funclarge^2+2^k\funclarge$
and $\alpha(t)\in\gammak$, and
\end{enumerate}
\end{minipage}
\begin{minipage}[t]{0.5\textwidth}
\begin{enumerate}[label=(L\arabic*)]
\setcounter{enumi}{2}
\item\label{ii:7} For each bag~$\beta(t)$ the adhesion sets of the children are
all equal to~$\beta(t)$ or the adhesion sets of the children are all distinct.
\item\label{ii:8} If the adhesion sets are all equal,
then the cardinality $|\beta(t)|$ is bounded by $\funcmedium$.
\end{enumerate}
\end{minipage}\\[1.5ex]

The initial input of $\CA$ is the
canonical clique separator decomposition $(T,\beta)$
of the $k$-improved graph $G^k$ from~\cref{theo:cliqueDecomposition}.
For $S$ we choose the empty set.
Since the clique separator decomposition is canonical we have
$\aut(G,S,T,\beta,\bot)=\aut(G)$ and therefore
the algorithm computes a canonical labeling coset of~$G$.\\
\underline{Description of $\CA$:}

Let $t_1,\ldots,t_{\ell}$ be the children of $t$ and
let $(T_i,\beta_i,\alpha_i)$ be the
decomposition of the subtree rooted at $t_i$.
Let $G_i$ be the graph corresponding to $(T_i,\beta_i,\alpha_i)$.
Let $V_i:=V(G_i)$,
let $S_i\coloneqq \beta(t)\cap V_i$ be the adhesion sets of the children, and
let $Z_i$ be $V_i\setminus S_i$.
We assume that the canonical labeling subcosets
$\tau_i\Theta_i\coloneqq\CA(G_i,S_i,T_i,\beta_i,\alpha_i)$ have already been
computed via dynamic programming.

We have to consider two cases depending on the root $t\in T$.
\begin{cs}

\case{$\beta(t)$ is unlabeled}
For each vertex $v\in\beta(t)$ of degree at most $k$
in $G^k[\beta(t)]$
we compute the canonical coset-labeled tree decomposition
$(T',\beta',\alpha')$ of $(G^k[\beta(t)],v)$ from
\cref{theo:labelDecomposition}.
Notice that the bags of $(T',\beta',\alpha')$
satisfy \ref{ii:2}-\ref{ii:8}, because in the case $\tw(G)\leq k$,
the $k$-improvement does not increase the tree width
as seen in \cref{lem:imp}(2).
Notice that the adhesion sets $S_i$ are cliques in $G^k$ and these cliques must
be completely contained in one bag.
We attach the children $(T_i,\beta_i,\alpha_i)$ to $(T',\beta',\alpha')$
adding them to the highest possible bag
and
we choose a new root as the highest possible node $r\in V(T')$ 
such that $S\subseteq \beta'(r)$.
By doing this, we obtain $(T'',\beta'',\alpha'')$.
We ensure Property \ref{ii:4} because
the number of children attached to one bag can be bounded by the number
of cliques in $G^k$ that contains the bag.
The number of cliques again can be bounded by $2^k\funclarge$
which is a consequence of \cref{lem:structure-cliques}.
By possibly introducing new bags we can preserve Property \ref{ii:7}:
The adhesion sets are either
pairwise different or all equal.
Since the adhesion sets $S_i$
are cliques in $G^k$ and therefore their
size is bounded by $k+1$, we also preserve
Property \ref{ii:8}.

For each $v$ we compute
$\Theta_v\coloneqq\CA(G,S,T'',\beta'',\alpha'')$ recursively.
For each~$v$ we compute the canonical form obtained by applying an arbitrary element of~$\Theta_v$ to the graph.
Let~$M$ be those~$v$ for which the obtained canonized graph is minimal with respect to lexicographic comparison of the adjacency matrix.
Then we return the smallest coset containing~$\bigcup_{v\in M} \Theta_v$.

\case{$\beta(t)$ is labeled}
We apply Luks' algorithm from \cref{lem:luks} to
get a subcoset that respects the edge relation and get
$\psi'\Delta'\coloneqq\can(G[\beta(t)];\alpha(t))$.
We apply Miller's algorithm to get a subcoset that respects $S$
and compute $\psi\Delta\coloneqq\can((\beta(t),\{S\});\psi'\Delta')$.

Compute a partial linear order on the graphs~$G_i$ by
comparing the resulting adjacency matrices of $G_i^{\tau_i}$ lexicographically
(Notice that the order does no depend on the representative $\tau_i$).
Define~$\chi\colon \{S_1,\ldots,S_\ell\} \rightarrow \{1,\ldots,h\}$ according to this order where~$h$ is the number of equivalence classes.
We refine $\chi$ such that $\chi(S_i)=\chi(S_j)$ implies
$\Theta_i=\Theta_j$.

If all adhesion sets are different,
we compute a canonical labeling coset $\Lambda$
of the set~$\beta(t)$ by
applying \cref{theo:big} to the
input $X=((\beta(t), \{S_i\}, \{\tau_i\Theta_i|_{S_i}\},\chi);\psi\Delta)$.
Otherwise, if all adhesion sets are equal, we
compute $\Lambda$ by applying the algorithm of
\cref{theo:small} to the
input~$X=((\beta(t), \{\tau_i\Theta_i|_{\beta(t)}\},\chi);\psi\Delta)$.
Notice that Property \ref{algo:cl1} allows us to restrict
$\tau_i\Theta_i$ to $S_i$.

We describe a canonical labeling coset of~$G$.

For each~$\lambda\in\Lambda$ we define extensions to~$V(G)$ as follows. 
For each~$i$ we define~$\tau'_{i}\Theta'_{i}$
to be the subcoset of those maps~$\rho_i\in
\tau_{i}\Theta_{i}$ for which~$\lambda^{-1}\rho_i$ is minimal.
(In fact, we have to restrict these maps to fit together
$(\lambda^{-1}|_{S_i^\lambda})(\rho_i|_{S_i})$).
In the case that all adhesions are
different, we
define~$\hat\Lambda_{\lambda}$ to be the set of labelings~$\hat\lambda$ of~$V(G)$
for which there are $\rho_1 \in
\tau'_{1}\Theta'_{1},\ldots,\rho_\ell\in \tau'_{\ell}\Theta'_{\ell}$
such that
\begin{enumerate}
  \item $\hat\lambda|_{\beta(t)}=\lambda$,
  \item $v_i^{\hat\lambda}<v_j^{\hat\lambda}$, if $v_i\in Z_i,v_j\in Z_j$ and
  $S_i^\lambda\prec S_j^\lambda$
  (the ordering $\prec$ of subsets of~$\NN$ as defined in the preliminaries)
  and,
  \item $v^{\hat\lambda}<u^{\hat\lambda}$, if $v,u\in Z_i$ and
  $v^{\rho_i}<u^{\rho_i}$.
\end{enumerate}
In the case that all adhesion sets are equal, we replace (2) by:
$v_i^{\hat\lambda}<v_j^{\hat\lambda}$, if $v_i\in Z_i,v_j\in Z_j$ and
$\lambda^{-1} \tau_i\Theta_i\prec_\chi \lambda^{-1} \tau_j\Theta_j$.
The former case implicitly implies $Z_i^{\hat\lambda}=Z^{\rho_i}+m$
for some $\rho_i\in\tau_i'\Theta_i'$ and some integer $m$.
However, in the latter case we additionally state this as an explicit condition. 

We define $\hat\Lambda$ as the smallest
coset containing~$\bigcup_{\lambda\in \Lambda}
\hat\Lambda_{\lambda}$.
Let $\mu\in\hat\Lambda$.
We define a map $\kappa:S^\mu\to\{1,\ldots,|S|\}$
such that $v^\kappa<u^\kappa$, if and only if
$v^\mu<u^\mu$. This is an isomorphism-invariant
renaming of the vertices such that  $S^{\mu\kappa}=\{1,\ldots,|S|\}$.
Return $\hat\Lambda\kappa$.
\end{cs}

\begin{canonicallabeling}
Property \ref{algo:cl1} is clearly satisfied.
We claim that for a partially coset-labeled decomposition $(T,\beta,\alpha)$
the output of the algorithm is a non-empty coset~$\CA(I)$ that satisfies
\ref{algo:cl2} and \ref{algo:cl3}.
The correctness of the first case is easy to see.
The returned set is non-empty because every graph of tree width
at most~$k$ is $k$-degenerate and therefore each subgraph has a
vertex of degree at most $k$.

We consider the second case.
 
\ref{algo:cl2} We consider the result of applying the algorithm
to~$G^\tau$ for any $\tau\in\lab(V(G))$.
So the initial input of the algorithm
is $(T,\beta)^\tau$ instead of $(T,\beta)$.
Let $\tau|_i:=\tau|_{V(G_i)}$.
The children of the root of~$T^\tau$ has the
decompositions~$(T_i,\beta_i,\alpha_i)^{\tau|_i}$ with the corresponding
graphs~$G_i^{\tau|_i}$ instead.
By induction we assume that \ref{algo:cl2} holds
for subgraphs of $G$.
Property \ref{algo:cl2} implies that even
for an arbitrary bijection $\tau|_i$ (which image is not necessary
a start segment of $\NN$)
it holds that $(\tau_i\Theta_i)^{\tau|_i}
=\CA(G_i^{\tau|_i},S_i^{\tau|_i},(T_i,\beta_i,\alpha)^{\tau|_i})$.
Let $\tau|_0:=\tau|_{\beta(t)}$.
We get $\alpha(t)^{\tau|_0}$ instead of $\alpha(t)$
and get $(\psi\Delta)^{\tau|_0}$ instead of $\psi\Delta$.
Exemplarily we
explain the case when all adhesion sets are different.
In this case we get $\tau|_0^{-1}\Lambda$ instead of $\Lambda$.
Since
$\lambda^{-1}\rho_i=(\tau|_0^{-1}\lambda)^{-1}\tau|_i^{-1}\rho_i$
we chose $\tau|_i^{-1}\rho_i$ as a minimal element instead
of $\rho_i$ and this leads to $\tau|_i^{-1}\tau_i'\Theta_i'$
instead of $\tau_i'\Theta_i'$.
Because of (CL2) of the canonical labelings
of a hypergraph with coset-labeled hyperedges
we get $\tau|_0^{-1}\Lambda$ instead of $\Lambda$.
Since $v^{\hat\lambda}=v^{\tau\tau^{-1}\hat\lambda}$
for all $v\in G$, we get $\tau^{-1}\hat\Lambda_\lambda$
and finally $\tau^{-1}\bigcup_{\lambda\in\Lambda}
\hat\Lambda_\lambda$.
Observe that the smallest coset containing
$\tau^{-1}\bigcup_{\lambda\in\Lambda}
\hat\Lambda_\lambda$ is actually
$\tau^{-1}\hat\Lambda$ which shows the correctness.

\ref{algo:cl3}
The first inclusion $\aut(G,S,(T,\beta,\alpha))\pi \subseteq \CA(I)$ follows
from \ref{algo:cl2}.
It remains to prove $\CA(I)\subseteq
\aut(G)\pi$.
Assume we have $\hat\lambda,\hat\lambda'\in\bigcup_{\lambda\in \Lambda}
\hat\Lambda_{\lambda}$ and $\hat\lambda|_{\beta(t)}=\lambda$ and
$\hat\lambda'|_{\beta(t)}=\lambda'$.
We have to show $\hat\lambda\hat\lambda'^{-1}$ is an automorphism of $G$.
Since both
$\lambda,\lambda'\in\Lambda\leq\can(G[\beta(t)];\alpha(t))$
we conclude that $\lambda\lambda'^{-1}$ preserves the edge relation
$E(\beta(t))$.
With the same argument $\lambda\lambda'^{-1}$ preserves $\CS$.
Exemplarily, we explain the case when all adhesion sets are different.
We have $I^{\hat\lambda}=I^{\hat\lambda'}$
and therefore $\lambda\lambda'^{-1}$
preserves $\{S_i\},\{\tau_i\Theta_i|_{S_i}\}$ and $\chi$.
So assume that $\lambda\lambda'^{-1}$
maps $S_i$ to $S_j$.
It remains to show that $\hat\lambda\hat\lambda'^{-1}$
is an isomorphism from $G_i$ to $G_j$.
By construction we have
$\hat\lambda|_{Z_i}=\rho_i+m$ and $\hat\lambda'|_{Z_j}=\rho'_j+m'$ for
$\rho_i\in\tau_i'\Theta_i',\rho_j\in\tau_j'\Theta_j'$.
In fact it must be the case that $m=m'$,
because the labeling of $\hat\lambda,\hat\lambda'$
is ordered according to the $S_i^\lambda=S_j^{\lambda'}$
as required in (2).
Since $m=m'$ we conclude
$\hat\lambda|_{Z_i}(\hat\lambda'|_{Z_j})^{-1}=\rho_i|_{Z_i}(\rho'_j|_{Z_j})^{-1}$.
Since
$\lambda\lambda'^{-1}$ preserves $\{\tau_i\Theta_i|_{S_i}\}$
we conclude
$\lambda^{-1}\tau_i\Theta_i|_{S_i}=\lambda'^{-1}\tau_j\Theta_j|_{S_j}$
and we get
$\lambda^{-1}\rho_i|_{S_i}=\lambda'^{-1}\rho_j|_{S_j}$ by the minimality of the
elements in $\tau_i'\Theta_i'$ and $\tau_j'\Theta_j'$.
This gives us
$\hat\lambda|_{S_i}(\hat\lambda'|_{S_j})^{-1}
=\lambda|_{S_i}(\lambda'|_{S_j})^{-1}=\rho_i|_{S_i}(\rho'_j|_{S_j})^{-1}$.
We already noticed that
$\hat\lambda|_{Z_i}(\hat\lambda'|_{Z_j})^{-1}=\rho_i|_{Z_i}(\rho'_j|_{Z_j})^{-1}$
which leads to
$\hat\lambda|_{V_i}(\hat\lambda'|_{V_j})^{-1}=\rho_i|_{V_i}(\rho'_j|_{V_j})^{-1}$.
Since $\chi$ is preserved, we know that
that $G_i^{\rho_i}=G_j^{\rho_j}$
and therefore $G_i^{\hat\lambda\hat\lambda'^{-1}}=G_j$
which means that $\hat\lambda\hat\lambda'^{-1}$
is an isomorphism.
\end{canonicallabeling}

\begin{runningtime}
Due to dynamic programming the number of calls is polynomial in $|V(G)|$.
More precisely, for each bag of the clique separator decomposition,
we construct a coset-labeled tree decomposition and both have polynomial
many bags.
In the first case the construction
of the coset-labeled tree decomposition from \cref{theo:labelDecomposition}
takes time $2^{\CO(k^2\log k)}|V(G)|^{\CO(1)}$.
Consider the Case ``$\beta(t)$ is labeled'' in which all adhesion sets are
different.
By Property \ref{ii:2} of
the bag size of $\beta(t)$ is bounded by
$\funclarge$ and
by Property \ref{ii:4} the number of children is bounded by
$k\funclarge^2+2^k\funclarge$.
By Property \ref{ii:4} we know that $\cw(\alpha(t))\leq k+1$ and also
$\cw(\tau_i\Theta_i|_{S_i})\leq k+1$ since $\tau_i\Theta_i|_{S_i}$ is a
subcoset of $\alpha(t_i)\kappa$.
Therefore the canonization algorithm for the hypergraph with coset-labeled hyperedges
in \cref{theo:big} runs in
time
${(\funclarge(k\funclarge^2+2^k\funclarge))}^{
\CO(\max\{\cw(\alpha(t)),\cw(\alpha(t_i))\})}
\subseteq 2^{\CO(k^2\log k)}$.
Next, consider the second case in which all adhesion sets are equal.
In this case, we have that the bag size of $\beta(t)$ is bounded by
$\funcmedium$ due to Property \ref{ii:7} of our coset-labeled tree decomposition from
\cref{theo:labelDecomposition}.
Therefore the
canonization algorithm for the multiple colored cosets
in \cref{theo:small} runs in
time
$\funcmedium! |V(G)|^{\CO(1)}\subseteq 2^{\CO(k^2\log k)}|V(G)|^{\CO(1)}$.
\end{runningtime}
\end{proof}

\bibliographystyle{alpha}
\bibliography{references}
\addcontentsline{toc}{chapter}{Bibliography}

\end{document}